\documentclass[11pt]{article}

\usepackage{amsmath,amssymb,amsthm,mathtools}
\usepackage[margin=1in]{geometry}
\usepackage{enumitem}
\usepackage{bm}
\usepackage{microtype}
\usepackage{graphicx}
\usepackage{xcolor}
\usepackage{booktabs}
\usepackage{threeparttable}
\usepackage{makecell}
\usepackage{multirow}
\usepackage{float}
\usepackage[ruled,vlined,linesnumbered,noend]{algorithm2e}
\SetKw{KwRet}{return}
\SetKw{KwBreak}{break}

\SetCommentSty{mycommfont}
\usepackage{tikz}
\usetikzlibrary{arrows.meta,positioning,calc}
\usepackage{natbib}
\usepackage{hyperref}
\usepackage[nameinlink,capitalize]{cleveref}

\theoremstyle{plain}
\newtheorem{theorem}{Theorem}[section]
\newtheorem{lemma}[theorem]{Lemma}

\theoremstyle{definition}

\newtheorem{assumption}[theorem]{Assumption}

\newcommand{\GFT}{\operatorname{GFT}}
\newcommand{\Rev}{\operatorname{Rev}}
\newcommand{\SB}{\operatorname{SB}}
\newcommand{\FB}{\operatorname{FB}}
\newcommand{\Var}{\operatorname{Var}}
\newcommand{\supp}{\operatorname{supp}}

\newcommand{\KL}{\operatorname{KL}}
\newcommand{\TV}{\operatorname{TV}}
\newcommand{\E}{\mathbb{E}}

\newcommand{\Prob}{\mathbb{P}}
\newcommand{\one}{\mathbf{1}}
\newcommand{\eps}{\varepsilon}
\newcommand{\Db}{\widehat{D}}

\newcommand{\Ext}{\operatorname{Ext}}

\newcommand{\Oe}{\widetilde{O}}

\newcommand{\xhdr}[1]{\vspace{2mm} \noindent{\bf #1}}

\usepackage{color-edits}
\addauthor[Zongqi]{zq}{blue}
\addauthor[Shengxin]{shengxin}{orange}
\addauthor[Qiaoyun]{qy}{purple}

\title{Sample Complexity of the Second-Best Bilateral Trade}
\author{
    Qiaoyun Shi\thanks{Harbin Institute of Technology, Shenzhen; Great Bay University, \href{mailto:qyunyun26.s@gmail.com}{\texttt{qyunyun26.s@gmail.com}}}
    \and
    Shengxin Liu\thanks{Harbin Institute of Technology, Shenzhen, \href{mailto:sxliu@hit.edu.cn}{\texttt{sxliu@hit.edu.cn}}}
    \and
    Zongqi Wan\thanks{Great Bay University, \href{mailto:zqwan@gbu.edu.cn}{\texttt{zqwan@gbu.edu.cn}}}
}
\date{}

\begin{document}
\setcounter{page}{0}
\setcounter{tocdepth}{1}

\maketitle

\begin{abstract}
We study the sample complexity of learning near-optimal bilateral trade mechanisms.  Unlike previous work on learning simple or fixed-price bilateral-trade mechanisms~\citep{BabaioffFreyNisan2024,CastiglioniLunghiMarchesi2026}, we focus on mechanisms satisfying Bayesian incentive compatible (BIC), interim individually rational (IIR), and ex-ante weakly budget balanced (WBB). In other words, our target is to design a sample-based mechanism that achieves the \emph{second-best} gains-from-trade benchmark.

We give matching or nearly matching upper and lower bounds in three regimes.
For regular product distributions on $[0,h]^2$, additive $\eps$-approximation has sample complexity $\widetilde{\Theta}(h^2/\eps^2)$. For multiplicative $(1-\alpha)$-approximation on distributions under the same assumptions, we find that the sample complexity is $\widetilde{\Theta}(\frac{h}{\SB(D)\cdot \alpha^2})$, which is benchmark-sensitive with unavoidable dependence on the second-best gains-from-trade $\SB(D)$.  We also investigate the setting of unbounded distributions with monotone-hazard-rate assumption.  The sample complexity depends on the ratio $\chi_\mu(D)=\mu(D)/\SB(D)$, where $\mu(D)$ is the sum of the buyer's expected value and seller's expected cost.
\end{abstract}


\newpage


\section{Introduction}\label{sec:main-intro}

Bilateral trade is perhaps the smallest market in which the basic tension of two-sided market design already appears.
In this model, a seller owns one indivisible item and has a private cost $s$ drawn from a distribution $D_S$, while a buyer has a private value $b$ drawn independently from $D_B$.  If the item is transferred from the seller to the buyer, the realized gains-from-trade (GFT) is $b-s$: this is the increase in total surplus created by moving the item to the agent who values it more.  If no trade occurs, the GFT is zero.  Thus the efficient, full-information rule trades exactly when $b>s$, and under the product distribution $D=D_B\times D_S$ its expected GFT is
$
\FB(D)=\E_{(b,s)\sim D}\bigl[(b-s)^+\bigr],
$
which is also called the first-best GFT.

However, private information prevents the designer from directly implementing the efficient rule.  The mechanism must incentivize the buyer and seller to report their value and cost truthfully, guarantee nonnegative interim utility, and avoid running an expected deficit when buying the good from the seller and selling it to the buyer. 
The seminal result of \cite{MyersonSatterthwaite1983} shows that any mechanism satisfying the above constraints, namely Bayesian incentive compatible (BIC),
interim individually rational (IIR), and ex-ante weakly budget balanced (ex-ante WBB), can never achieve the first-best GFT.
Instead, they proposed the \emph{second-best} GFT, denoted by $\SB(D)$, which is the largest expected GFT achievable by a BIC, IIR, and ex-ante WBB mechanism. This benchmark is central to the theory of bilateral trade and two-sided markets, and has motivated a large literature on simple mechanisms and approximation guarantees~\citep{BCWZ2017,DMSW2022,Fei2022,CaiWu2023,LiuRenWang2023}.

Most of the classical bilateral trade literature assumes that the designer knows the buyer and seller distributions. In many applications, such as online platforms, the prior is not given explicitly. The designer may instead observe past buyer values and seller costs, or have access to samples from the two sides of the market.  This leads to a learning question: how many samples are needed before the designer can choose a mechanism whose GFT is close to the relevant benchmark?

Recent work has begun to answer this question for simple mechanisms in small markets.  In particular, \citet{BabaioffFreyNisan2024} study learning simple mechanisms that maximize GFT from samples, and \citet{CastiglioniLunghiMarchesi2026} study sample complexity for fixed-price mechanisms.  These works show that GFT learning already has its own structure even for simple trading rules. 
This broader question is explicitly raised by \citet{BabaioffFreyNisan2024}, who ask about learning BIC/IIR/WBB mechanisms beyond fixed prices and simple strongly budget-balanced mechanisms.

We take up this question for bilateral trade.  Given independent samples from $D_B$ and $D_S$, the learner must output a BIC, IIR, and ex-ante WBB mechanism whose GFT is close to the second-best GFT under the unknown true distribution.  The central question is:
\begin{quote}
\emph{How many samples are needed to learn a nearly second-best bilateral-trade mechanism?}
\end{quote}

To answer this question, we give matching or nearly matching upper and lower bounds in three regimes: additive approximation on bounded-support distributions, multiplicative approximation on bounded-support distributions, and multiplicative approximation for unbounded monotone hazard rate (MHR) distributions.  
In particular, the multiplicative approximation is benchmark-sensitive: when the second-best GFT is small, no sample bound independent of $\SB(D)$ is possible.

\subsection{Our Results and Techniques}

Table~\ref{tab:main-results} summarizes the three regimes considered in this paper. Before introducing these results in detail, we highlight two conceptual and technical takeaways. 

\xhdr{Distribution-relative feasibility and mechanism mixing.}
The first takeaway is that feasibility is distribution-relative. A mechanism with nonnegative expected revenue under the empirical product distribution may run a deficit under the true distribution, and a mechanism feasible on one statistically indistinguishable instance may be infeasible on another. Our upper bounds therefore impose a positive empirical revenue margin large enough to absorb the sample-to-truth slack. To show that this margin preserves near-optimality, we use a posted-spread mechanism to construct a near-optimal comparator satisfying the margin. The learned mechanism then satisfies the empirical revenue margin, which is sufficient to remain WBB after transfer to the true distribution.

\xhdr{Benchmark-sensitive multiplicative approximation.}
The second takeaway concerns multiplicative approximation.  In the Myerson auction learning literature~\citep{DHP2016, GuoHuangZhang2019}, multiplicative approximation guarantees are stated without dependence on the realized optimal revenue.  However, such a scale-free guarantee is impossible in our bilateral trade setting. Our lower bounds exploit the same distribution-relative feasibility from the opposite direction: we first construct statistically close instances whose near-optimal feasible mechanism classes are separated by the ex-ante WBB constraint. Under multiplicative approximation, this separation can be diluted into a rare region without weakening it relative to the benchmark, while proportionally reducing the statistical information available from each sample. This is the source of the $h/\SB(D)$ dependence.

\begin{table}[t]
    \centering
    \caption{
        A summary of our sample-complexity bounds for learning second-best gains from trade, up to logarithmic factors.}
    \resizebox{\textwidth}{!}
    {\begin{tabular}{lccc}
\toprule
\textbf{Settings} & \textbf{Guarantee} & \textbf{Upper Bound} & \textbf{Lower Bound}
\\
\midrule
\textbf{\makecell[l]{Bounded support $[0,h]^2$, regular}} &
\makecell{$\SB(D)-\eps$\\$(\eps\le c\SB(D))$} &
\makecell{$\widetilde O\!\left(\frac{h^2}{\eps^2}\right)$\\{[}Thm.~\ref{thm:add-upper-main}{]}} &
\makecell{$\Omega\!\left(\frac{h^2}{\eps^2}\right)$\\{[}Thm.~\ref{thm:add-lower-main}{]}}
\\
\cmidrule(lr){1-4}
\textbf{\makecell[l]{Bounded support $[0,h]^2$, regular}} &
\makecell{$(1-\alpha)\SB(D)$} &
\makecell{$\widetilde O\!\left(
\frac{h}{\SB(D)}\frac{1}{\alpha^2}
\right)$\\{[}Thm.~\ref{thm:bounded-mult-upper-main}{]}} &
\makecell{$\Omega\!\left(
\frac{h}{\SB(D)}\frac{1}{\alpha^2}
\right)$\\{[}Thm.~\ref{thm:bounded-mult-lower-main}{]}}
\\
\cmidrule(lr){1-4}
\textbf{\makecell[l]{MHR marginals, seller regularity}} &
\makecell{$(1-\alpha)\SB(D)$} &
\makecell{$\widetilde O\!\left(
\frac{\mu(D)}{\SB(D)}\frac{1}{\alpha^2}
\right)$\\{[}Thm.~\ref{thm:mhr-upper-main}{]}} &
\makecell{$\widetilde\Omega\!\left(
\frac{\mu(D)}{\SB(D)}\frac{1}{\alpha^2}
\right)$\\{[}Thm.~\ref{thm:mhr-lower-main}{]}}
\\
\bottomrule
\end{tabular}
}
    \begin{tablenotes}
    \item \emph{\footnotesize
    \underline{Note}: Upper bounds hold with probability at least $1-\gamma$ and use independent samples from each marginal. Precise bounds are stated in the referenced theorems. Here $\mu(D)=\E[B]+\E[S]$.}
    \end{tablenotes}
\label{tab:main-results}
\end{table}

Next, we introduce our results.

\xhdr{Bounded support: additive approximation.} 
The upper bound follows a simple robust-learning principle: optimize over an empirical version of the canonical mechanism class with a sufficient revenue margin, and use a posted-spread mechanism to construct a near-optimal comparator satisfying that margin.  This gives a BIC, IIR, and ex-ante WBB mechanism with
$\GFT(M;D)\ge \SB(D)-\eps$
using
\[
\widetilde O\!\left(\frac{h^2}{\eps^2}\right)
\]
samples from each marginal.
The additive lower bound matches the leading $h^2/\eps^2$ scale.  We construct two statistically close regular instances whose feasible near-optimal mechanism classes are separated by the ex-ante WBB constraint: any mechanism that is WBB on one instance incurs an $\Omega(\eps)$ GFT loss on the other. Distinguishing which feasible class is appropriate therefore requires $\Omega(h^2/\eps^2)$ samples.

\xhdr{Bounded support: multiplicative approximation.}
For the multiplicative guarantee $\GFT(M;D)\ge(1-\alpha)\SB(D)$, substituting
$\eps=\alpha\SB(D)$ into the additive result would give a quadratic dependence on
$h/\SB(D)$.
We improve this to the nearly tight characterization
\[
\widetilde\Theta\!\left(
\frac{h}{\SB(D)}\frac1{\alpha^2}
\right).
\]
The algorithm differs from the additive learner by first estimating the scale of $\SB(D)$ and then learning only among mechanisms whose empirical accounting quantities live at that scale.  This localization turns the additive $h^2/\eps^2$-type dependence into the sharper $h/\SB(D)$ dependence.

The lower bound shows that this benchmark sensitivity is inherent to multiplicative approximation. Starting from the additive hard pair, we dilute the distinguishing part of the instance into a block that occurs with probability $p$. This scales both the second-best benchmark and the cross-instance WBB separation by the same factor $p$, so the separation remains unchanged relative to the benchmark, while the information revealed by each sample is reduced by a factor $p$. Since the resulting instances satisfy $\SB(D)=\Theta(ph)$, distinguishing the appropriate feasible mechanism incurs an additional $1/p$ sample cost, yielding the linear dependence on $h/\SB(D)$.

\xhdr{Unbounded MHR distributions: multiplicative approximation.}
Finally, we consider the multiplicative approximation without a bounded support. The main new issue is that there is no known scale $h$ on which to run the bounded-support learner. Under MHR marginals with seller regularity,\footnote{MHR implies buyer-side regularity, but does not imply seller-side regularity.} we use the samples to identify a sufficiently large effective window, learn on the induced bounded problem, and then extend the learned mechanism back to the original type space.
The resulting sample complexity depends on the mean-to-benchmark ratio
\[
\chi_\mu(D)=\frac{\E[B]+\E[S]}{\SB(D)}.
\]
The corresponding sample-complexity characterization is
\[
\widetilde\Theta\!\left(
\frac{\chi_\mu(D)}{\alpha^2}
\right).
\]
The lower bound reuses the same rare-block construction inside the upper tail of an MHR distribution. The rare block preserves the multiplicative WBB separation, while the surrounding exponential tail makes the mean-to-benchmark ratio scale as $\chi_\mu(D)$, yielding the matching dependence up to logarithmic factors.

\subsection{Related Work}

\xhdr{Bilateral trade and approximation algorithms.}
In bilateral trade with independent buyer and seller values, \citet{MyersonSatterthwaite1983} rule out first-best GFT under incentive compatibility, individual rationality, and budget balance, motivating Bayesian approximation of the second-best GFT benchmark, with early work by \citet{BlumrosenMizrahi2016}. \citet{BCWZ2017} develop a virtual-surplus and simple-mechanism framework, and \citet{Fei2022}, building on \citet{DMSW2022}, gives a delegated-pricing bound for the second-best benchmark. Fixed-price approximations form a closely related line of work~\citep{McAfee2008,KangPerniceVondrak2022,LiuRenWang2023,CaiWu2023}. Recent work further sharpens the constant-factor guarantees: \citet{LQRW2026} establish the sharp first-best/second-best ratio, while \citet{HartlineWang2025} give a simple geometric analysis of constant-factor GFT approximation. \citet{DuttingFuscoLazosLeonardiReiffenhauser2021} study limited-information two-sided mechanisms under different information models and benchmarks.  Correlated bilateral trade has been studied separately as a distinct extension of the independent-values model. \citet{DobzinskiShaulker2024} study welfare approximation under correlated values, while related interdependent-value models are considered by \citet{DobzinskiEdenGoldnerShaulkerTsilivis2025,KunimotoZhang2026}.

\xhdr{Learning of bilateral trade.}
The closest learning papers study sample-based GFT maximization and fixed-price learning in small two-sided markets~\citep{BabaioffFreyNisan2024,CastiglioniLunghiMarchesi2026}. \citet{DengMaoSivanWangWu2025} study sample-based first-best approximation in bilateral trade. Our target is instead the second-best benchmark under BIC, IIR, and ex-ante WBB constraints for the true distribution. \citet{DiGregorioFuscoLeonardiSchwiegelshohn2026} study private sample-based learning in bilateral trade.  A separate line studies online bilateral trade under regret objectives and different budget-balance and feedback models~\citep{BlumSandholmZinkevich2006, CesaBianchiEtAl2024a,CesaBianchiEtAl2024b,AzarFiatFusco2024, BernasconiCastiglioniCelliFusco2024,ChenJinLuZhang2025,Jin2026, LunghiCastiglioniMarchesi2026,GaucherBernasconiCastiglioniCelliPerchet2025}. For GFT maximization, much of this literature studies fixed-price mechanisms and related price-based benchmarks. \citet{BernasconiCastiglioniCelliFusco2024} consider global budget balance, \citet{ChenJinLuZhang2025} establish tight regret bounds for fixed-price bilateral trade, and \citet{LunghiPiccinatoCastiglioniMarchesi2026} study a stronger benchmark based on distributions over pairs of prices.  Feng, Ma, Peng, and Wan~\citep{FengMaPengWan2026} study online pricing in two-sided markets.\citet{DiGregorioDuttingFuscoSchwiegelshohn2025} study profit maximization over general DSIC and IR mechanisms. These works differ from our setting in the mechanism classes, objectives, or learning model considered.


\section{Preliminaries}\label{sec:main-model}
For a product distribution $D=D_B\times D_S$, let $M=(q,x_B,x_S)$ be a direct mechanism, where $q$ denotes the trade probability, $x_B$ and $x_S$ denote the buyer and seller payments, respectively. Define the GFT and revenue of $M$ as
\[
\GFT(M;D)=\E_D[q(b,s)(b-s)],\qquad
\Rev(M;D)=\E_D[x_B(b,s)-x_S(b,s)].
\]
A mechanism is feasible if it is BIC, IIR, and ex-ante WBB under $D$, where ex-ante WBB means $\Rev(M;D)\ge0$.  
Given the type distribution $D$, the second-best GFT is defined as 
\[
\SB(D)=\sup\{\GFT(M;D):M\text{ is feasible on }D\},
\]
and the first-best GFT is $\FB(D)=\E_{(b,s)\sim D}[(b-s)^+]$.
The first-best and second-best benchmarks are within a universal constant factor.
\begin{lemma}[\citet{Fei2022}]\label{lem:scale-main}
$
\SB(D)\le \FB(D)\le 3.15\SB(D).
$
\end{lemma}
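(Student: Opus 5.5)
The plan is to treat the two inequalities separately. The first is elementary; for the second I would not reprove Fei's theorem but only indicate its structure, since the lemma is cited from \citet{Fei2022}.

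For $\SB(D)\le\FB(D)$, take any feasible $M=(q,x_B,x_S)$. Since $0\le q(b,s)\le 1$, we have $q(b,s)(b-s)\le (b-s)^+$ pointwise. Taking expectations under $D$ gives $\GFT(M;D)\le\FB(D)$, and then taking the supremum over feasible $M$ gives the first inequality. BIC, IIR and WBB play no role here.

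For $\FB(D)\le 3.15\,\SB(D)$, I would follow the delegation-style argument of \citet{DMSW2022} as sharpened by \citet{Fei2022}, in four steps.
\begin{enumerate}[label=(\roman*)]
\item Fix a family of simple mechanisms that are feasible for every product distribution: seller-offering mechanisms (the seller posts the monopoly price $p(s)$ against $D_B$), buyer-offering mechanisms (the buyer posts the monopsony price against $D_S$), and fixed-price mechanisms. Each is dominant-strategy IC, ex-post IR and budget balanced, so each is feasible and its GFT is a lower bound on $\SB(D)$.
\item Decompose $\FB(D)=\E[(b-s)^+]$ according to whether the seller-side contribution or the buyer-side contribution dominates. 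Conditioning on $s$ and using the monopoly-pricing inequality for the buyer's conditional surplus, bound each piece by a constant times the GFT of one of the simple mechanisms.
\item Choose a randomization over these mechanisms. A convex combination of feasible mechanisms is still feasible, so its GFT also lower-bounds $\SB(D)$. Optimize the mixing weights to minimize the worst-case ratio.
\item The hard part is pushing the constant down to $3.15$ rather than the cruder constant $8.23$ of \citet{DMSW2022}. This step is an explicit minimax over worst-case distributions, namely equal-revenue-type buyer tails against point-mass sellers, followed by a numerical bound. Here I would simply invoke \citet{Fei2022}.
\end{enumerate}

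In the paper this lemma is used only to interchange $\SB(D)$ and $\FB(D)$ up to constants. For that purpose any universal constant suffices, for example the one from \citet{DMSW2022}. So the main obstacle, the precise value $3.15$, is inessential to the results that follow.
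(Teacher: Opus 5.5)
Your proposal is correct and takes the paper's route. The paper gives no proof of its own and simply cites \citet{Fei2022}; your pointwise bound $q(b,s)(b-s)\le(b-s)^+$ is the standard argument for the easy half, and deferring $\FB(D)\le 3.15\,\SB(D)$ to Fei matches what the paper does.

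Two harmless slips in your optional sketch. First, the seller- and buyer-offering mechanisms are only Bayesian IC on the offerer's side, not dominant-strategy IC; this still gives the BIC, IIR and WBB feasibility you actually use. Second, a point-mass seller cannot be a worst case for Fei's random-offerer mechanism, because the buyer-offering half then offers exactly the seller's cost and already attains first-best.
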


\xhdr{Regularity and monotone hazard rate.}
Let $F_B,F_S$ be the buyer and seller marginal cdfs, with densities $f_B,f_S$ on their support
interiors.  The buyer virtual value and seller virtual cost are
\[
\phi_B(b)=b-\frac{1-F_B(b)}{f_B(b)},\qquad
\phi_S(s)=s+\frac{F_S(s)}{f_S(s)}.
\]
A buyer distribution is regular if its virtual value is nondecreasing, and a seller distribution is regular if its virtual cost is nondecreasing. 
For a distribution $F$ with density $f$, the \emph{hazard rate} is $h_F(x)\coloneqq f(x)/(1-F(x))$.  
A distribution $F$ is \emph{monotone hazard rate (MHR)} if $h_F(x)$ is nondecreasing on the support interior~\citep{BMP1963}. 
The virtual values are useful because the Myerson envelope formula bounds revenue by virtual surplus.
\begin{lemma}[\citet{BCWZ2017}]\label{lem:vs-ineq-main}
For continuous marginals with positive densities, every BIC/IIR mechanism with allocation $q$
satisfies
\[
\Rev(M;D)\le \E_D[q(b,s)(\phi_B(b)-\phi_S(s))].
\]
\end{lemma}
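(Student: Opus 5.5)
The plan is the standard Myerson envelope argument, applied separately to each side and then combined through the ex-ante revenue identity. Fix a BIC/IIR mechanism $M=(q,x_B,x_S)$ and define the interim quantities
\[
Q_B(b)=\E_{s}[q(b,s)],\qquad X_B(b)=\E_s[x_B(b,s)],\qquad Q_S(s)=\E_b[q(b,s)],\qquad X_S(s)=\E_b[x_S(b,s)].
\]
The interim utilities are $U_B(b)=bQ_B(b)-X_B(b)$ and $U_S(s)=X_S(s)-sQ_S(s)$. BIC for the buyer is the statement that $U_B(b)=\max_{b'}\{bQ_B(b')-X_B(b')\}$, a supremum of affine functions of $b$. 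Hence $U_B$ is convex, $Q_B$ is nondecreasing, and the envelope formula $U_B(b)=U_B(\underline b)+\int_{\underline b}^b Q_B(t)\,dt$ holds. Symmetrically, $U_S$ is convex and nonincreasing with $Q_S$ nonincreasing, and $U_S(s)=U_S(\bar s)+\int_s^{\bar s}Q_S(t)\,dt$. IIR gives $U_B(\underline b)\ge0$ and $U_S(\bar s)\ge0$.

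Next, compute the revenue. Write $\Rev(M;D)=\E_b[X_B(b)]-\E_s[X_S(s)]$. Substituting $X_B=bQ_B-U_B$ and $X_S=U_S+sQ_S$ gives
\[
\Rev=\E_b[bQ_B(b)]-\E_s[sQ_S(s)]-\E_b[U_B(b)]-\E_s[U_S(s)].
\]
Plug in the envelope formulas and apply Fubini to each integral term:
\[
\E_b\Bigl[\int_{\underline b}^b Q_B\Bigr]=\E_b\Bigl[Q_B(b)\frac{1-F_B(b)}{f_B(b)}\Bigr],\qquad
\E_s\Bigl[\int_s^{\bar s} Q_S\Bigr]=\E_s\Bigl[Q_S(s)\frac{F_S(s)}{f_S(s)}\Bigr].
\]
Positive densities are what justify dividing by $f_B$ and $f_S$ here. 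Collecting terms,
\[
\Rev=\E_b[Q_B(b)\phi_B(b)]-\E_s[Q_S(s)\phi_S(s)]-U_B(\underline b)-U_S(\bar s).
\]
By independence of $b$ and $s$, the first two expectations equal $\E_D[q(b,s)(\phi_B(b)-\phi_S(s))]$. Dropping the two boundary utilities, which are nonnegative by IIR, yields the inequality.

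The main obstacle is technical: making the envelope and Fubini steps valid on possibly unbounded supports, since the paper later applies this to MHR distributions. I would handle this first on a bounded support, or truncate and take limits. The needed integrability follows from $Q_B,Q_S\in[0,1]$ together with finite means; these give $\E_b[(1-F_B)/f_B]=\E[b]-\underline b<\infty$, so the interchange is legitimate by Tonelli. The boundary values $U_B(\underline b)$ and $U_S(\bar s)$ should be read as infima or limits when an endpoint is infinite. For the seller side with $\bar s=\infty$, we need $\lim_{s\to\infty}U_S(s)\ge0$. This follows from IIR at every $s$, and since $U_S$ is monotone nonincreasing, the limit exists.
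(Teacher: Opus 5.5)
Your proof is correct. It is the standard Myerson--Satterthwaite envelope argument behind the cited result of \citet{BCWZ2017}; the paper itself states the lemma without reproving it. Your derivation in fact gives the exact identity, with $U_B(\underline b)+U_S(\bar s)\ge 0$ as the slack.

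One small correction to your integrability remark. On an unbounded seller support, $\E_s[F_S(s)/f_S(s)]=\int F_S=\infty$, so seller-side finiteness does not follow from finite means. It is nonetheless automatic: your envelope formula gives $\int_{\underline s}^{\infty}Q_S(t)\,dt\le U_S(\underline s)<\infty$, and Tonelli justifies the interchange.
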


\xhdr{Threshold mechanisms.}
A deterministic canonical threshold mechanism is specified by monotone threshold maps $\tau_B(s)$ and $\tau_S(b)$ and trades on $b\ge\tau_B(s)$, equivalently $s\le\tau_S(b)$, with threshold payments.  The boundary thresholds $+\infty$ and $-\infty$ encode the never-trade and always-trade conventions.  A \emph{randomized threshold mechanism} is a report-independent ex-ante randomization over finitely many deterministic canonical threshold mechanisms.  Such randomization preserves BIC and IIR, while GFT and revenue are affine.  Appendix~\ref{sec:prelim} fixes the boundary conventions used when thresholds meet atoms.
The next theorem shows that canonical threshold mechanisms suffice to attain the second-best GFT for bounded regular product distributions.
\begin{theorem}[\cite{MyersonSatterthwaite1983,BCWZ2017}]\label{fact:structure-main}
For every regular product distribution $D$ on $[0,h]^2$ with positive densities and $\SB(D)>0$, there
exists a second-best optimal feasible mechanism $M^*$ with $\Rev(M^*;D)\ge0$ whose allocation is
\[
q^*(b,s)=\one\{(b-s)+\lambda^*(\phi_B(b)-\phi_S(s))\ge0\}
\]
for some multiplier $\lambda^*\ge0$.
\end{theorem}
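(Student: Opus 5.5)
We will prove the theorem by Lagrangian duality over interim allocation rules, using Lemma~\ref{lem:vs-ineq-main} with equality as the characterization of revenue.

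\textbf{Step 1: reduction to allocations.} For continuous marginals with positive densities on $[0,h]$, the Myerson envelope argument gives BIC iff the interim allocations $Q_B(b)=\E_s[q(b,s)]$ and $Q_S(s)=\E_b[q(b,s)]$ are monotone (nondecreasing in $b$, nonincreasing in $s$). Given such monotonicity, the payments can be chosen so that IIR binds at the worst types $b=0$ and $s=h$. Then the inequality in Lemma~\ref{lem:vs-ineq-main} holds with equality:
\[
\Rev(M;D)=\E_D[q(\phi_B-\phi_S)].
\]
Hence
\[
\SB(D)=\sup\Bigl\{\E_D[q(b-s)]:\ q\in\mathcal{Q},\ \E_D[q(\phi_B-\phi_S)]\ge 0\Bigr\},
\]
where $\mathcal{Q}$ is the convex set of measurable $q:[0,h]^2\to[0,1]$ with monotone interim allocations.

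\textbf{Step 2: strong duality.} This is a linear program over a convex set, so we form the Lagrangian
\[
L(q,\lambda)=\E_D\bigl[q\bigl((b-s)+\lambda(\phi_B-\phi_S)\bigr)\bigr].
\]
The set $\mathcal{Q}$ is weak-$*$ compact in $L^\infty$, and both functionals are weak-$*$ continuous because $b-s$ and $\phi_B-\phi_S$ are integrable. Slater's condition holds: with $\SB(D)>0$ there are $b>s$ with positive density, and a posted price $p$ strictly inside the overlap gives a feasible mechanism with strictly positive revenue. If no such price existed, any positive-GFT mechanism would have to run a deficit, contradicting $\SB(D)>0$ via the fixed-price bound. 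So we get $\lambda^*\ge 0$ with
\[
\SB(D)=\max_{q\in\mathcal{Q}} L(q,\lambda^*),
\]
and complementary slackness. A maximizer $q^*$ of the primal exists by compactness.

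\textbf{Step 3: pointwise maximization and monotonicity.} Consider the integrand
\[
(b-s)+\lambda^*(\phi_B(b)-\phi_S(s)).
\]
By regularity it is nondecreasing in $b$ and nonincreasing in $s$. So the pointwise maximizer
\[
q^*(b,s)=\one\{(b-s)+\lambda^*(\phi_B(b)-\phi_S(s))\ge0\}
\]
is itself monotone in each coordinate, hence lies in $\mathcal{Q}$ and maximizes $L(\cdot,\lambda^*)$ over the unconstrained set of $[0,1]$-valued functions.

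It remains to check feasibility and optimality of this particular $q^*$. Ties occur only on a measure-zero set or on a set where the integrand vanishes, and there the choice is free. By complementary slackness either $\lambda^*=0$ or the revenue constraint binds. If $\lambda^*=0$, the efficient rule must already be feasible. If $\lambda^*>0$, we use that $\E[q(\phi_B-\phi_S)]$ varies monotonically with $\lambda$ along the threshold family. If the tie set has positive measure, we adjust on it, or perturb $\lambda$ by continuity, to obtain $\Rev=0$ exactly. We then attach threshold payments to make the mechanism BIC and IIR with $\Rev(M^*;D)\ge 0$. Its GFT equals $L(q^*,\lambda^*)=\SB(D)$.

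The main obstacle is Step 2. It needs a rigorous infinite-dimensional strong duality argument (compactness and Slater's condition) together with the tie-breaking needed to keep the revenue exactly nonnegative. I would handle the tie-breaking by showing that
\[
\lambda\mapsto\E\bigl[q_\lambda(\phi_B-\phi_S)\bigr]
\]
is nondecreasing and right-continuous, and then selecting the threshold accordingly.
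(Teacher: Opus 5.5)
Your route is the same as the paper's: reduce to allocations via the envelope formula, attach a multiplier to the revenue constraint, maximize the Lagrangian pointwise, and use regularity to get a monotone threshold rule. The paper simply cites the Myerson--Satterthwaite/BCWZ characterization for the existence of $\lambda^*$. The gap is in the step you defer: showing that the \emph{specific} rule $q^*=\one\{(b-s)+\lambda^*(\phi_B-\phi_S)\ge0\}$ is feasible and optimal. The missing idea is exactly the one the paper's proof relies on. For every $\lambda\ge0$, the maps $b\mapsto b+\lambda\phi_B(b)$ and $s\mapsto s+\lambda\phi_S(s)$ are \emph{strictly} increasing (identity plus a nondecreasing function). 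So for each $s$ at most one $b$ ties, and by Fubini the tie set is Lebesgue-null, hence $D$-null. Without this, nothing in your argument yields $\Rev(q^*)\ge0$. A tie with $\lambda^*>0$ forces $b\ge s$, since $b<s$ gives $\phi_B(b)-\phi_S(s)\le b-s<0$. Hence $\phi_B-\phi_S=-(b-s)/\lambda^*\le0$ on ties, so the inclusive rule $\one\{\cdot\ge0\}$ is the lowest-revenue maximizer of $L(\cdot,\lambda^*)$ and could run a deficit if ties carried mass. Your fallbacks do not repair this. Randomizing on a non-null tie set gives an allocation not of the stated form, and right-continuity of $\lambda\mapsto\E[q_\lambda(\phi_B-\phi_S)]$ alone does not produce an exact zero.

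Once ties are null, the argument closes, and your Step~2 becomes unnecessary. One option: the maximizer of $L(\cdot,\lambda^*)$ over $[0,1]$-valued functions is then a.e.\ unique, so the primal optimum from compactness coincides with $q^*$ a.e.\ and inherits $\Rev\ge0$ and value $\SB(D)$. A more elementary option uses $R(\lambda):=\E[q_\lambda(\phi_B-\phi_S)]$. It is nondecreasing (add the two Lagrangian optimality inequalities for $\lambda_1<\lambda_2$), continuous by dominated convergence, and tends to $\E[(\phi_B-\phi_S)^+]>0$ as $\lambda\to\infty$. So either $R(0)\ge0$ and $\lambda^*=0$, or the intermediate value theorem gives $R(\lambda^*)=0$. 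Then for every feasible $M$ with allocation $q$, Lemma~\ref{lem:vs-ineq-main} gives $\GFT(M;D)\le L(q,\lambda^*)\le L(q^*,\lambda^*)=\E_D[q^*(b-s)]$. Weak duality thus suffices, with no compactness or Slater argument. Separately, your Slater witness is wrong as stated: a single posted price is budget balanced and has zero revenue. Use a posted spread with prices $x<y$ in $(0,h)$, whose virtual surplus is $(y-x)F_S(x)(1-F_B(y))>0$ because the densities are positive. This is also what makes $\E[(\phi_B-\phi_S)^+]>0$.
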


\xhdr{Learning model.}
All learning guarantees below are with respect to independent marginal samples $b_1,\ldots,b_m\sim D_B$ and $s_1,\ldots,s_m\sim D_S$.  With probability at least $1-\gamma$, the output mechanism $\widehat M$ should be 
BIC, IIR, and ex-ante WBB under the true distribution $D$. Our target is
\[ \GFT(\widehat M;D)\ge \SB(D)-\eps \qquad\text{or}\qquad \GFT(\widehat M;D)\ge (1-\alpha)\SB(D), \]
in the additive or multiplicative regime, respectively.

\section{Additive Approximation with Bounded Support Distributions}\label{sec:main-additive}

\subsection{Upper Bound}\label{ssec:add-upper-main}

This subsection gives the bounded-support additive learner.  The algorithm is a plug-in empirical optimizer, but with one deliberate modification: instead of merely requiring nonnegative empirical revenue, it requires a positive empirical revenue margin.  The margin lets empirical revenue feasibility transfer back to the unknown distribution.

Let $D=D_B\times D_S$ be the unknown product distribution on $[0,h]^2$.  Write the independent marginal samples as $b_1,\ldots,b_m$ and $s_1,\ldots,s_m$, and set the empirical distributions to be
\[
\widehat D_B=\frac1m\sum_{r=1}^m\delta_{b_r},
\qquad
\widehat D_S=\frac1m\sum_{r=1}^m\delta_{s_r},
\qquad
\widehat D=\widehat D_B\times\widehat D_S,
\]
where $\delta_x$ denotes the point mass at $x$.
Define
\[ \Lambda_\eps:=1+\log\frac h\eps,\qquad \rho_{\rm add}:=c_{\rm add}\frac{\eps}{\Lambda_\eps},\qquad \delta_{\rm K}:=c_{\rm K}\frac{\eps}{h\Lambda_\eps}, \]
where $c_{\rm add}>0$ is a sufficiently small universal constant and $c_{\rm K}>0$ is chosen small relative to $c_{\rm add}$.

\xhdr{The empirical learner.}
The learner first forms the endpoint-augmented empirical grids.  After merging duplicate sample values, write
\[
\begin{aligned}
B&:=\mathcal B_{\rm emp}=\{0,h,b_1,\ldots,b_m\}=\{\beta_0<\cdots<\beta_K\},\\
S&:=\mathcal S_{\rm emp}=\{0,h,s_1,\ldots,s_m\}=\{\sigma_0<\cdots<\sigma_L\}.
\end{aligned}
\]
Let $\widehat p_i^B=\widehat D_B(\{\beta_i\})$ and $\widehat p_j^S=\widehat D_S(\{\sigma_j\})$.  

We optimize GFT over direct mechanisms via an LP defined on the empirical grid $B\times S$; its solutions are then extended to $[0,h]^2$ using threshold maps.  The LP variables are allocation probabilities
$
q_{ij}:=q(\beta_i,\sigma_j)\in[0,1],
 0\le i\le K,  0\le j\le L.
$ 
We require the allocation to be monotone in both arguments.  Equivalently, the LP optimizes over
\[
\mathcal Q_{B,S}:=
\{q\in[0,1]^{(K+1)(L+1)}: q_{i+1,j}\ge q_{ij},\ q_{i,j}\ge q_{i,j+1}\},
\]
where the first inequality ranges over $i<K$ and all $j$, and the second over all $i$ and $j<L$.  When $q$ is integral (i.e. $\{0,1\}$), these constraints give a deterministic canonical threshold allocation on the grid, consistent with the threshold maps $\tau_B,\tau_S$ from Section~\ref{sec:main-model}.

\begin{algorithm}[htbp]
\caption{Empirical additive learner}\label{alg:add-upper-main}
\KwIn{$h,\eps$, samples $b_1,\ldots,b_m$ and $s_1,\ldots,s_m$}
\KwOut{A randomized threshold mechanism on $[0,h]^2$, or no trade}
Build $\widehat D_B,\widehat D_S$, the grids $B,S$, and masses $\widehat p_i^B,\widehat p_j^S$\;
Form $\mathcal Q_{B,S}$ and encode $x_B(q),x_S(q)$, $\widehat{\GFT}(q)$, and $\widehat{\Rev}(q)$ as linear functions of $q$\;
Solve the empirical LP~\eqref{eq:additive-main-lp}\;
\If{the LP is infeasible}{
    \Return no trade\;
}
Let $q^\star$ be an optimal solution and write $q^\star=\sum_{r\le 2}\lambda_r q^r$ with integral $q^r\in\mathcal Q_{B,S}$\;
\ForEach{component $r$}{
    Extend $q^r$ to $\widetilde q^r(b,s)=q^r(\pi_B(b),\pi_S(s))$\;
    Equip $\widetilde q^r$ with its threshold payments\;
}
\Return the report-independent mixture choosing component $r$ with probability $\lambda_r$\;
\end{algorithm}

For a monotone allocation $q$, we use the following Myerson payments to keep the mechanism BIC and IIR on the grid:
\[
\begin{aligned}
x_B(\beta_i,\sigma_j)
&=\beta_iq_{ij}
-\sum_{t=0}^{i-1}(\beta_{t+1}-\beta_t)q_{tj},
\qquad
x_S(\beta_i,\sigma_j)
&=\sigma_jq_{ij}
+\sum_{t=j+1}^{L}(\sigma_t-\sigma_{t-1})q_{it}.
\end{aligned}
\]
This normalization gives zero utility to the lowest buyer type and the highest seller type.  Hence the monotone allocation $q$ uniquely determines the direct mechanism $M=(q,x_B,x_S)$ on the grid, and its feasibility is shown in Lemma~\ref{lem:fin-impl-main}.  The empirical objective and revenue are therefore linear functions of $q$:
\[
\widehat{\GFT}(q)=\sum_{i,j}\widehat p_i^B\widehat p_j^S(\beta_i-\sigma_j)q_{ij},\qquad \widehat{\Rev}(q)=\sum_{i,j}\widehat p_i^B\widehat p_j^S\bigl(x_B(\beta_i,\sigma_j)-x_S(\beta_i,\sigma_j)\bigr).
\]
The empirical program is the finite LP
\begin{equation}\label{eq:additive-main-lp}
\max_{q\in\mathcal Q_{B,S}}\ \widehat{\GFT}(q)
\qquad\text{s.t.}\qquad
\widehat{\Rev}(q)\ge \rho_{\rm add}.
\end{equation}
It has at most $O(m^2)$ allocation variables and $O(m^2)$ monotonicity constraints, and thus can be solved in polynomial time.  The LP solution is then converted into a mechanism on the original report space $[0,h]^2$ as follows.  By Lemma~\ref{lem:lp}, we can take an optimal solution $q^\star$ and write it as a convex combination $q^\star=\sum_{r\le2}\lambda_r q^r$ of at most two monotone integral grid allocations $q^r\in\{0,1\}^{B\times S}$.  For each component, define the stepwise allocation $\widetilde q^r(b,s):=q^r(\pi_B(b),\pi_S(s))$,
where $\pi_B(b):=\max\{\beta_i:\beta_i\le b\}$ and $\pi_S(s):=\min\{\sigma_j:\sigma_j\ge s\}$. 
Lemma~\ref{lem:grid-step-extension} gives threshold maps $\tau_B^r,\tau_S^r$ satisfying
\[
\one\{b\ge\tau_B^r(s)\}=\widetilde q^r(b,s)=\one\{s\le\tau_S^r(b)\}.
\]
The returned mechanism first draws $r$ with probability $\lambda_r$, independently of reports, and then runs this deterministic threshold mechanism on the reported $(b,s)$, with the corresponding threshold payments.  Thus the object returned by the algorithm is a randomized threshold mechanism on $[0,h]^2$; on empirical profiles its payments agree with the envelope payments above, so its empirical GFT and revenue are exactly the LP objective and constraint values.  The construction uses only the samples, and the analysis below shows that the resulting mechanism is feasible and near-optimal under the unknown distribution $D$.

\xhdr{Sample complexity analysis.}
We now analyze why the margin-constrained empirical LP is feasible and near-optimal on the high-probability sampling event.  This is an existence argument: we first show that the empirical LP contains a near-optimal feasible mechanism, and then use empirical optimality and transfer to obtain the guarantee for the learned mechanism.  Let $F_B,F_S$ be the marginal cdfs of $D_B,D_S$, and $\widehat F_B,\widehat F_S$ be the empirical marginal cdfs.  Define the marginal Kolmogorov distance by
\[
\Delta_{\rm K}(\widehat D,D):=
\max\{\|\widehat F_B-F_B\|_\infty,\|\widehat F_S-F_S\|_\infty\}.
\]
The following lemma transfers marginal Kolmogorov distance to GFT and revenue difference.

\begin{lemma}\label{lem:add-transfer-main}
If $\Delta_{\rm K}(\widehat D,D)\le\delta$, then every randomized canonical threshold mechanism $M$ satisfies
\[ |\GFT(M;D)-\GFT(M;\widehat D)| + |\Rev(M;D)-\Rev(M;\widehat D)| =O(h\delta). \]
\end{lemma}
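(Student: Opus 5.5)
By linearity, it suffices to treat a single deterministic canonical threshold mechanism. A randomized threshold mechanism is a report-independent mixture of deterministic ones, so both GFT and revenue are the corresponding convex combinations under $D$ and under $\widehat D$. The bound for a mixture then follows from the bound for each component. The main tool is to write each quantity as an iterated integral over the two marginals and apply integration by parts, so that only cdf differences and the total variation of bounded monotone integrands appear.

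For GFT, fix a deterministic threshold mechanism with allocation $\one\{b\ge\tau_B(s)\}$ and apply a hybrid argument across the two marginals:
\[
\GFT(M;D)-\GFT(M;\widehat D)=\bigl[\GFT(M;D_B\times D_S)-\GFT(M;\widehat D_B\times D_S)\bigr]+\bigl[\GFT(M;\widehat D_B\times D_S)-\GFT(M;\widehat D_B\times \widehat D_S)\bigr].
\]
In the first bracket, fix $s$ and set $g_s(b)=(b-s)\one\{b\ge\tau_B(s)\}$. This function is a step followed by an increasing linear piece on $[0,h]$, so its total variation is at most $2h$; there is one jump of size at most $h$ and a slope-$1$ segment. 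Integration by parts gives $|\int g_s\,d(F_B-\widehat F_B)|\le \mathrm{TV}(g_s)\,\|F_B-\widehat F_B\|_\infty\le 2h\delta$. Integrating over $s\sim D_S$ bounds the first bracket by $2h\delta$. The second bracket is handled symmetrically with $b\sim \widehat D_B$ fixed, using the threshold $\tau_S(b)$. Some care is needed at atoms of $\widehat D$, which means applying the boundary conventions from Appendix~\ref{sec:prelim}. Since the integrand is right- or left-continuous consistently with the threshold convention, the Stieltjes integration by parts remains valid.

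Revenue is handled in the same way after expressing it through threshold payments. The buyer pays $\tau_B(s)\one\{b\ge\tau_B(s)\}$ and the seller receives $\tau_S(b)\one\{s\le\tau_S(b)\}$, so
\[
\Rev(M;D)=\E_{s}\bigl[\tau_B(s)(1-F_B(\tau_B(s)^-))\bigr]-\E_{b}\bigl[\tau_S(b)F_S(\tau_S(b))\bigr].
\]
Thresholds equal to $\pm\infty$ correspond to no trade and zero payment, and otherwise the thresholds are clipped to $[0,h]$. For the buyer-payment term, hybridize again. Changing $D_B$ to $\widehat D_B$ with $s$ fixed changes $\tau_B(s)(1-F_B(\tau_B(s)))$ by at most $h\delta$. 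Changing the outer distribution from $D_S$ to $\widehat D_S$ requires comparing the integral of $u(s):=\tau_B(s)(1-\widehat F_B(\tau_B(s)))$ under $F_S$ and $\widehat F_S$. Because $\tau_B$ is monotone (nondecreasing in $s$), $u$ is a product of a monotone function bounded by $h$ and a monotone function bounded by $1$. Its total variation is therefore at most $2h$, and integration by parts again gives $O(h\delta)$. The seller term is symmetric. Adding the four hybrid steps for revenue and the two for GFT yields the stated $O(h\delta)$ bound.

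The main obstacle is bounding the total variation of the outer integrands, which must be done without any regularity of $D$. The argument relies entirely on the monotonicity of the threshold maps $\tau_B,\tau_S$ and on boundedness in $[0,h]$, and this is what makes the constant universal. A secondary technical step is to verify the left- and right-limit conventions at atoms of the empirical distribution. Off by one conventions there contribute only jump terms bounded by $h\cdot\|F-\widehat F\|_\infty$, so they do not affect the order of the bound.
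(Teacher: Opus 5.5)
Your proof is correct and follows essentially the same route as the paper: reduce to deterministic components, change one marginal at a time, and bound each change by the total variation of a one-dimensional integrand times $\delta$; the revenue terms are handled exactly as in the paper, via the product of a monotone $[0,h]$-valued threshold payment and a monotone $[0,1]$-valued trade probability. The only difference is cosmetic: for GFT you change one marginal at a time on the full surplus integrand $(b-s)q(b,s)$, whose one-dimensional slices have variation at most $2h$, whereas the paper splits GFT into $\E[Bq]$ and $\E[Sq]$, so you get $4h\delta$ in place of the paper's $6h\delta$.
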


To compare the true second-best mechanism with the empirical LP, we use the following projection notation.  For a randomized canonical threshold mechanism $M$, let $\Pi_{\widehat D}M$ denote its empirical-grid projection: restrict each deterministic component's allocation to the grid $B\times S$ and implement the resulting finite-grid allocation using the Myerson payments (Lemma~\ref{lem:fin-impl-main}) to ensure BIC/IIR.

\begin{lemma}\label{lem:projection}
Let $M$ be a deterministic canonical threshold mechanism on $[0,h]^2$, and let $M_{\rm emp}:=\Pi_{\widehat D}M$.  Then
\[
\GFT(M_{\rm emp};\widehat D)=\GFT(M;\widehat D),
\qquad
\Rev(M_{\rm emp};\widehat D)\ge\Rev(M;\widehat D).
\]
The same holds for randomized threshold mechanisms.
\end{lemma}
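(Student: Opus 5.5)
The empirical distribution $\widehat D$ is supported on the grid $B\times S$, so everything reduces to a comparison on the grid. The allocation part is immediate: $\GFT(\cdot;\widehat D)$ depends only on the allocation restricted to $\supp(\widehat D)\subseteq B\times S$, and $M_{\rm emp}$ uses exactly the restriction of $q$. Hence $\GFT(M_{\rm emp};\widehat D)=\GFT(M;\widehat D)$. The first check is that the restricted allocation is monotone on the grid, i.e.\ lies in $\mathcal Q_{B,S}$. This follows because $M$ is a threshold mechanism: $q(b,s)=\one\{b\ge\tau_B(s)\}$ is nondecreasing in $b$ and nonincreasing in $s$. We also need the boundary conventions of Appendix~\ref{sec:prelim} to be consistent at atoms. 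So $M_{\rm emp}$ is well defined and BIC/IIR by Lemma~\ref{lem:fin-impl-main}.

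For revenue, write $\Rev(M;\widehat D)=\sum_{i,j}\widehat p^B_i\widehat p^S_j\,(x_B-x_S)(\beta_i,\sigma_j)$, where the threshold payments are $x_B(b,s)=\tau_B(s)\,q(b,s)$ and $x_S(b,s)=\tau_S(b)\,q(b,s)$. I will compare buyer and seller payments separately, pointwise on the grid. The claim is
\[
x^{\rm emp}_B(\beta_i,\sigma_j)\ge \tau_B(\sigma_j)q_{ij},\qquad x^{\rm emp}_S(\beta_i,\sigma_j)\le \tau_S(\beta_i)q_{ij}.
\]
For the buyer, fix $j$. If $q_{ij}=0$, both sides are zero. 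Otherwise let $i^*$ be the smallest index with $q_{i^*j}=1$. Then the Myerson formula gives $x^{\rm emp}_B(\beta_i,\sigma_j)=\beta_i-\sum_{t=i^*}^{i-1}(\beta_{t+1}-\beta_t)=\beta_{i^*}$. Since $\beta_{i^*}$ trades, $\beta_{i^*}\ge\tau_B(\sigma_j)$. Symmetrically, for the seller, $x^{\rm emp}_S(\beta_i,\sigma_j)=\sigma_{j^*}$, where $\sigma_{j^*}$ is the largest trading grid cost. Since it trades, $\sigma_{j^*}\le\tau_S(\beta_i)$. Here the endpoint augmentation $0,h\in B,S$ ensures these indices exist within the grid. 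Summing the pointwise inequalities against $\widehat p^B_i\widehat p^S_j\ge0$ gives $\Rev(M_{\rm emp};\widehat D)\ge\Rev(M;\widehat D)$. Intuitively, the grid payment is the lowest grid type that still trades, which can only exceed the continuum threshold.

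For randomized threshold mechanisms, the projection acts componentwise, and GFT and revenue are affine in the mixture weights. The equality and the inequality therefore pass to the mixture.

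The main obstacle I expect is the boundary and atom conventions. These cover the case where $\tau_B(\sigma_j)$ equals a grid point, where the threshold is $\pm\infty$ (never-trade or always-trade), and the requirement that the "threshold payments" of $M$ are exactly $\tau_B(s)$ and $\tau_S(b)$ under the appendix's conventions. I would handle these by checking that the inequalities $\beta_{i^*}\ge\tau_B(\sigma_j)$ and $\sigma_{j^*}\le\tau_S(\beta_i)$ remain valid under whichever tie-breaking the appendix fixes. For always-trade, $\tau_B=-\infty$ is replaced by the payment convention (presumably $0$), and I need $\beta_0=0\ge$ that payment. The analogous check for the seller uses $\sigma_L=h$.
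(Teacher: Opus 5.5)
Your proposal is correct and follows essentially the same argument as the paper. The GFT is unchanged because $\widehat D$ lives on the grid, where the two allocations agree. Revenue weakly increases because the grid Myerson payments equal the smallest traded buyer grid point, which is at least $\tau_B(\sigma_j)$, and the largest traded seller grid point, which is at most $\tau_S(\beta_i)$; the randomized case follows componentwise by linearity. Your explicit telescoping of the envelope formulas and your check that the restricted allocation lies in $\mathcal Q_{B,S}$ simply spell out what the paper leaves implicit. One small correction: the endpoint augmentation is not needed for $i^*$ and $j^*$ to exist, since the traded profile itself witnesses them.
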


On the event $\Delta_{\rm K}(\widehat D,D)\le\delta_{\rm K}$, a canonical optimizer $M^*$ for $D$ need not satisfy the empirical revenue margin after projection.  Recall that, by Theorem~\ref{fact:structure-main}, there is a second-best optimal mechanism $M^*$ with $\Rev(M^*;D)\ge 0 $ whose allocation is a canonical threshold rule.  For the empirical-grid projection of this optimum, $M^*_{\rm emp}:=\Pi_{\widehat D}M^*$,
\[
\GFT(M^*_{\rm emp};\widehat D) = \GFT(M^*;\widehat D) \ge \SB(D)-O(h\delta_{\rm K}),
\qquad
\Rev(M^*_{\rm emp};\widehat D)\ge \Rev(M^*;\widehat D) \ge -O(h\delta_{\rm K}).
\]
Thus $M^*_{\rm emp}$ is already near-optimal in empirical GFT, but it need not satisfy the positive revenue-margin constraint.  The following posted-spread mechanism supplies the missing revenue slack.
\begin{lemma}\label{lem:spread-main}
There is a universal constant $c_{\rm spr}>0$ such that every product distribution $D=D_B\times D_S$ supported on $[0,h]^2$ with $\FB(D)>0$ admits a posted-spread mechanism $P$ satisfying 
\[ 
\Rev(P;D)\ge c_{\rm spr}\frac{\FB(D)}{1+\log(h/\FB(D))} \ge c_{\rm spr}\frac{\SB(D)}{1+\log(h/\SB(D))},\quad \GFT(P;D)\ge \Rev(P;D)\ge 0. 
\]
In particular, if $\eps\le \SB(D)$, then $\Rev(P;D)\ge c_{\rm spr}\frac{\SB(D)}{\Lambda_\eps}. $
\end{lemma}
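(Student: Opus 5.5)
The posted spread will be the mechanism $P_{r,p}$ with seller price $r$ and buyer price $p>r$: it trades iff $s\le r$ and $b\ge p$, the buyer pays $p$ and the seller receives $r$. It is dominant-strategy IC and ex-post IR, and its revenue is $\Rev(P_{r,p};D)=(p-r)\,u(r)\,v(p)$, where $u(x):=F_S(x)=\Prob[s\le x]$ and $v(y):=\Prob[b\ge y]$. Whenever it trades, $b-s\ge p-r$, so $\GFT(P_{r,p};D)\ge (p-r)\Prob[s\le r,b\ge p]=\Rev(P_{r,p};D)\ge 0$, which settles the GFT clause at once. For the revenue bound I will work from the layer-cake identity
\[
\FB(D)=\E[(b-s)^+]=\int_0^h \Prob[s\le t<b]\,dt=\int_0^h u(t)\,v(t)\,dt,
\]
up to the $<$ versus $\le$ distinction, which only changes values on a null set or at atoms and can be handled by the boundary conventions.

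Since $u$ is nondecreasing and $v$ is nonincreasing, I choose a crossing point $a\in[0,h]$ with $v(t)\le u(a)$ for $t\ge a$ and $u(t)\le v(a)$ for $t\le a$; for example, $a=\inf\{t:u(t)\ge v(t)\}$, with care at jumps. At least one of $\int_a^h uv$ and $\int_0^a uv$ is at least $\FB/2$. In the first case, $\int_a^h v(t)\,dt\ge\FB/2$ and $v\le u(a)$ on $[a,h]$. Fix the seller price at $r=a$ and set $R':=\sup_{y>a}(y-a)v(y)$. Then $v(y)\le\min\{u(a),R'/(y-a)\}$, so integrating gives
\[
\tfrac12\FB\le \int_a^h v\le R'\Bigl(1+\log\tfrac{h\,u(a)}{R'}\Bigr).
\]
The best spread $(a,y)$ has revenue $\mathrm{Rev}=u(a)R'$, and writing $\rho=R'/u(a)$ turns this into $\mathrm{Rev}\,(1+\log(h/\mathrm{Rev}))\ge \FB/2$. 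The second case is symmetric: fix the buyer price at $p=a$, vary the seller price below $a$, and use $u\le v(a)$ on $[0,a]$.

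The final step is inverting $\mathrm{Rev}(1+\log(h/\mathrm{Rev}))\ge\FB/2$. The function $x\mapsto x(1+\log(h/x))$ is increasing on $(0,h]$, so it suffices to check that $x=c\,\FB/(1+\log(h/\FB))$ satisfies $x(1+\log(h/x))<\FB/2$ for a small universal constant $c$. This holds because $\log(h/x)=\log(h/\FB)+\log\bigl((1+\log(h/\FB))/c\bigr)=O(1+\log(h/\FB))$. If the supremum defining $R'$ is not attained, I take a near-maximizer and lose a factor of $2$. For the second inequality in the statement, $x\mapsto x/(1+\log(h/x))$ is increasing, and $\FB(D)\ge\SB(D)$ by Lemma~\ref{lem:scale-main}. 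For the ``in particular'' clause, $\eps\le\SB(D)$ gives $1+\log(h/\SB(D))\le\Lambda_\eps$.

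I expect the main obstacle to be the decoupling step. It is not enough to integrate $uv$ against a single-price bound: the revenue of a spread uses $u$ and $v$ at different points, and a naive dyadic or grid decomposition of $\Prob[b-s>z]$ loses a factor proportional to the number of grid cells. The crossing point $a$ is what makes this work, since it simultaneously caps $v$ by $u(a)$ on the right (and $u$ by $v(a)$ on the left). That cap is exactly what lets the factor $u(a)$ appear inside the logarithm and cancel. Before running the integral bound, I would first verify carefully that such an $a$ exists when $F_B$ and $F_S$ have atoms.
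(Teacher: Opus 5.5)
There is a genuine gap, and it sits exactly at the step you flag as the crux: ``writing $\rho=R'/u(a)$ turns this into $\mathrm{Rev}(1+\log(h/\mathrm{Rev}))\ge\FB/2$.'' Substituting $R'=u(a)\rho$, your displayed bound reads $\tfrac12\FB\le u(a)\,\rho\,(1+\log(h/\rho))$. However, the best spread with seller price $a$ earns $\mathrm{Rev}=u(a)R'=u(a)^2\rho$, so one factor of $u(a)$ is lost. The cap $v\le u(a)$ only moves $u(a)$ into the logarithm; the revenue still carries $u(a)$ as a prefactor. Your chain only guarantees $u(a)\ge\FB/(2h)$, so the loss can be polynomial.

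The cause is structural. A seller price of $a$ reaches only sellers with $s\le a$, but the mass of $\int_a^h uv$ can come from sellers in $(a,t]$ with $u(t)\gg u(a)$. Replacing $\int_a^h uv$ by $\int_a^h v$ throws away exactly that information.

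A concrete instance: give the seller an atom $\kappa$ at $0$ with the remaining mass uniform on $[h/3,h/3+\delta]$, and the buyer an atom $\kappa$ at $h$ with the remaining mass uniform on $[h/3-\delta,h/3]$. Then:
\begin{enumerate}
\item $\FB\approx\kappa h$.
\item Your crossing conditions force $a=h/3$, with $u(a)=v(a)=\kappa$, so no other choice of crossing point rescues the argument.
\item Only Case~1 applies, since $\int_a^h uv\approx\tfrac23\kappa h$ while $\int_0^a uv\approx\tfrac13\kappa h$.
\item Every spread $(a,y)$ earns at most $\kappa^2h$, far below $c\,\kappa h/(1+\log(1/\kappa))$ for small $\kappa$.
\end{enumerate}
The lemma does hold here, via the spread $(h/3+\delta,h)$, which earns about $\tfrac23\kappa h$, but your construction never looks at it. Case~2 fails symmetrically, with $v(a)$ as the lost factor.

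The missing ingredient is a way for the seller price to track the seller's cost. The paper does this through the seller-offering quantity. For every $s$, $F_S(s)\sup_{y\ge s}(y-s)\Pr[b\ge y]\le R:=\sup_P\Rev(P;D)$. In quantile form this gives $A_S(D)=\int_0^1 g(q)\,dq$ with $g(q)\le\min\{h,R/q\}$, hence $A_S(D)\le R(1+\log(h/R))$, and symmetrically for $A_B(D)$. Fei's delegated-pricing bound $\FB\le 3.15\max\{A_S(D),A_B(D)\}$ then gives $\FB\le 3.15\,R(1+\log(h/R))$, and your inversion step applies verbatim.

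If you want to avoid Fei's theorem, there is a per-seller version of your integration: write $\FB=\E_s\int_s^h\Pr[b\ge y]\,dy$ and use $\sup_y(y-s)\Pr[b\ge y]\le\min\{h,R/F_S(s)\}$. This is correct, but it only yields $\FB\lesssim R(1+\log(h/R))^2$, i.e., a squared logarithm in the lemma.

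The rest of your proposal is fine: the GFT clause, the existence of the crossing point, the monotonicity of $x(1+\log(h/x))$, and the last two inequalities.
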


Let $P_{\rm emp}:=\Pi_{\widehat D}P$ be the empirical-grid projection of the posted-spread mechanism.  By Lemmas~\ref{lem:spread-main}, \ref{lem:projection}, and~\ref{lem:add-transfer-main}, $P_{\rm emp}$ retains empirical revenue of order $\SB(D)/\Lambda_\eps$.  Mixing $P_{\rm emp}$ into $M^*_{\rm emp}$ with weight $\omega=O(\eps/\SB(D))$ therefore gives
\[
M^\dagger=(1-\omega)M^*_{\rm emp}+\omega P_{\rm emp}
\]
with
\[
\Rev(M^\dagger;\widehat D)\ge \rho_{\rm add},\qquad \GFT(M^\dagger;\widehat D)\ge \SB(D)-O(\eps).
\]
Thus the empirical-margin LP contains a near-optimal feasible mechanism. By empirical optimality, the mechanism returned by Algorithm~\ref{alg:add-upper-main} has empirical GFT at least $\SB(D)-O(\eps)$ and empirical revenue at least $\rho_{\rm add}$. On the event $\Delta_{\rm K}(D,\widehat D)\le\delta_{\rm K}$, Lemma~\ref{lem:add-transfer-main} therefore gives
\[
\GFT(\widehat M;D)\ge \SB(D)-O(\eps)-O(h\delta_{\rm K}),\qquad \Rev(\widehat M;D)\ge \rho_{\rm add}-O(h\delta_{\rm K}).
\]
Choosing $c_{\rm K}$ sufficiently small relative to $c_{\rm add}$ makes the revenue nonnegative and the total GFT loss at most $\eps$.

Finally, by the Dvoretzky--Kiefer--Wolfowitz inequality applied to the two marginals,
\[
\Pr\!\left(\Delta_{\rm K}(D,\widehat D)>\delta_{\rm K}\right)\le 4e^{-2m\delta_{\rm K}^2}.
\]
Combining the above argument with this bound yields the following theorem.

\begin{theorem}\label{thm:add-upper-main}
There are universal constants $C,c_0>0$ such that, for every regular product distribution $D=D_B\times D_S$ supported on $[0,h]^2$ with positive densities and $\SB(D)>0$, and every $\eps\le c_0\SB(D)$, Algorithm~\ref{alg:add-upper-main} with
\[
m\ge C\frac{h^2\Lambda_\eps^2}{\eps^2}\log\frac1\gamma
\]
samples of each marginal returns, with probability at least $1-\gamma$, a BIC, IIR, and ex-ante WBB mechanism $\widehat M$ satisfying
$\GFT(\widehat M;D)\ge \SB(D)-\eps.$
\end{theorem}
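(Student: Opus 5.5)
The proof assembles the lemmas above on a single high-probability event. Let $\mathcal E$ be the event $\Delta_{\rm K}(\widehat D,D)\le\delta_{\rm K}$. By the Dvoretzky--Kiefer--Wolfowitz bound, $\Pr(\mathcal E^c)\le 4e^{-2m\delta_{\rm K}^2}$. With $\delta_{\rm K}=c_{\rm K}\eps/(h\Lambda_\eps)$, this is at most $\gamma$ as soon as $m\ge C\frac{h^2\Lambda_\eps^2}{\eps^2}\log\frac1\gamma$ for $C$ large (absorbing the constant $4$ into $C$ for $\gamma\le 1/2$). Everything below is deterministic on $\mathcal E$. Fix the constants in the order $c_0$, then $c_{\rm add}$, then $c_{\rm K}$. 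Denote by $C_{\rm tr}$ the hidden constant in Lemma~\ref{lem:add-transfer-main}, so transfer errors are at most $C_{\rm tr}h\delta_{\rm K}=C_{\rm tr}c_{\rm K}\eps/\Lambda_\eps$.

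\textbf{Step 1: a good comparator exists in the LP.} Take $M^*$ from Theorem~\ref{fact:structure-main} and $P$ from Lemma~\ref{lem:spread-main}. Since $\eps\le c_0\SB(D)\le\SB(D)$, we have $\Rev(P;D)\ge c_{\rm spr}\SB(D)/\Lambda_\eps$. By Lemma~\ref{lem:add-transfer-main} and Lemma~\ref{lem:projection}, the projections satisfy the following bounds:
\begin{itemize}
\item $\GFT(M^*_{\rm emp};\widehat D)\ge \SB(D)-C_{\rm tr}c_{\rm K}\eps/\Lambda_\eps$;
\item $\Rev(M^*_{\rm emp};\widehat D)\ge -C_{\rm tr}c_{\rm K}\eps/\Lambda_\eps$;
\item $\Rev(P_{\rm emp};\widehat D)\ge c_{\rm spr}\SB(D)/\Lambda_\eps-C_{\rm tr}c_{\rm K}\eps/\Lambda_\eps\ge \tfrac12 c_{\rm spr}\SB(D)/\Lambda_\eps$, using $c_{\rm K}$ small;
\item $\GFT(P_{\rm emp};\widehat D)\ge0$, since it is within $O(h\delta_{\rm K})$ of $\GFT(P;D)\ge0$. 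A slightly negative value can be absorbed into the error.
\end{itemize}

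Set $\omega:=4(c_{\rm add}+C_{\rm tr}c_{\rm K})\eps/(c_{\rm spr}\SB(D))$. This is at most $1$ once $c_0$ is small relative to these constants. GFT and revenue are affine under report-independent mixing, and $\mathcal Q_{B,S}$ is convex. Hence $M^\dagger=(1-\omega)M^*_{\rm emp}+\omega P_{\rm emp}$ corresponds to a point of $\mathcal Q_{B,S}$. Its empirical revenue is at least
\[
\omega\,\tfrac12 c_{\rm spr}\SB(D)/\Lambda_\eps-C_{\rm tr}c_{\rm K}\eps/\Lambda_\eps\ge\rho_{\rm add}.
\]
Its empirical GFT is at least $(1-\omega)(\SB(D)-C_{\rm tr}c_{\rm K}\eps)-O(\omega h\delta_{\rm K})\ge \SB(D)-\omega\SB(D)-O(c_{\rm K}\eps)$. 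Since $\omega\SB(D)=O((c_{\rm add}+c_{\rm K})\eps)$, this is $\SB(D)-O((c_{\rm add}+c_{\rm K})\eps)$. In particular the LP~\eqref{eq:additive-main-lp} is feasible, so the algorithm does not return no trade.

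One point needs care here: the payments of a mixture of projections must coincide with the Myerson payments of the mixed allocation $q$. This holds because the payment formulas are linear in $q$.

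\textbf{Step 2: transfer the learned mechanism back to $D$.} By LP optimality, the LP optimum $q^\star$ has empirical GFT at least that of $M^\dagger$ and empirical revenue at least $\rho_{\rm add}$. The returned $\widehat M$ is the randomized threshold mechanism built from the integral decomposition of Lemma~\ref{lem:lp} and the step extension of Lemma~\ref{lem:grid-step-extension}. It is BIC and IIR on $[0,h]^2$ as a report-independent mixture of threshold mechanisms, and its empirical GFT and revenue equal the LP values. Applying Lemma~\ref{lem:add-transfer-main} to $\widehat M$ gives
\[
\Rev(\widehat M;D)\ge c_{\rm add}\eps/\Lambda_\eps-C_{\rm tr}c_{\rm K}\eps/\Lambda_\eps\ge0
\]
whenever $c_{\rm K}\le c_{\rm add}/C_{\rm tr}$, which is ex-ante WBB. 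It also gives $\GFT(\widehat M;D)\ge\SB(D)-O((c_{\rm add}+c_{\rm K})\eps)\ge\SB(D)-\eps$ for $c_{\rm add}$ small enough.

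\textbf{Main obstacle.} The delicate point is the constant bookkeeping in Step 1. The margin $\rho_{\rm add}$ must dominate the transfer error, which forces $c_{\rm K}\ll c_{\rm add}$. At the same time, the mixing weight needed to buy that margin costs GFT $\omega\SB(D)\propto c_{\rm add}\eps$, which must stay below $\eps$. This works only because the $\Lambda_\eps$ factors cancel: the posted spread's revenue $\SB(D)/\Lambda_\eps$ and the margin $\eps/\Lambda_\eps$ carry the same logarithmic loss. This cancellation is exactly why $\rho_{\rm add}$ and $\delta_{\rm K}$ are scaled by $1/\Lambda_\eps$, and it is where the $\Lambda_\eps^2$ factor in the sample bound comes from.
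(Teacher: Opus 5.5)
Your proposal is correct and follows essentially the same route as the paper. The paper also uses the DKW event with $\delta_{\rm K}=c_{\rm K}\eps/(h\Lambda_\eps)$, the comparator $(1-\omega)M^*_{\rm emp}+\omega P_{\rm emp}$ to witness LP feasibility (Lemma~\ref{lem:empirical-mixed-main} via Lemma~\ref{lem:mixing}), LP optimality with the implementation of Lemma~\ref{lem:lp}, and transfer back through Lemma~\ref{lem:add-transfer-main} with $c_{\rm K}\ll c_{\rm add}$. One small bookkeeping slip: $\omega\le 1$ follows from $c_{\rm add}+C_{\rm tr}c_{\rm K}\le c_{\rm spr}/4$ and $\eps\le\SB(D)$, not from shrinking $c_0$, which you had already fixed first.
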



\subsection{Lower Bound}\label{ssec:add-lower-main}

This subsection proves the additive lower bound by constructing two regular product distributions on $[0,h]^2$ and showing that the upper bound from \cref{ssec:add-upper-main} is tight up to the logarithmic factors. Let $D_S$ be the uniform distribution on $[0,h]$.  Fix two nonzero nonnegative functions $r_H,r_L\in C_c^2((0,1))$ with equal integrals and
\[
\supp(r_H)\subset(1/2,2/3),
\qquad
\supp(r_L)\subset(0,1/8).
\]
Set $r:=r_H-r_L$.  For $\theta\in\{+1,-1\}$ and sufficiently small $\eta>0$, let
$D_B^\theta$ be the distribution on $[0,h]$ with density
\[
f_B^\theta(b)
=
\frac1h\bigl(1+\theta\eta r(b/h)\bigr),
\qquad b\in[0,h],
\]
and set $D^\theta:=D_B^\theta\times D_S$.
The equality of the integrals ensures that $f_B^\theta$ is a probability density.  Hence $D^+$ and $D^-$ share the seller marginal and differ only by the signed buyer perturbation $\pm\eta r$.  For sufficiently small $\eta$, both instances are regular product distributions; the verification is given in Appendix~\ref{sec:bsu}.  The following lemma bounds their KL divergence.

\begin{lemma}\label{lem:additive-kl}
For the pair above, $\KL(D^+\|D^-)=O(\eta^2) $ and $\ \KL(D^-\|D^+)=O(\eta^2)$.
\end{lemma}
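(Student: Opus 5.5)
Since $D^+$ and $D^-$ share the seller marginal $D_S$ and both are product distributions, the chain rule for KL divergence reduces the claim to the buyer marginals:
\[
\KL(D^+\|D^-)=\KL(D_B^+\|D_B^-)+\KL(D_S\|D_S)=\KL(D_B^+\|D_B^-),
\]
and symmetrically for $\KL(D^-\|D^+)$. It therefore suffices to bound the KL divergence between the two one-dimensional densities $f_B^{\pm}(b)=\frac1h(1\pm\eta r(b/h))$ on $[0,h]$.

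The plan is to use the standard comparison $\KL(P\|Q)\le\chi^2(P\|Q)=\int (f_P-f_Q)^2/f_Q$. First, since $r=r_H-r_L$ is continuous with compact support in $(0,1)$, it is bounded: $\|r\|_\infty<\infty$. For $\eta\le 1/(2\|r\|_\infty)$ we then have $f_B^-(b)\ge \frac1{2h}$ on $[0,h]$. Second, $f_B^+-f_B^-=\frac{2\eta}{h}r(b/h)$. Substituting $u=b/h$ gives
\[
\chi^2(D_B^+\|D_B^-)\le \int_0^h \frac{4\eta^2 r(b/h)^2/h^2}{1/(2h)}\,db = 8\eta^2\int_0^1 r(u)^2\,du = O(\eta^2).
\]
The constant depends only on the fixed function $r$; it does not depend on $h$, because the factors of $h$ cancel under the rescaling. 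The same computation with the roles of $\theta=\pm1$ swapped bounds $\KL(D^-\|D^+)$.

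Alternatively, one can expand $\log\frac{1+\eta r}{1-\eta r}$ to second order and use $\int r=0$, which follows from the equal-integral assumption on $r_H$ and $r_L$. This kills the first-order term and leaves an $O(\eta^2\int r^2)$ remainder. The $\chi^2$ route avoids the Taylor remainder bookkeeping, so I would use it. The only step requiring care is ensuring the lower bound $f_B^\mp\ge 1/(2h)$, i.e.\ taking $\eta$ small relative to $\|r\|_\infty$. This is already permitted by the standing assumption that $\eta$ is sufficiently small, and that assumption is needed anyway for $f_B^\theta$ to be a valid, regular density. There is no real obstacle beyond this.
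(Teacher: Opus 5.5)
Your proof is correct, and it differs from the paper's only in the one-dimensional step. Both arguments first reduce to the buyer marginals.

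The paper then writes $\KL(D_B^+\|D_B^-)=\int_0^1(1+\eta r(z))\log\frac{1+\eta r(z)}{1-\eta r(z)}\,dz$. It expands $(1+x)\log\frac{1+x}{1-x}=2x+O(x^2)$ uniformly for $|x|\le 1/2$, and uses $\int_0^1 r=0$ to cancel the linear term. You instead invoke $\KL\le\chi^2$ and bound the $\chi^2$-divergence directly via the pointwise lower bound $f_B^\mp\ge 1/(2h)$.

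Your route gives an explicit constant, $8\int_0^1 r^2$, with no remainder bookkeeping. The zero-mean property of $r$ enters only implicitly, because $\KL\le\chi^2$ requires both $f_B^\pm$ to be normalized probability densities. The paper's Taylor route needs uniform control of the remainder. In exchange, carrying the expansion one order further gives the exact leading behaviour $\KL=2\eta^2\int_0^1 r^2+O(\eta^3)$.

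Both routes use the same smallness condition $\eta\|r\|_\infty\le 1/2$. Either suffices for the Pinsker-based two-point argument in Lemma~\ref{lem:two-point-disjoint}.
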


The same pair also separates the WBB constraints: any mechanism that is ex-ante WBB under $D^-$ loses $\Omega(\eta h)$ GFT when evaluated under $D^+$.
\begin{lemma}\label{lem:additive-cross-world-separation}
There are constants $c,\eta_0>0$ such that, for every $\eta\in(0,\eta_0]$, every BIC/IIR mechanism $M$ that is ex-ante WBB under $D^-$ satisfies
\[
\GFT(M;D^+)\le \SB(D^+)-c\eta h.
\]
\end{lemma}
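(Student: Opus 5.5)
The plan is a Lagrangian (virtual-surplus) argument: for a mechanism $M$ with allocation $q$, I write the ex-ante WBB constraint under $D^-$ via Lemma~\ref{lem:vs-ineq-main} and compare it with the $D^+$-Lagrangian. Set $\Phi^\theta(b,s):=\phi_B^\theta(b)-\phi_S(s)$, where $\phi_S(s)=2s$ is common to both instances because $D_S$ is uniform. Lemma~\ref{lem:vs-ineq-main} under $D^-$ gives $0\le \Rev(M;D^-)\le \E_{D^-}[q\,\Phi^-]$. By Theorem~\ref{fact:structure-main}, $\SB(D^+)=\max_q \E_{D^+}[q((b-s)+\lambda^+\Phi^+)]$, with the maximum attained by $q^+=\one\{(b-s)+\lambda^+\Phi^+\ge 0\}$ and with $\E_{D^+}[q^+\Phi^+]=0$ (complementary slackness). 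At $\eta=0$ this is the uniform--uniform instance, where $\lambda^*=1/2$ and trade occurs iff $b\ge s+h/4$. Continuity should give $\lambda^\pm=1/2+O(\eta)$ and $\lambda^\pm$ bounded away from $0$.

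The key step is the identity
\[
\SB(D^+)-\GFT(M;D^+)=\E_{D^+}\bigl[(q^+-q)\,\psi^+\bigr]+\lambda^+\E_{D^+}[q\,\Phi^+],
\qquad \psi^+:=(b-s)+\lambda^+\Phi^+,
\]
whose first term is nonnegative because $q^+$ pointwise maximizes $q\psi^+$. I then relate $\E_{D^+}[q\Phi^+]$ to $\E_{D^-}[q\Phi^-]\ge 0$. Densities and virtual values differ by $O(\eta)$ terms supported only where $b/h\in\supp(r)$. Let $W(b):=(1-F_B(b))/h$ be the buyer's marginal contribution to the virtual value term, with $\int q\,\phi_B f_B\,db=\int q\,b f_B\,db-\int q\,h\,W\,db$ in suitable form. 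Then
\[
\E_{D^+}[q\Phi^+]-\E_{D^-}[q\Phi^-]=\eta\,\E_{\rm unif}\bigl[q\,L(b,s)\bigr]+O(\eta^2)
\]
for an explicit signed linear functional $L$ built from $r$ and its antiderivative $R(x)=\int_0^x r$. The mass shifts from the low region $(0,1/8)$ to the high region $(1/2,2/3)$, so $R\le 0$ on $(0,2/3)$ and $R\equiv0$ elsewhere. Since $\int q\,\phi_B f_B\,db=\int q\,b f_B\,db-\int q(1-F_B)\,db$, the buyer term at fixed $s$ is $\int q\bigl(b\,\eta r(b/h)/h+\eta R(b/h)\bigr)db$ up to sign conventions. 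Evaluated at $q\approx q^+$, which trades iff $b\ge s+h/4$, this should be a quantity of order $\eta h$ with a definite sign. I will verify the sign by direct computation: the high bump mostly lies in trading regions, and the low bump lies below $h/8$, where little trade happens. The two bumps are deliberately placed in regions $(1/2,2/3)$ versus $(0,1/8)$ so that the sign works.

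I then split into two cases. In the first, $q$ is $\kappa$-close to $q^+$ in a suitable $L^1$ sense weighted by $|\psi^+|$. Then $\E[qL]=\E[q^+L]+O(\cdot)$ is bounded away from zero with the right sign, so $\lambda^+\E_{D^+}[q\Phi^+]\ge\lambda^+\E_{D^-}[q\Phi^-]+c'\eta h\ge c'\eta h$, which gives the gap. In the second, $q$ is far from $q^+$, and the first term $\E[(q^+-q)\psi^+]$ is at least $c\eta h$. Here I use that $|\psi^+|$ grows linearly away from the boundary curve, so a quadratic-margin bound gives loss $\gtrsim(\text{mass of disagreement})^2h$. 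I will choose the closeness threshold $\kappa=\Theta(\sqrt\eta)$ so that the two regimes meet: the error in $\E[qL]$ is $O(\eta h\,\kappa)$, and the far case gives loss $\Omega(\kappa^2h)=\Omega(\eta h)$.

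The main obstacle is the sign and nondegeneracy of $\E_{D^+}[q^+L]$, the first-order change in the WBB slack of the $D^+$-optimal allocation when it is evaluated under $D^-$. If the $D^+$ optimum still had nonnegative revenue under $D^-$, the lemma would fail. I would compute this explicitly for the uniform baseline, using that $L$ involves $R\le0$ and the bump placements, and then absorb the $O(\eta)$ changes in $\lambda^+$ and $q^+$ into the $O(\eta^2)$ error.
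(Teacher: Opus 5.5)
Your argument is correct in outline. It is a different organization of the same weak-duality idea.

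\textbf{What the paper does.} The paper never touches the $D^+$ optimum. It fixes the \emph{unperturbed} multiplier $1/2$ and adds $\tfrac12$ times the $D^-$ virtual-surplus constraint to $\GFT(M;D^+)$. It then bounds the result by the explicit integral $h\int[2(z-y)-\tfrac12+\tfrac\eta2(zr(z)+J(z))]_+\,dz\,dy=h(G_0+\eta U_{\rm sep}+O(\eta^2))$. Here the $O(\eta^2)$ is the boundary effect near the line $z-y=1/4$, which plays the role of your quadratic margin. For the other side it uses the explicit comparator $\one_T$, the uniform--uniform optimum. Its $D^+$ revenue is exactly $\eta hB_{\rm sep}>0$, so $\SB(D^+)\ge h(G_0+\eta L_{\rm sep})$, and the gap is $L_{\rm sep}-U_{\rm sep}=\tfrac12B_{\rm sep}$.

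\textbf{What your route buys and costs.} You split $\SB(D^+)-\GFT(M;D^+)$ into the Lagrangian suboptimality of $q$ plus $\lambda^+$ times its $D^+$ revenue slack. This shows the source of the separation more clearly: the $D^+$ optimum's slack turns negative under $D^-$. It would also extend to perturbations whose optimum is not explicit. The price is that you need several things the paper does not:
\begin{itemize}
\item complementary slackness;
\item convergence $\lambda^+\to1/2$ and $q^+\to\one_T$ in $L^1$;
\item a uniform quadratic-margin estimate;
\item the near/far case split.
\end{itemize}

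\textbf{Two details to get right when you fill it in.}

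First, with $a(z,y)=q(hz,hy)$ and $J=-R$, the difference is exactly
$\E_{D^+}[q\Phi^+]-\E_{D^-}[q\Phi^-]=2\eta h\int a\,\bigl((z-2y)r(z)-J(z)\bigr)\,dz\,dy$, with no $O(\eta^2)$ term. It contains the cross term $-2y\,r(z)$, which comes from the common seller virtual cost weighted by the perturbed buyer density. Your ``buyer term'' omits it. Once it is included, the key quantity at $\one_T$ is $B_{\rm sep}=\tfrac12\int_{1/4}^1(z-\tfrac14)(\tfrac34-z)r(z)\,dz$. Its positivity therefore comes from $\supp r_H\subset(1/4,3/4)$, not merely from the high bump lying in the trade region. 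A bump inside $(3/4,1)$ would flip the sign.

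Second, in your far case the revenue-slack term is only bounded below by $-O(\eta h)$. Choose $\kappa^2=C\eta$ with $C$ large enough that the margin term $\Omega(\kappa^2h)$ dominates it.
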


Lemma~\ref{lem:additive-kl} gives the information-distance estimate needed for the two-point argument.  In contrast, Lemma~\ref{lem:additive-cross-world-separation} separates the near-optimal feasible classes of the two instances: a mechanism that succeeds on one instance cannot also satisfy the WBB and near-optimality requirements on the other.  Thus uniform success over the pair forces the learner to distinguish the two buyer marginals, yielding the additive lower bound.

\begin{theorem}\label{thm:add-lower-main}
There are universal constants $c_0,c_1>0$\footnote{All constants in this subsection may depend on the fixed bump functions $r_H,r_L$, which are chosen once and for all, and are independent of the scaling and accuracy parameters.}
such that, for every $h>0$ and $\eps\in(0,c_1h]$, any learner that, uniformly over regular product distributions on $[0,h]^2$, outputs an additive $\eps$-approximation to $\SB(D)$ with probability at least $2/3$ requires
\[
m\ge c_0\frac{h^2}{\eps^2}
\]
samples from each marginal.
\end{theorem}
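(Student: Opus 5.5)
The argument is a two-point (Le Cam) reduction on the pair $D^+,D^-$ constructed above, with the perturbation size tied to the accuracy by $\eta:=C\eps/h$. The constant $C$ is chosen so that $c\eta h = cC\eps > 2\eps$ (say $C=3/c$), where $c$ is the separation constant of Lemma~\ref{lem:additive-cross-world-separation}. The restriction $\eps\le c_1 h$ with $c_1$ small guarantees $\eta\le\eta_0$. This makes both instances regular, with positive densities on $[0,h]^2$, and it also makes Lemmas~\ref{lem:additive-kl} and~\ref{lem:additive-cross-world-separation} applicable.

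\emph{Disjoint success events.} Let $E^\theta$ be the set of mechanisms $M$ that are BIC, IIR, ex-ante WBB under $D^\theta$, and satisfy $\GFT(M;D^\theta)\ge\SB(D^\theta)-\eps$. We claim $E^+\cap E^-=\emptyset$. If $M\in E^-$, then $M$ is WBB under $D^-$, so Lemma~\ref{lem:additive-cross-world-separation} gives
\[
\GFT(M;D^+)\le\SB(D^+)-c\eta h<\SB(D^+)-\eps,
\]
and therefore $M\notin E^+$. Now consider any learner that succeeds with probability at least $2/3$ on both instances. Its output lands in $E^\theta$ with probability at least $2/3$ under samples from $D^\theta$. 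Define the test that answers $+$ exactly when the output lies in $E^+$. This test errs with probability at most $1/3$ under each hypothesis. Note that the output may be randomized; this is harmless, since the test is a measurable function of the samples together with the learner's internal randomness, and that randomness has the same law under both hypotheses.

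\emph{Information bound.} The sample consists of $m$ buyer draws and $m$ seller draws. The seller marginal is identical in both worlds, so by tensorization
\[
\KL\bigl((D^+)^{\otimes m}\,\|\,(D^-)^{\otimes m}\bigr)=m\,\KL(D_B^+\|D_B^-)=O(m\eta^2),
\]
using Lemma~\ref{lem:additive-kl}. By Pinsker's inequality (or the Bretagnolle--Huber inequality), the sum of the two error probabilities of any test is at least $1-\TV\ge 1-\sqrt{O(m\eta^2)/2}$. For our test this sum is at most $2/3$, so we need $\TV\ge 1/3$, which forces $m\eta^2=\Omega(1)$. Substituting $\eta=C\eps/h$ gives $m\ge c_0h^2/\eps^2$.

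\emph{Main obstacle.} The real content is Lemma~\ref{lem:additive-cross-world-separation}, which we take as given. The only point needing care here is the constant bookkeeping. The separation must strictly exceed $\eps$, which fixes $C$. The condition $C\eps/h\le\eta_0$ then fixes $c_1$. One also has to check that the learner's guarantee applies to both instances, which requires $D^\pm$ to be admissible inputs to the learner; this is exactly the regularity established in Appendix~\ref{sec:bsu}.
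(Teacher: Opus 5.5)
Your proposal is correct and follows the paper's proof essentially step for step. The paper likewise sets $\eta=2\eps/(ch)$ (your choice $C=3/c$ works equally well) and uses Lemma~\ref{lem:additive-cross-world-separation} to make the two classes of feasible $\eps$-optimal mechanisms disjoint. It then combines the KL bound of Lemma~\ref{lem:additive-kl} with Pinsker's inequality through Lemma~\ref{lem:two-point-disjoint} to get $m=\Omega(\eta^{-2})=\Omega(h^2/\eps^2)$, with the regularity of $D^\pm$ for small $\eta$ checked explicitly, which you correctly flag as needed.
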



\section{Multiplicative Approximation with Bounded Support Distributions}\label{sec:main-bounded-mult}

\subsection{Upper Bound}\label{ssec:bounded-mult-upper-main}

This section turns to the bounded-support multiplicative learner. As in \cref{ssec:add-upper-main}, the algorithm is a plug-in empirical optimizer with canonical Myerson payments and a positive empirical revenue margin. The new step is to localize the empirical LP using a pilot estimate of the benchmark scale. This replaces the uniform Kolmogorov estimate with a benchmark-localized bound.

The samples are split into an independent pilot part and a training part.  On the pilot split, pair the buyer and seller samples independently and set
\[ \widehat V:=\frac1{m_0}\sum_{i=1}^{m_0}(b_i-s_i)^+, \qquad \widehat\Lambda:=1+\log\frac h{\widehat V},
\qquad \rho:=c_\rho\frac{\alpha\widehat V}{\widehat\Lambda}, \]
where $c_\rho>0$ is a sufficiently small universal constant, fixed independently of $D,h,\alpha,\gamma$.
In Lemma~\ref{lem:bounded-mult-scale}, we will show that $\widehat V$ is a constant-factor estimate of
$\FB(D)$, and hence of $\SB(D)$ by Lemma~\ref{lem:scale-main}.

\xhdr{Mechanism construction.} 
On the training split, we first construct the empirical product distribution and empirical grids as in \cref{ssec:add-upper-main}, with the same monotone allocation constraints and canonical payments.  Then we use the following LP to construct an allocation rule on the empirical grid, which is converted to a mechanism on the continuous distribution in the same way as \cref{alg:add-upper-main}.
\begin{equation}\label{eq:mult-localized-lp}
\max_{q\in\mathcal Q_{B,S}}\ \widehat{\GFT}(q)
\quad\text{s.t.}\quad
q_{ij}=0\ (\beta_i<\sigma_j),\quad
\widehat{\Rev}(q)\ge 4\rho,\quad
\widehat{\GFT}(q)\le C_0\widehat V .
\end{equation}

Compared with \cref{ssec:add-upper-main}, the new constraints remove negative-surplus trades and localize the empirical mechanism at the scale $\widehat V$. The program remains a polynomial-size finite LP. Since the LP enforces positive revenue, the cap $\widehat{\GFT}(q)\le C_0\widehat V$ also gives $\widehat{\GFT}(q)-\widehat{\Rev}(q)\le C_0\widehat V$. Thus both quantities needed for the localized transfer bound are controlled at the estimated scale $\widehat V=\Theta(\SB(D))$.

\begin{algorithm}[htbp]
\caption{Bounded multiplicative localized learner}\label{alg:mult-upper-main}
\KwIn{$h,\alpha,\gamma$, pilot samples $b^0_1,\ldots,b^0_{m_0}$, $s^0_1,\ldots,s^0_{m_0}$, and training samples $b_1,\ldots,b_m$, $s_1,\ldots,s_m$}
\KwOut{A randomized threshold mechanism on $[0,h]^2$, or no trade}
Compute $\widehat V:=m_0^{-1}\sum_{i=1}^{m_0}(b^0_i-s^0_i)^+$\;
\If{$\widehat V=0$}{
    \Return no trade\;
}
Set $\widehat\Lambda:=1+\log(h/\widehat V)$ and
$\rho:=c_\rho\alpha\widehat V/\widehat\Lambda$\;
Build the training empirical distribution and empirical grids as in \cref{ssec:add-upper-main}\;
Solve the localized empirical LP~\eqref{eq:mult-localized-lp}\;
\If{the LP is infeasible}{
    \Return no trade\;
}
Implement a suitable optimal LP solution using the finite-grid procedure from \cref{ssec:add-upper-main}\;
\Return the resulting randomized threshold mechanism\;
\end{algorithm}

The finite-grid implementation is the same order-polytope construction as in Lemma~\ref{lem:lp}; as shown in Appendix~C, a suitable optimum of the localized LP can be implemented as a mixture of at most three deterministic threshold mechanisms. Hence the output is a randomized threshold mechanism satisfying the empirical revenue margin and the empirical GFT cap in the LP.

\begin{lemma}\label{lem:bounded-mult-scale}
There is a universal constant $C>0$ such that, if the pilot split has
\[ m_0\ge C\frac{h}{\FB(D)}\log\frac1\gamma \]
samples from each marginal, then the pilot estimate $\widehat V$ lies between $\frac12\FB(D)$ and $\frac32\FB(D)$ with probability at least $1-\gamma$. Consequently, $\widehat V=\Theta(\SB(D))$.
\end{lemma}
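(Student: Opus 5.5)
The plan is to view $\widehat V$ as an empirical mean of i.i.d.\ bounded nonnegative variables and apply a multiplicative Chernoff/Bernstein bound. On the pilot split the pairs $(b_i,s_i)$ are independent draws from the product distribution $D=D_B\times D_S$, since the buyer and seller samples are independent and we pair them by index. Set $Z_i:=(b_i-s_i)^+$. Then the $Z_i$ are i.i.d., take values in $[0,h]$, and have mean $\E[Z_i]=\FB(D)$. Moreover $\widehat V=\frac1{m_0}\sum_i Z_i$.

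The key step is a relative-error concentration bound. Normalize $Y_i:=Z_i/h\in[0,1]$, which has mean $p:=\FB(D)/h$. The multiplicative Chernoff bound for bounded $[0,1]$ variables then gives
\[
\Pr\Bigl(\bigl|\tfrac1{m_0}\textstyle\sum_i Y_i-p\bigr|\ge \tfrac12 p\Bigr)\le 2\exp\!\bigl(-c\,m_0 p\bigr)
\]
for a universal $c>0$; for instance $c=1/12$ from the standard $\exp(-\delta^2\mu/3)$ form with $\delta=1/2$. Equivalently, one can use Bernstein's inequality with $\Var(Z_i)\le h\,\E[Z_i]=h\FB(D)$, which produces the same exponent $m_0\FB(D)/h$. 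This is the point where the bound improves on Hoeffding: the variance is controlled by the mean, which removes the $h^2/\FB^2$ dependence and gives only $h/\FB$. Choosing $m_0\ge C\frac{h}{\FB(D)}\log\frac1\gamma$ with $C$ large (and absorbing the factor $2$, say by assuming $\gamma\le 1/2$ or enlarging $C$) makes the failure probability at most $\gamma$. On the complement event, $\frac12\FB(D)\le\widehat V\le\frac32\FB(D)$.

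For the consequence, combine this with Lemma~\ref{lem:scale-main}, which gives $\SB(D)\le\FB(D)\le 3.15\,\SB(D)$. Together these yield $\frac12\SB(D)\le\widehat V\le 4.725\,\SB(D)$, that is, $\widehat V=\Theta(\SB(D))$.

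There is no serious obstacle. The only points needing care are checking that index-pairing of independent marginal samples really yields i.i.d.\ draws from the product distribution, and using a relative-error (multiplicative) concentration inequality rather than an additive one.
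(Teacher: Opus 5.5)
Your proposal is correct and follows essentially the same route as the paper. Like the paper, you treat $\widehat V$ as the mean of the i.i.d.\ variables $(b_i-s_i)^+\in[0,h]$, apply Bernstein's inequality with $\Var\le h\,\FB(D)$ to get failure probability $2\exp(-c\,m_0\FB(D)/h)$, and then use Lemma~\ref{lem:scale-main} to conclude $\widehat V=\Theta(\SB(D))$. Your caveat that the factor $2$ is absorbed only when $\gamma$ is bounded away from $1$ (e.g.\ $\gamma\le 1/2$) is valid, and the paper's proof passes over the same point.
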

The following lemma transfers empirical guarantees back to the true distribution for localized mechanisms.  Write $U(M;D'):=\GFT(M;D')-\Rev(M;D').$
For $A>0$, say that $M$ is $A$-localized under $D'$ if
$
\GFT(M;D')\le A,
$ 
and 
$
U(M;D')\le A.
$ Then we have the following lemma.

\begin{lemma}\label{lem:localized-transfer}
With probability at least $1-\gamma$, the following holds uniformly over all $A>0$ and all randomized threshold mechanisms $M$ with allocation $q(b,s)=0$ whenever $b<s$. Denote $\ell:=\log\!\left(\frac{m}{\gamma}\right)$, if $M$ is $A$-localized under either $D$ or $\widehat D$, then 
\[
\bigl|\GFT(M;D)-\GFT(M;\widehat D)\bigr|\le O \left(\sqrt{\frac{hA\ell}{m}}+\frac{h\ell}{m}\right),\quad
\bigl|U(M;D)-U(M;\widehat D)\bigr|\le O\left(\sqrt{\frac{hA\ell}{m}}+\frac{h\ell}{m}\right).
\]
Consequently, $\bigl|\Rev(M;D)-\Rev(M;\widehat D)\bigr|\le  O\left(\sqrt{\frac{hA\ell}{m}}+\frac{h\ell}{m}\right)$.
\end{lemma}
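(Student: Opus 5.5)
The approach is a localized uniform-convergence argument, a Bernstein-type replacement for the Kolmogorov/DKW bound of Lemma~\ref{lem:add-transfer-main}. The first step writes GFT and $U$ as integrals of one-dimensional tail quantities. For a deterministic canonical threshold mechanism with $q=0$ on $\{b<s\}$, the layer-cake identity gives
\[
\GFT(M;D)=\int_0^h \Prob_D\bigl[s\le t\le b,\ q(b,s)=1\bigr]\,dt .
\]
By the envelope formula with threshold payments, the buyer pays $\tau_B(s)$ and the seller receives $\tau_S(b)$. Hence
\[
U(M;D)=\E_D\bigl[q(b-\tau_B(s))\bigr]+\E_D\bigl[q(\tau_S(b)-s)\bigr],
\]
which is the sum of the agents' information rents. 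Each rent is again an integral over $t$ of a probability of a "rectangle-like" event: $\{\tau_B(s)\le t\le b\}$ for the buyer and $\{s\le t\le\tau_S(b)\}$ for the seller. These events are monotone in $(b,s)$, and $q=0$ on $b<s$ forces $s\le\tau_B(s)$ and $\tau_S(b)\le b$. Because of this, every event appearing in any of the integrals has the form $\{b\ge x,\ s\le y\}$, or a monotone staircase built from such events. In every case it is contained in $\{s\le t\le b\}$. Randomized mechanisms follow by affinity.

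The second step is a variance-sensitive concentration bound for the weighted sums. Write $g(b,s)$ for the relevant integrand. The key observation is that $0\le g\le h$, while $\E_D[g]\le A$ (or $\E_{\widehat D}[g]\le A$) by localization. Its second moment is therefore at most $hA$, and a Bernstein bound gives deviation $\sqrt{hA\ell/m}+h\ell/m$ for any single fixed mechanism. Two complications remain.

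First, $\widehat D$ is a product of two independent empirical marginals rather than an empirical joint law. I would handle this by a two-step decomposition through the intermediate distribution $\widehat D_B\times D_S$. One step replaces the buyer marginal with $s$ integrated out, so the function $b\mapsto\E_{D_S}g(b,s)$ has range $[0,h]$ and mean $\le A$. The other step replaces the seller marginal given the empirical buyer points. Each step is then a sum of i.i.d.\ bounded terms.

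Second, the bound must hold uniformly. Monotone staircase mechanisms form a class whose restriction to the sample has only $\mathrm{poly}(m)$ distinct patterns per coordinate: the thresholds matter only through their positions relative to the $2m$ sample points. To exploit this I would apply Bernstein or Freedman bounds to the class of functions $b\mapsto\Prob_{D_S}[s\le y(b)]$ with monotone $y$. Equivalently, I would use a relative-deviation (ratio-type) VC inequality of Vapnik--Chervonenkis / Anthony--Shawe-Taylor form. It controls $|P(E)-\widehat P(E)|\lesssim\sqrt{P(E)\ell/m}+\ell/m$ simultaneously for all events $E=\{b\ge x,\ s\le y\}$, and also for staircase unions via monotone-function bracketing with $\log$-covering $O(\log m)$ per level. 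Integrating over $t$ and using Cauchy--Schwarz,
\[
\int_0^h\sqrt{P(E_t)}\,dt\le\sqrt{h\int_0^h P(E_t)\,dt}\le\sqrt{hA},
\]
yields the stated rate. Uniformity over $A$ needs no union bound, because $A$ enters only through this final integration.

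The third step symmetrizes the localization hypothesis. If $M$ is $A$-localized under $\widehat D$ rather than $D$, the relative inequality still applies, in its empirical-variance form $\sqrt{\widehat P(E)\ell/m}$. Solving the resulting quadratic inequality converts between $P$ and $\widehat P$ at the cost of an additive $h\ell/m$. The $\Rev$ bound is then immediate from $\Rev=\GFT-U$.

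The main obstacle I expect is the uniform relative-deviation step for the product empirical measure over staircase (monotone-boundary) events, as opposed to plain rectangles. I would first try reducing to rectangles by conditioning on one marginal, using the monotone structure. If that fails, I would fall back to a chaining bound over monotone functions, at the cost of extra logarithmic factors that are absorbed in $\ell$.
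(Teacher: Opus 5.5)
Your proposal is correct and is essentially the paper's proof. The paper likewise writes $\GFT$ and $U$ as expectations of nonnegative, $[0,h]$-valued kernels that are nondecreasing in $b$ and nonincreasing in $s$; this is where $q=0$ on $\{b<s\}$ is used. It then passes through a half-empirical product measure ($D_B\times\widehat D_S$, where you use $\widehat D_B\times D_S$), controls each one-dimensional step by a Panchenko-type relative VC bound for half-lines with layer-cake and Cauchy--Schwarz, removes the hybrid term and switches the measure carrying the localization by AM--GM, and reads off the revenue bound from $\Rev=\GFT-U$.

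The only real difference is bookkeeping. The paper applies the hybrid directly to the seed-averaged monotone kernel in Lemma~\ref{lem:rel-bound}, instead of first layer-caking it into staircase events. That lemma is exactly the \emph{conditioning on one marginal} step you propose for your main obstacle, so no chaining fallback is needed. A direct union bound over staircase patterns on the sample grid would not work, since it does not have $O(\log m)$ entropy.
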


The pilot estimate identifies the benchmark scale, while the localization constraint restricts the empirical program to mechanisms whose GFT and utility are of that scale. Lemma~\ref{lem:localized-transfer} then applies with localization parameter $A=\Theta(\SB(D))$, so the empirical-to-true error is controlled at the benchmark scale rather than the ambient scale $h$. This benchmark-localized comparison yields the improved linear dependence on $h/\SB(D)$.

\begin{theorem}\label{thm:bounded-mult-upper-main}
For every regular product distribution $D$ on $[0,h]^2$ with positive densities, $\SB(D)>0$, and $\alpha\in(0,1/2)$,
Algorithm~\ref{alg:mult-upper-main} with
\[
\Oe\!\left(\frac{h}{\SB(D)}\frac{\Lambda_h(D)^2}{\alpha^2}\log\frac1\gamma\right)
\]
samples from each marginal returns, with probability at least $1-\gamma$, a BIC, IIR, and ex-ante WBB mechanism $\widehat M$ satisfying
$\GFT(\widehat M;D)\ge (1-\alpha)\SB(D).$
\end{theorem}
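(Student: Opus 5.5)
We follow the same existence-then-transfer template as for Theorem~\ref{thm:add-upper-main}, now with the localized transfer Lemma~\ref{lem:localized-transfer} in place of Lemma~\ref{lem:add-transfer-main}. We write $\Lambda_h(D)=1+\log(h/\SB(D))$. The sample budget is split into a pilot part of size $m_0=\Theta(\frac{h}{\SB(D)}\log\frac1\gamma)$, which is enough for Lemma~\ref{lem:bounded-mult-scale} since $\FB(D)\ge\SB(D)$, and a training part of size $m=\Oe(\frac{h\Lambda_h^2}{\SB(D)\alpha^2}\log\frac1\gamma)$. We work on the intersection of the two good events, each of probability at least $1-\gamma/2$. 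On this event, $\widehat V\in[\frac12\FB,\frac32\FB]$, so $\widehat V=\Theta(\SB(D))$ by Lemma~\ref{lem:scale-main}. It follows that $\widehat\Lambda=\Theta(\Lambda_h(D))$ up to additive constants, and hence $\rho=\Theta(\alpha\SB/\Lambda_h)$. Throughout we take $A:=C_1\SB(D)$ for a suitable constant $C_1$. With this $m$, the transfer error $\sqrt{hA\ell/m}+h\ell/m$ is at most $c\,\alpha\SB/\Lambda_h$, i.e.\ a small constant times $\rho$ (this fixes the constant in $m$).

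\textbf{Step 1: a good comparator lies in the LP.} Take $M^*$ from Theorem~\ref{fact:structure-main}. Its allocation $q^*$ trades only when $(b-s)+\lambda^*(\phi_B-\phi_S)\ge0$; together with $\Rev(M^*;D)\ge0$ and Lemma~\ref{lem:vs-ineq-main}, the plan is to argue that $q^*=0$ whenever $b<s$. At worst one can intersect $q^*$ with $\{b\ge s\}$, which only increases GFT and keeps monotonicity. Revenue must then be rechecked via virtual surplus, and I expect this to be a minor technical point. Next take the posted-spread $P$ of Lemma~\ref{lem:spread-main}, which trades only when $b\ge s$ and has $\Rev(P;D)\ge c_{\rm spr}\SB/\Lambda_h$. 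Both mechanisms are $A$-localized under $D$: their GFT is at most $\FB\le3.15\SB$, and $U=\GFT-\Rev\le \GFT$ because revenue is nonnegative. Form $M^\circ=(1-\omega)M^*+\omega P$ with $\omega=\Theta(\alpha)$. Then $\Rev(M^\circ;D)\ge\omega c_{\rm spr}\SB/\Lambda_h\ge 5\rho$ once $c_\rho$ is small, and $\GFT(M^\circ;D)\ge(1-\omega)\SB$. Lemma~\ref{lem:localized-transfer}, applied under $D$, moves each quantity to $\widehat D$ with error at most $\rho$. Projecting to the grid via Lemma~\ref{lem:projection} preserves the empirical GFT and does not decrease the empirical revenue. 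The projection also keeps $q=0$ on $\beta_i<\sigma_j$, and its empirical GFT is at most $3.15\SB+\rho\le C_0\widehat V$ for $C_0$ a large constant. So the projection is LP-feasible with empirical GFT at least $(1-\omega)\SB-\rho$.

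\textbf{Step 2: transfer the LP output back to $D$.} The output $\widehat M$ is a mixture of at most three threshold mechanisms and trades only when $b\ge s$. Its empirical revenue is at least $4\rho$ and its empirical GFT at most $C_0\widehat V$, so $\widehat U\le C_0\widehat V$ as well; hence it is $A$-localized under $\widehat D$. Its empirical payments coincide with the envelope payments. Lemma~\ref{lem:localized-transfer} (valid uniformly, so data dependence of $\widehat M$ is harmless) gives $\Rev(\widehat M;D)\ge4\rho-\rho\ge0$ and $\GFT(\widehat M;D)\ge(1-\omega)\SB-2\rho\ge(1-\alpha)\SB$ with appropriate constants. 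BIC and IIR hold because $\widehat M$ is a randomized threshold mechanism.

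\textbf{Main obstacle.} The delicate part is the $b<s$ support condition. Lemma~\ref{lem:localized-transfer} requires it, and the LP forces it, so I must confirm that the second-best optimizer (or a harmless truncation of it) satisfies it while staying WBB. I would first try to show directly that $\lambda^*\ge0$ and the monotonicity of the virtual values imply $\phi_B(b)-\phi_S(s)\le b-s<0$ when $b<s$, since $\phi_B\le b$ and $\phi_S\ge s$. That would give $q^*=0$ there immediately, making this step routine. A secondary concern is keeping the constant bookkeeping between $c_\rho$, $\omega$, $C_0$, and the transfer error consistent.
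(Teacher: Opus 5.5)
Your proposal is correct and follows essentially the same route as the paper's proof. Both use a pilot scale estimate, a mixture of the second-best canonical mechanism with the posted-spread mechanism as an $O(\SB(D))$-localized comparator, localized transfer to $\widehat D$, grid projection to certify LP feasibility, and localized transfer of the LP output (which trades only on $b\ge s$, via $\pi_B(b)\le b<s\le\pi_S(s)$) back to $D$. The one difference is minor: you observe directly that the allocation in Theorem~\ref{fact:structure-main} already vanishes on $\{b<s\}$, since $\phi_B(b)\le b$ and $\phi_S(s)\ge s$ hold by definition (no regularity needed), whereas the paper applies an explicit surplus-trimming step to a general canonical mechanism.
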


\subsection{Lower Bound}\label{ssec:bounded-mult-lower-main}
The lower bound shows that the leading factor $h/\SB(D)$ in Theorem~\ref{thm:bounded-mult-upper-main} is unavoidable.
More explicitly, for each rare-block probability $p$ we construct a pair $D_p^+,D_p^-$ with $\SB(D_p^\pm)=\Theta(ph)$.  We first prove a lower bound of $m=\Omega(1/(p\alpha^2))$ for this pair, and the displayed theorem follows by rewriting $p=\Theta(\SB(D_p^\pm)/h)$.
Take the additive hard pair from Section~\ref{ssec:add-lower-main} on the value scale $R=\Theta(h)$, and embed it in a probability $p$ active buyer block.  
Here $R$ is the internal value scale of the additive hard pair; in the formal proof we set $R=h/2$ so that the shifted support $[0,2R]^2$ lies inside $[0,h]^2$.  
Let the scale-$R$ additive pair be $H_R^\theta=X_R^\theta\times Y_R$, let $x_\theta(u)=F^{-1}_{X_R^\theta}(1-u)$ be the buyer upper-tail quantile, and define the upper-tail quantile function of the embedded buyer by
\[
b_\theta(q)=
\begin{cases}
R+x_\theta(q/p), & 0<q\le p,\\[1mm]
pR/q, & p<q\le 1
\end{cases}.
\]
Let the seller's cost be right shifted to $S=R+Y_R$.
The active block is the tail event $q\le p$; it has probability $p$, and conditional on this event the instance is exactly the additive hard pair shifted up by $R$.

Outside the active block, trade has nonpositive surplus and nonpositive virtual surplus, so the WBB separation is entirely carried by the active block. The two embedded instances are identical on this inactive block, and differ only inside the probability-$p$ active block; hence the one-sample KL divergence is the additive-pair KL multiplied by $p$.
Overall, we can show that
\[
\SB(D_p^\pm)=\Theta(pR),\qquad
\KL(D_p^+\|D_p^-), \KL(D_p^-\|D_p^+)=O(p\eta^2).
\]
Choosing $\eta=\Theta(\alpha)$ gives WBB separation $\Omega(\alpha ph)$ and one-sample KL divergence $O(p\alpha^2)$, while $\SB(D^\pm_p)=\Theta(ph)$.  The two-point testing argument therefore yields the following lower bound.

\begin{theorem}\label{thm:bounded-mult-lower-main}
There exists a universal constant $\alpha_0>0$ such that, for every
$\alpha\in(0,\alpha_0)$, any learner that, uniformly over regular product distributions on
$[0,h]^2$, outputs a BIC, IIR, and ex-ante WBB mechanism achieving a $(1-\alpha)$-approximation to $\SB(D)$ with probability at least $2/3$ requires
\[
m=\Omega\!\left(\frac{h}{\SB(D)}\frac1{\alpha^2}\right)
\]
samples from each marginal in the worst case.
\end{theorem}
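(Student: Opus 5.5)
We reduce to a two-point testing problem over the embedded pair $D_p^+,D_p^-$ described above. We use $R=h/2$, choose $\eta=c\alpha$ for a suitable constant $c$, and choose $p\in(0,1]$ so that $\SB(D_p^\pm)=\Theta(ph)$ matches the prescribed benchmark scale.

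\textbf{Step 1: properties of the embedded instances.} First we verify that $D_p^\pm$ are regular product distributions on $[0,h]^2$.
\begin{itemize}[leftmargin=*]
\item On the active block, the buyer quantile function is the shifted additive one, so regularity follows from the appendix verification for the additive pair.
\item On the inactive block $q>p$, the quantile $pR/q$ is an equal-revenue segment on $[pR,R]$. This segment has virtual value exactly $0$ in the revenue-curve sense.
\item The revenue curve $q\,b_\theta(q)$ is therefore concave, since it is constant on $(p,1]$ and concave on $(0,p]$. Matching at $q=p$ needs the slope condition that $x_\theta(1)\ge 0$ gives.
\item The seller $R+Y_R$ is a shifted uniform, hence regular.
\end{itemize}

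Next we show $\SB(D_p^\pm)=\Theta(pR)$. Outside the active block $b\le R\le s$, so any trade there has nonpositive surplus. Also $\phi_B\le 0<R\le\phi_S$ there, so trade has negative virtual surplus. Conditional on the active block, the instance is the additive pair shifted by $R$, and shifting preserves GFT and revenue. Hence an optimal mechanism trades only on the active block. Its GFT and revenue are $p$ times those of a mechanism for $H_R^\theta$, so $\SB(D_p^\theta)=p\,\SB(H_R^\theta)=\Theta(pR)$. The last equality uses the fact that the additive pair has $\SB=\Theta(R)$ by construction.

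\textbf{Step 2: separation.} This is the main obstacle, because we must transfer Lemma~\ref{lem:additive-cross-world-separation} to arbitrary mechanisms, including ones that trade outside the block. Let $M$ be BIC/IIR and ex-ante WBB under $D_p^-$.
\begin{itemize}[leftmargin=*]
\item By Lemma~\ref{lem:vs-ineq-main}, trades outside the active block contribute nonpositive virtual surplus and nonpositive GFT.
\item We therefore replace $M$ by $M'$, which zeroes the allocation outside the block and uses envelope payments. Under both worlds, $M'$ has at least the GFT of $M$, and its virtual-surplus bound on revenue is at least as large.
\item We must be careful, because WBB is guaranteed only through the virtual-surplus inequality, not through equality. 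To handle this, I would run the separation argument through the Lagrangian and virtual-surplus characterization, which I expect underlies the proof of Lemma~\ref{lem:additive-cross-world-separation}. That is, the lemma should really show that any monotone allocation $q$ with $\E_{D^-}[q(\phi_B-\phi_S)]\ge 0$ has $D^+$-GFT at most $\SB(D^+)-c\eta R$.
\item Conditioning on the block scales this statement by $p$. Feasibility of $M$ under $D_p^-$ gives nonnegative virtual surplus on the block. Hence $\GFT(M;D_p^+)\le p(\SB(H_R^+)-c\eta R)=\SB(D_p^+)-\Omega(p\eta h)$.
\item Symmetrically, we swap the roles of the two worlds.
\end{itemize}
With $\eta=c\alpha$ and the constant $c$ large relative to $\SB(H_R)/R$, the loss exceeds $\alpha\SB(D_p^+)$. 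So no single output mechanism is a valid $(1-\alpha)$-approximation on both worlds, and success on both forces the learner to test $\theta$.

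\textbf{Step 3: information bound.} The seller marginals coincide. The buyer marginals agree off the block, and on the block their conditional laws are the additive ones. By the chain rule,
\[\KL(D_p^+\|D_p^-)=p\,\KL(X_R^+\|X_R^-)=O(p\eta^2),\]
using Lemma~\ref{lem:additive-kl}. With $m$ samples the KL divergence is $O(mp\alpha^2)$. By Pinsker or Le Cam, success probability $2/3$ in both worlds requires $mp\alpha^2=\Omega(1)$. Rewriting $p=\Theta(\SB(D_p^\pm)/h)$ gives $m=\Omega\bigl(\frac{h}{\SB(D)}\frac{1}{\alpha^2}\bigr)$. The restriction $\alpha<\alpha_0$ ensures that $\eta=c\alpha\le\eta_0$.
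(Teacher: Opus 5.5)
Your proposal is correct and follows the paper's proof essentially step for step: the same rare-block embedding with an equal-revenue inactive branch, $\SB(D_p^\pm)=\Theta(pR)$, transfer of the additive separation through nonnegativity of the active-block virtual surplus, the $p$-scaled KL bound, and the two-point lemma with $R=h/2$ and $\eta=\Theta(\alpha)$. You are right to use the separation in relaxed form, and the paper's pointwise bound $\int a[\cdot]\le\int[\cdot]_+$ in fact gives it for arbitrary, not necessarily monotone, allocations, which matters because the restricted active allocation of a BIC mechanism need not be monotone; two small corrections that do not affect the conclusion: the separation is only one-directional (a $D_p^-$-optimal threshold mechanism gains revenue $\approx 2p\eta R\,B_{\rm sep}>0$ under $D_p^+$, so it is WBB there while losing nothing on $D_p^-$, and ``swapping the roles'' fails), but one direction already makes the two feasible $(1-\alpha)$-optimal classes disjoint, and the condition for concavity at the join is $R+x_\theta(1)+x_\theta'(1)\ge 0$, which holds with equality because $x_\theta(1)=0$ and $x_\theta'(1)=-R$, rather than $x_\theta(1)\ge 0$.
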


The upper and lower bounds match up to a factor $\Lambda_h(D)^2$.  The rare-block construction is unnecessary in the additive regime because the target error is absolute: a probability-$p$ block would require $p\eta h=\Theta(\eps)$, hence $\eta=\Theta(\eps/(ph))$, giving one-sample KL divergence
$O(\eps^2/(p h^2))$ and only $\Omega(p h^2/\eps^2)$ samples.


\section{Multiplicative Approximation with Unbounded MHR Distributions}\label{sec:main-mhr}

We extend the bounded multiplicative learner to unbounded instances.  The difficulty is that the bounded-support scale $h$ of Section~\ref{sec:main-bounded-mult} is absent.  Under MHR marginals and seller regularity, the learner estimates a finite cap $\widehat w$, forms a truncated instance, runs the bounded learner on that instance, and extends the learned mechanism back to the original type space.

\begin{assumption}[MHR marginals and seller regularity]\label{ass:mhr-main}
The buyer and seller marginals are supported on intervals in $[0,\infty)$, have finite means and strictly positive densities on their support interiors, and are MHR.  In addition, the seller virtual cost
$\phi_S(s):=s+\frac{F_S(s)}{f_S(s)}$
is nondecreasing on the seller support interior.
\end{assumption}

The main proof idea is to turn the unbounded instance into a bounded one without creating spurious trades in the tail. A direct cap works naturally for buyer values, but it is not safe for seller costs: replacing a very high cost by the cap could make an infeasible trade appear profitable. We therefore use a cap/sentinel truncation.
For $w>0$, define the truncated distribution $D_w$ by
\[
B_w:=\min\{B,w\},\qquad
S_w:=
\begin{cases}
S,& S\le w,\\
\top_S,& S>w,
\end{cases}
\]
where $\top_S$ is a no-trade seller sentinel type. The sentinel type prevents high seller costs from being capped downward and creating artificial surplus.  Mechanisms on $D_w$ are required to allocate zero on the seller-sentinel type.  Such a mechanism $N$ is run on the original instance through the truncation map; denote the resulting mechanism by $\Ext_w(N)$.

Write
\[
\mu(D):=\E[B]+\E[S],
\chi_\mu(D):=\frac{\mu(D)}{\SB(D)},
L_{\mu,\alpha}(D):=1+\log\frac{\chi_\mu(D)}{\alpha}.
\]
The learner uses an independent sample split to estimate the mean scale and to find a posted-spread mechanism with a lower confidence bound on its revenue.  More precisely, Appendix~\ref{sec:mhr} shows that, with high probability, it obtains $\widehat\mu=\Theta(\mu(D))$ and $\widehat R=\Omega(\SB(D)/L_{\mu,\alpha}(D))$ with $\widehat R\le \Rev(P;D)$. It then sets
\[
\widehat w:=C_{\rm cap}\widehat\mu\left(1+\log\frac{2\widehat\mu}{\alpha\widehat R}\right).
\]
For a sufficiently large universal constant $C_{\rm cap}$, these estimates together with the MHR tail bound give
\[
\SB(D_{\widehat w})\ge \left(1-\frac{\alpha}{2}\right)\SB(D),
\qquad
\frac{\widehat w}{\SB(D_{\widehat w})}
=O\!\left(\chi_\mu(D)L_{\mu,\alpha}(D)\right).
\]
Thus the bounded-support multiplicative learner of Section~\ref{sec:main-bounded-mult} can be applied to $D_{\widehat w}$ with accuracy $\alpha/2$, after which the learned mechanism is extended back to the original type space through $\Ext_{\widehat w}$. The extension preserves revenue and does not decrease GFT. This gives the following bound.

\begin{theorem}\label{thm:mhr-upper-main}
Under Assumption~\ref{ass:mhr-main}, for every product distribution $D$ with $\SB(D)>0$, every $\alpha\in(0,1/2)$, and every $\gamma\in(0,1)$, the estimated-cap learner described above and formalized in Algorithm~\ref{alg:mhr-main} with
\[
\Oe\!\left(
\frac{\chi_\mu(D)L_{\mu,\alpha}(D)^3}{\alpha^2}
\log\frac1\gamma
\right)
\]
samples from each marginal returns, with probability at least $1-\gamma$, a BIC, IIR, and ex-ante WBB mechanism $\widehat M$ satisfying $\GFT(\widehat M;D)\ge (1-\alpha)\SB(D)$.
\end{theorem}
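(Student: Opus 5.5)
The proof composes three ingredients: cap estimation from an independent split, a truncation analysis showing $D_{\widehat w}$ retains almost all of $\SB(D)$ at a scale $\widehat w=\Oe(\mu(D))$, and a call to Theorem~\ref{thm:bounded-mult-upper-main} on the truncated instance. All samples are split into three independent parts: one for estimating $\widehat\mu$, one for the posted-spread lower confidence bound $\widehat R$, and one for training. Each part is allotted failure probability $\gamma/3$.

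\textbf{Step 1: estimating the scale.} Under MHR, $B$ and $S$ are sub-exponential with parameter $O(\E[B])$ and $O(\E[S])$. A Bernstein bound then gives $\widehat\mu\in[\mu(D)/2,2\mu(D)]$ from $O(\log(1/\gamma))$ samples. Next, restrict to a geometric grid of posted-spread mechanisms (buyer price $p_B$, seller price $p_S\le p_B$) on scales up to $O(\widehat\mu\log(1/\alpha))$. There are $\mathrm{poly}(L_{\mu,\alpha})$ candidates. For each candidate, compute the empirical revenue $(p_B-p_S)\widehat\Pr[B\ge p_B]\widehat\Pr[S\le p_S]$ together with a multiplicative (Chernoff-type) confidence radius, and take $\widehat R$ to be the best lower confidence bound. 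Lemma~\ref{lem:spread-main}, applied to a truncation at scale $\Theta(\mu\log(\chi_\mu/\alpha))$, shows that some grid spread has revenue $\Omega(\SB(D)/L_{\mu,\alpha})$. A relative Chernoff bound then certifies this revenue using $O(\chi_\mu L_{\mu,\alpha}\log(\cdot/\gamma))$ samples, which fits within the stated budget. The same bound guarantees $\widehat R\le\Rev(P;D)$ for the chosen $P$.

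\textbf{Step 2: truncation loss.} Let $M^*$ be near-optimal for $D$, and run it on $D_w$: the capped buyer $\min\{B,w\}$ is treated as $B$, and the sentinel seller never trades. The GFT lost is at most the contribution of the events $\{B>w\}$ and $\{S>w\}$. By the MHR tail bound $\Pr[B>t]\le e^{-t/\E[B]+1}$, this contribution is at most $\E[B\one\{B>w\}]+\E[S\one\{S>w\}]\le O(w)e^{-w/\mu}$. However, this modified mechanism may no longer be WBB on $D_w$. To restore budget balance, mix in $P$ with weight $\omega=O(\text{deficit}/\Rev(P;D))$; this is the comparator trick from Section~\ref{ssec:add-upper-main}. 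The deficit is also bounded by tail integrals of order $w e^{-w/\mu}$, using seller regularity and Lemma~\ref{lem:vs-ineq-main} to bound the virtual-cost terms. With $w=\widehat w$ and $C_{\rm cap}$ large, both the loss and the deficit are at most $\alpha\widehat R/4$ times a constant. Since $\Rev(P;D)\ge\widehat R$, this yields $\SB(D_{\widehat w})\ge(1-\alpha/2)\SB(D)$. Because $\SB(D_{\widehat w})=\Theta(\SB(D))$ and $\widehat w=O(\mu L_{\mu,\alpha})$, the ratio satisfies $\widehat w/\SB(D_{\widehat w})=O(\chi_\mu L_{\mu,\alpha})$.

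\textbf{Step 3: bounded learning and extension.} Apply Algorithm~\ref{alg:mult-upper-main} to the training samples after mapping them through the truncation. The sentinel type is encoded as a point with zero allocation, which adds one extra LP constraint. Theorem~\ref{thm:bounded-mult-upper-main} with $h=\widehat w$ and accuracy $\alpha/2$ gives sample complexity $\Oe\bigl(\tfrac{\widehat w}{\SB(D_{\widehat w})}\tfrac{\Lambda^2}{\alpha^2}\log\tfrac1\gamma\bigr)$, where $\Lambda=O(L_{\mu,\alpha})$. This equals $\Oe(\chi_\mu L_{\mu,\alpha}^3\alpha^{-2}\log\tfrac1\gamma)$. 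One caveat: $D_{\widehat w}$ has an atom at $\widehat w$ and a sentinel atom, so it is not exactly regular with positive densities. I would rely only on the parts of that proof that use regularity through Theorem~\ref{fact:structure-main}, namely the existence of a near-optimal threshold comparator, which Step 2 supplies directly. Finally, $\Ext_{\widehat w}$ preserves BIC/IIR, since the truncation map is monotone and payments are threshold payments. It also preserves revenue and does not decrease GFT, because buyers above $\widehat w$ trade whenever the capped type does and pay the same threshold. Hence $\GFT(\widehat M;D)\ge(1-\alpha/2)^2\SB(D)\ge(1-\alpha)\SB(D)$. A union bound over the three splits completes the proof.

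The \textbf{main obstacle} is Step 2. The deficit from truncation must be controlled at scale $\alpha\SB(D)/L_{\mu,\alpha}$ rather than merely $\alpha\SB(D)$, because the mixing weight divides by the posted-spread revenue. This must be done without circularly knowing $\SB(D)$, which is why the cap is defined through the certified $\widehat R$ rather than $\SB(D)$ itself.
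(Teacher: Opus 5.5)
Your pipeline is the paper's: a mean estimate and a certified posted-spread revenue $\widehat R$, a cap $\widehat w$ calibrated through $\widehat R$, sentinel truncation, a mixture with $P$ to repair the truncation deficit, the bounded learner at accuracy $\alpha/2$, and then $\Ext_{\widehat w}$. You also correctly see why the cap must be calibrated to $\alpha\widehat R$.

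\textbf{The gap is in Step 2.} You take a near-optimal $M^*$ for $D$ and ``run it on $D_w$''. This yields a BIC mechanism on $D_w$ only if $M^*$ is an ex-post monotone (threshold) mechanism; the same is needed before the deficit can be accounted for via threshold payments or envelope revenue. A merely BIC mechanism need not stay interim-monotone once buyer types are capped and seller types above $w$ are sent to the no-trade sentinel. Theorem~\ref{fact:structure-main} gives threshold optima only for regular distributions on a bounded square, and nothing in your argument produces a near-optimal threshold mechanism for an unbounded MHR instance. Step~3 relies on the same missing object: you say Step~2 ``supplies'' the threshold comparator needed to rerun the proof of Theorem~\ref{thm:bounded-mult-upper-main} on $D_{\widehat w}$.

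This is exactly where the paper's proof of Lemma~\ref{lem:mhr-cap-preservation} does its work.
\begin{itemize}
\item On a large truncation $D_H$, it sets up the allocation relaxation with envelope revenue, using coefficient $H-\phi_S(s)$ at the buyer atom.
\item It uses $P$ as a strictly feasible point to get strong duality with a finite multiplier $\lambda^\star$.
\item It shows that a mixture of the two Lagrangian threshold rules $\one\{\ell_{\lambda^\star}>0\}$ and $\one\{\ell_{\lambda^\star}\ge0\}$ attains $\SB(D_H)$.
\item It proves $\SB(D_H)\ge\SB(D)-\zeta-o_H(1)$ via dominated convergence and a vanishing mixture with $P_H$.
\item Only after surplus-trimming does it restrict from $H$ to $w$, using the pointwise bound of Lemma~\ref{lem:mhr-cap-compare} (loss at most $2\Xi_w(D)$ in both GFT and revenue), and then mix with $P_w$.
\end{itemize}
Some such Lagrangian/duality step is unavoidable in your route as well. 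Even if you start from an arbitrary feasible $M^*$ and use only its envelope inequality, you must still show that the resulting relaxation on $D_w$ is attained by an implementable threshold mechanism.

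There are also smaller, fixable points.
\begin{itemize}
\item The mixing weight must use the revenue of the restriction $P_w$ on $D_w$, not $\Rev(P;D)$, since $P$'s prices may exceed $\widehat w$.
\item In Step~1, a geometric grid with $\mathrm{poly}(L_{\mu,\alpha})$ points cannot resolve spreads that are tiny relative to the price level. This happens, for example, when both marginals are concentrated near a common large value.
\item Your price range $O(\widehat\mu\log(1/\alpha))$ falls short of the scale $\mu\log(\chi_\mu/\alpha)$ at which you invoke Lemma~\ref{lem:spread-main}.
\end{itemize}
The paper handles the last two by searching all sample-value pairs up to $H_0=C_{\rm ps}\widehat\mu(\ell_{\rm ps}+\log(1/\alpha))$ with a uniform relative VC bound; the union-bound cost remains logarithmic.
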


The lower bound shows that this benchmark-sensitive dependence is not an artifact of the cap. Write $L_\chi(D):=1+\log\chi_\mu(D)$. Our construction hides the bounded-support hard instance inside a rare upper-tail region of an MHR distribution. Outside that region, the two candidate worlds are essentially indistinguishable and do not provide useful budget-feasible trade. Inside the rare region, the same WBB separation as in the bounded lower bound reappears, but the learner sees that region only with its small tail probability. Thus the benchmark itself is small, yet achieving a relative approximation still requires discovering the rare informative part of the market. This yields the following lower bound.


\begin{theorem}\label{thm:mhr-lower-main}
For arbitrarily large benchmark ratios $\chi$ and all sufficiently small $\alpha>0$, there are product distributions satisfying Assumption~\ref{ass:mhr-main} with
$\chi_\mu(D)=\Theta(\chi)$ such that any learner that, with probability at least $2/3$, outputs a BIC, IIR, and ex-ante WBB mechanism achieving a $(1-\alpha)$-approximation to $\SB(D)$ requires
\[
m=\Omega\!\left(
\frac{\chi_\mu(D)}{L_\chi(D)\alpha^2}
\right)
\]
samples from each marginal in the worst case.
\end{theorem}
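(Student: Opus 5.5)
We embed the bounded-support rare-block pair of \cref{ssec:bounded-mult-lower-main} into the upper tail of an MHR distribution, and then run the same two-point testing argument. Fix a target ratio $\chi$ and accuracy $\alpha$, and take $\eta=\Theta(\alpha)$. The buyer will be a base exponential-type law $B_0$ with mean $\Theta(1)$; we place the additive hard pair $H_R^\theta$ on a value scale $R$ far in the tail, at location $t$ with $\Pr[B_0>t]=p$. The seller is a shifted copy $S=t+Y_R$, with $Y_R$ uniform on a scale-$R$ interval, smoothed so that $S$ is MHR with a nondecreasing virtual cost. We need $S$ to have mean close to $t$. We will choose $R=\Theta(1)$ (the natural tail scale of the exponential) and $t=\Theta(\log(1/p))$. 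Then $\mu(D)=\Theta(t)$ while $\SB(D^\pm)=\Theta(pR)$, so
\[
\chi_\mu(D)=\Theta\bigl(\log(1/p)/p\bigr).
\]
Setting $p=\Theta(L_\chi/\chi)$ gives $\chi_\mu=\Theta(\chi)$ up to constants. This logarithm is exactly what produces the $L_\chi$ loss in the bound.

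\textbf{Key steps.}

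(i) \emph{MHR construction.} Define the buyer's density beyond $t$ as $p$ times an exponential density perturbed by $\theta\eta r(\cdot)$, where $r=r_H-r_L$ is the bump function from \cref{ssec:add-lower-main}. The perturbation has zero integral, and for small $\eta$ it keeps the hazard rate nondecreasing. The reason is that the hazard rate of an exponential is constant, so we should instead start from a base with a strictly increasing hazard, such as a Gamma(2) law or an exponential with added linear hazard. A $C^2$ perturbation of size $\eta$ then preserves monotonicity provided the hazard slope is bounded below on the block. We check the seller's MHR property and regularity in the same way.

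(ii) \emph{KL bound.} The two instances share the seller and the buyer law below $t$. As in the rare-block argument, $\KL(D^+\|D^-)=O(p\eta^2)$.

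(iii) \emph{Benchmark and separation.} We need to show that trades with $b<t$ have nonpositive surplus, because $S\ge t$ almost surely. By Lemma~\ref{lem:vs-ineq-main} they also have nonpositive virtual surplus, since the buyer's virtual value there is below $t$ and $\phi_S\ge s\ge t$. So the WBB separation is carried entirely by the tail block. Conditioning on that block, Lemma~\ref{lem:additive-cross-world-separation} transfers with both the benchmark and the separation scaled by $p$. This gives $\SB=\Theta(pR)$ and shows that any mechanism WBB under $D^-$ loses $\Omega(\eta pR)\ge 2\alpha\SB(D^+)$ under $D^+$, for a suitable constant in $\eta=\Theta(\alpha)$.

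(iv) \emph{Testing.} The separation turns any $(1-\alpha)$-learner that succeeds with probability $2/3$ into a test between $D^+$ and $D^-$. By Le Cam's inequality and Pinsker, this requires $m\cdot O(p\alpha^2)=\Omega(1)$, so $m=\Omega(1/(p\alpha^2))=\Omega(\chi/(L_\chi\alpha^2))$.

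\textbf{Main obstacle.} The hard step is (i) together with (iii). The block is no longer a sharp conditional copy of the bounded instance, because the tail density is exponentially decaying across the block rather than uniform. The buyer's virtual values inside the block therefore differ from those in \cref{ssec:add-lower-main}. We would handle this by choosing $R$ a small constant times the hazard scale, so that the exponential factor varies only by $1\pm O(R)$ across the block. The separation from Lemma~\ref{lem:additive-cross-world-separation}, which is stable under smooth regular perturbations, should then survive. If the margin is too thin, we fall back to letting $R$ be a small constant and re-proving the separation directly via the virtual-surplus characterization in Theorem~\ref{fact:structure-main}.
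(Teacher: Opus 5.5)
Steps (ii) and (iv), and your calibration $\mu=\Theta(R\log(1/p))$, $\chi_\mu=\Theta(\log(1/p)/p)$, are fine. Step (iii), however, has a genuine gap, and the repair you propose for it goes in the wrong direction.

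In your construction, the buyer's law conditional on $\{B>t\}$ is an unbounded exponential-type tail. Normalize the tail's hazard scale to one. Then only a fraction $\approx 1-e^{-R}$ of the block's mass lies in the seller's range $[t,t+R]$, and the remaining buyers sit above every seller cost. So the conditional instance is not the shifted additive pair, even approximately. Lemma~\ref{lem:additive-cross-world-separation} therefore cannot simply be ``transferred'': its proof is an explicit computation for the uniform--uniform pair, using the threshold set $T$, the weight $1/2$ on $K^-$, and the identity $L_{\rm sep}-U_{\rm sep}=\frac12B_{\rm sep}>0$. That $e^{-(b-t)}$ varies only by $1\pm O(R)$ on $[t,t+R]$ does not help.

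Shrinking $R$ makes matters worse. The buyer's spread then dominates the seller's, the first-best revenue deficit is only $O(R)$ relative to $\SB$, and the binding multiplier is $O(R)$. A first-order computation shows that the cross-world separation relative to $\SB$ then degrades polynomially in $R$. To make your route work, you would have to prove a new separation lemma for a new, non-uniform base pair with an unbounded buyer. That means checking the sign and size of the bump's first-order revenue effect on the new optimum and controlling the second-order terms; the proposal only gestures at this.

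Step (i) has a smaller problem too. With a Gamma$(2)$ base, the hazard slope at $t\approx\log(1/p)$ is $\Theta(1/\log^2(1/p))$. A bump of relative size $\eta$ therefore preserves MHR only for $\eta\lesssim R/\log^2(1/p)$, so the admissible $\alpha$ would shrink with $\chi$.

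The missing idea is to cut the buyer's tail off at the block, so that the block is an \emph{exact} shifted copy of the additive pair. The paper specifies the buyer by its upper-tail quantile:
\[
b_\theta(q)=R\log(1/q)\ \text{ for } q>p,\qquad b_\theta(q)=R\log(1/p)+x_\theta(q/p)\ \text{ for } q\le p .
\]
The first branch is an exponential with rate $1/R$ below $R\log(1/p)$. On the top $p$-quantile the buyer is exactly $R\log(1/p)+X_R^\theta$, and the seller is $R\log(1/p)+Y_R$.

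The uniform block has hazard $1/(R-z)$. This equals the exponential hazard $1/R$ at the join and then increases with slope at least $1/R^2$, uniformly in $p$. So the glued law is MHR, and the bump sits where hazard monotonicity is robust; no Gamma-type base is needed.

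On the block, the buyer virtual value and the seller virtual cost shift by the same constant:
\[
b_\theta(pu)+pu\,b_\theta'(pu)=R\log(1/p)+x_\theta(u)+u\,x_\theta'(u),\qquad \phi_S(R\log(1/p)+y)=R\log(1/p)+2y .
\]
Hence the block's GFT and virtual-surplus integrals are exactly $p$ times those of the additive pair. The benchmark $\Theta(pR)$, the separation $\Omega(p\eta R)$, and $\KL=O(p\eta^2)$ then follow verbatim from the bounded rare-block argument. Finally, the inactive virtual value $R\log(1/q)-R$ stays below $\phi_S\ge R\log(1/p)$.
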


Thus the MHR upper and lower bounds match up to polylogarithmic factors $L_{\mu,\alpha}(D)^3L_\chi(D)$.

\section*{Acknowledgments}
We used generative AI tools as an interactive aid in developing parts of the proofs, as well as in improving the presentation of the paper. All mathematical arguments and claims were independently reviewed and verified by the authors, who take full responsibility for the content of the paper.

\bibliographystyle{plainnat}
\bibliography{mybib}

@article{AzarFiatFusco2024,
  title   = {An {$\alpha$}-regret analysis of adversarial bilateral trade},
  author  = {Azar, Yossi and Fiat, Amos and Fusco, Federico},
  journal = {Artificial Intelligence},
  volume  = {337},
  pages   = {104231},
  year    = {2024},
  doi     = {10.1016/j.artint.2024.104231}
}

@inproceedings{BabaioffFreyNisan2024,
  title     = {Learning to Maximize Gains From Trade in Small Markets},
  author    = {Babaioff, Moshe and Frey, Amitai and Nisan, Noam},
  booktitle = {Proceedings of the 25th ACM Conference on Economics and Computation (EC)},
  pages     = {195},
  year      = {2024},
  doi       = {10.1145/3670865.3673463}
}

@inproceedings{BernasconiCastiglioniCelliFusco2024,
  title     = {No-regret learning in bilateral trade via global budget balance},
  author    = {Bernasconi, Martino and Castiglioni, Matteo and Celli, Andrea and Fusco, Federico},
  booktitle = {Proceedings of the 56th Annual ACM Symposium on Theory of Computing},
  pages     = {247--258},
  year      = {2024},
  doi       = {10.1145/3618260.3649653}
}

@article{BMP1963,
  title     = {Properties of probability distributions with monotone hazard rate},
  author    = {Barlow, Richard E. and Marshall, Albert W. and Proschan, Frank},
  journal   = {The Annals of Mathematical Statistics},
  volume    = {34},
  number    = {2},
  pages     = {375--389},
  year      = {1963},
  doi       = {10.1214/aoms/1177704147}
}

@inproceedings{BlumrosenMizrahi2016,
  title     = {Approximating Gains-from-Trade in Bilateral Trading},
  author    = {Blumrosen, Liad and Mizrahi, Yehonatan},
  booktitle = {Web and Internet Economics},
  series    = {Lecture Notes in Computer Science},
  volume    = {10123},
  pages     = {400--413},
  publisher = {Springer},
  year      = {2016},
  doi       = {10.1007/978-3-662-54110-4_28}
}

@article{BlumSandholmZinkevich2006,
  title     = {Online algorithms for market clearing},
  author    = {Blum, Avrim and Sandholm, Tuomas and Zinkevich, Martin},
  journal   = {Journal of the ACM},
  volume    = {53},
  number    = {5},
  pages     = {845--879},
  year      = {2006},
  doi       = {10.1145/1183907.1183913}
}

@inproceedings{BCWZ2017,
  title     = {Approximating gains from trade in two-sided markets via simple mechanisms},
  author    = {Brustle, Johannes and Cai, Yang and Wu, Fa and Zhao, Mingfei},
  booktitle = {Proceedings of the 2017 ACM Conference on Economics and Computation},
  pages     = {589--590},
  year      = {2017},
  doi       = {10.1145/3033274.3085148},
  note      = {Full version: arXiv:1706.04637}
}

@inproceedings{CaiWu2023,
  title     = {On the Optimal Fixed-Price Mechanism in Bilateral Trade},
  author    = {Cai, Yang and Wu, Jinzhao},
  booktitle = {Proceedings of the 55th Annual ACM Symposium on Theory of Computing},
  pages     = {737--750},
  year      = {2023},
  doi       = {10.1145/3564246.3585171}
}

@inproceedings{LiuRenWang2023,
  title     = {Improved Approximation Ratios of Fixed-Price Mechanisms in Bilateral Trades},
  author    = {Liu, Zhengyang and Ren, Zeyu and Wang, Zihe},
  booktitle = {Proceedings of the 55th Annual ACM Symposium on Theory of Computing},
  pages     = {751--760},
  year      = {2023},
  doi       = {10.1145/3564246.3585160}
}

@inproceedings{CastiglioniLunghiMarchesi2026,
  title     = {The Sample Complexity of Uniform Approximation for Multi-Dimensional {CDFs} and Fixed-Price Mechanisms},
  author    = {Castiglioni, Matteo and Lunghi, Anna and Marchesi, Alberto},
  booktitle = {Proceedings of the 58th Annual ACM Symposium on Theory of Computing},
  pages     = {1353--1364},
  year      = {2026},
  doi       = {10.1145/3798129.3800845}
}

@article{CesaBianchiEtAl2024a,
  title     = {Bilateral trade: A regret minimization perspective},
  author    = {Cesa-Bianchi, Nicol{\`o} and Cesari, Tommaso and Colomboni, Roberto and Fusco, Federico and Leonardi, Stefano},
  journal   = {Mathematics of Operations Research},
  volume    = {49},
  number    = {1},
  pages     = {171--203},
  year      = {2024},
  doi       = {10.1287/moor.2023.1351}
}

@article{CesaBianchiEtAl2024b,
  title   = {Regret analysis of bilateral trade with a smoothed adversary},
  author  = {Cesa-Bianchi, Nicol{\`o} and Cesari, Tommaso and Colomboni, Roberto and Fusco, Federico and Leonardi, Stefano},
  journal = {Journal of Machine Learning Research},
  volume  = {25},
  number  = {234},
  pages   = {1--36},
  year    = {2024},
  url     = {https://jmlr.org/papers/v25/23-1627.html}
}

@article{ChenJinLuZhang2025,
  title   = {Tight Regret Bounds for Fixed-Price Bilateral Trade},
  author  = {Chen, Houshuang and Jin, Yaonan and Lu, Pinyan and Zhang, Chihao},
  journal = {arXiv preprint arXiv:2504.04349},
  year    = {2025},
  note    = {To appear in ICALP 2026}
}

@inproceedings{DHP2016,
  title     = {The Sample Complexity of Auctions with Side Information},
  author    = {Devanur, Nikhil R. and Huang, Zhiyi and Psomas, Christos-Alexandros},
  booktitle = {Proceedings of the 48th Annual ACM Symposium on Theory of Computing},
  pages     = {426--439},
  year      = {2016},
  doi       = {10.1145/2897518.2897553},
  note      = {Full version: arXiv:1511.02296}
}

@inproceedings{DiGregorioDuttingFuscoSchwiegelshohn2025,
  title     = {Nearly Tight Regret Bounds for Profit Maximization in Bilateral Trade},
  author    = {Di Gregorio, Simone and D{\"u}tting, Paul and Fusco, Federico and Schwiegelshohn, Chris},
  booktitle = {Proceedings of the 66th IEEE Annual Symposium on Foundations of Computer Science (FOCS)},
  pages     = {1570--1594},
  year      = {2025},
  doi       = {10.1109/FOCS63196.2025.00083}
}

@article{DiGregorioFuscoLeonardiSchwiegelshohn2026,
  title   = {Private Learning in Bilateral Trade},
  author  = {Di Gregorio, Simone and Fusco, Federico and Leonardi, Stefano and Schwiegelshohn, Chris},
  journal = {arXiv preprint arXiv:2606.02050},
  year    = {2026}
}

@inproceedings{DengMaoSivanWangWu2025,
  title     = {Approximately Efficient Bilateral Trade with Samples},
  author    = {Deng, Yuan and Mao, Jieming and Sivan, Balasubramanian and Wang, Kangning and Wu, Jinzhao},
  booktitle = {Proceedings of the 26th ACM Conference on Economics and Computation},
  pages     = {206--223},
  year      = {2025},
  doi       = {10.1145/3736252.3742519}
}

@inproceedings{DMSW2022,
  title     = {Approximately efficient bilateral trade},
  author    = {Deng, Yuan and Mao, Jieming and Sivan, Balasubramanian and Wang, Kangning},
  booktitle = {Proceedings of the 54th Annual ACM Symposium on Theory of Computing},
  pages     = {718--721},
  year      = {2022},
  doi       = {10.1145/3519935.3520054}
}

@inproceedings{DobzinskiShaulker2024,
  title     = {Bilateral Trade with Correlated Values},
  author    = {Dobzinski, Shahar and Shaulker, Ariel},
  booktitle = {Proceedings of the 56th Annual ACM Symposium on Theory of Computing},
  pages     = {237--246},
  year      = {2024},
  doi       = {10.1145/3618260.3649659}
}

@inproceedings{DobzinskiEdenGoldnerShaulkerTsilivis2025,
  title     = {Bilateral Trade with Interdependent Values: Information vs. Approximation},
  author    = {Dobzinski, Shahar and Eden, Alon and Goldner, Kira and Shaulker, Ariel and Tsilivis, Thodoris},
  booktitle = {Proceedings of the 26th ACM Conference on Economics and Computation},
  pages     = {641--665},
  year      = {2025},
  doi       = {10.1145/3736252.3742607}
}

@inproceedings{DuttingFuscoLazosLeonardiReiffenhauser2021,
  title     = {Efficient Two-Sided Markets with Limited Information},
  author    = {D{\"u}tting, Paul and Fusco, Federico and Lazos, Philip and Leonardi, Stefano and Reiffenh{\"a}user, Rebecca},
  booktitle = {Proceedings of the 53rd Annual ACM SIGACT Symposium on Theory of Computing},
  pages     = {1452--1465},
  year      = {2021},
  doi       = {10.1145/3406325.3451076}
}

@inproceedings{Fei2022,
  title     = {Improved approximation to first-best gains-from-trade},
  author    = {Fei, Yumou},
  booktitle = {Web and Internet Economics},
  series    = {Lecture Notes in Computer Science},
  volume    = {13778},
  pages     = {204--218},
  publisher = {Springer},
  year      = {2022},
  doi       = {10.1007/978-3-031-22832-2_12}
}

@inproceedings{FengMaPengWan2026,
  author    = {Yiding Feng and Mengfan Ma and Bo Peng and Zongqi Wan},
  title     = {Searching for Optimal Prices in Two-Sided Markets},
  booktitle = {Proceedings of the 27th ACM Conference on Economics and Computation (EC)},
  year      = {2026},
}

@inproceedings{GaucherBernasconiCastiglioniCelliPerchet2025,
  title     = {Feature-Based Online Bilateral Trade},
  author    = {Gaucher, Solenne and Bernasconi, Martino and Castiglioni, Matteo and Celli, Andrea and Perchet, Vianney},
  booktitle = {The Thirteenth International Conference on Learning Representations},
  year      = {2025}
}

@inproceedings{GuoHuangZhang2019,
  title     = {Settling the sample complexity of single-parameter revenue maximization},
  author    = {Guo, Chenghao and Huang, Zhiyi and Zhang, Xinzhi},
  booktitle = {Proceedings of the 51st Annual ACM Symposium on Theory of Computing},
  pages     = {662--673},
  year      = {2019},
  doi       = {10.1145/3313276.3316325}
}

@article{HartlineWang2025,
  title   = {A Geometric Analysis of Gains from Trade},
  author  = {Hartline, Jason and Wang, Kangning},
  journal = {arXiv preprint arXiv:2508.06469},
  year    = {2025}
}

@article{Jin2026,
  title   = {Tight Regret Bounds for Bilateral Trade under Semi Feedback},
  author  = {Jin, Yaonan},
  journal = {arXiv preprint arXiv:2601.16412},
  year    = {2026}
}

@inproceedings{KangPerniceVondrak2022,
  title        = {Fixed-Price Approximations in Bilateral Trade},
  author       = {Kang, Zi Yang and Pernice, Francisco and Vondr{\'a}k, Jan},
  booktitle    = {Proceedings of the 2022 Annual ACM-SIAM Symposium on Discrete Algorithms (SODA)},
  pages        = {2964--2985},
  year         = {2022},
  organization = {SIAM},
  doi          = {10.1137/1.9781611977073.115}
}

@article{KunimotoZhang2026,
  title   = {Efficient Bilateral Trade with Interdependent Values: The Use of Two-Stage Mechanisms},
  author  = {Kunimoto, Takashi and Zhang, Cuiling},
  journal = {Journal of Mathematical Economics},
  volume  = {122},
  pages   = {103191},
  year    = {2026},
  doi     = {10.1016/j.jmateco.2025.103191}
}

@article{LQRW2026,
  title   = {Second-Best Bilateral Trade is {$1/2$} Efficient},
  author  = {Liu, Zhengyang and Qin, Ying and Ren, Zeyu and Wang, Zihe},
  journal = {arXiv preprint arXiv:2606.03849},
  year    = {2026}
}

@inproceedings{LunghiCastiglioniMarchesi2026,
  title        = {Better Regret Rates in Bilateral Trade via Sublinear Budget Violation},
  author       = {Lunghi, Anna and Castiglioni, Matteo and Marchesi, Alberto},
  booktitle    = {Proceedings of the 2026 Annual ACM-SIAM Symposium on Discrete Algorithms (SODA)},
  pages        = {6494--6536},
  year         = {2026},
  organization = {SIAM},
  doi          = {10.1137/1.9781611978971.233}
}

@inproceedings{LunghiPiccinatoCastiglioniMarchesi2026,
  title     = {A Stronger Benchmark for Online Bilateral Trade: From Fixed Prices to Distributions},
  author    = {Lunghi, Anna and Piccinato, Mattia and Castiglioni, Matteo and Marchesi, Alberto},
  booktitle = {Proceedings of the 43rd International Conference on Machine Learning (ICML)},
  year      = {2026}
}

@article{DvoretzkyKieferWolfowitz1956,
  title   = {Asymptotic Minimax Character of the Sample Distribution Function and of the Classical Multinomial Estimator},
  author  = {Dvoretzky, Aryeh and Kiefer, Jack and Wolfowitz, Jacob},
  journal = {The Annals of Mathematical Statistics},
  volume  = {27},
  number  = {3},
  pages   = {642--669},
  year    = {1956},
  doi     = {10.1214/aoms/1177728174}
}

@article{Massart1990,
  title   = {The tight constant in the {Dvoretzky--Kiefer--Wolfowitz} inequality},
  author  = {Massart, Pascal},
  journal = {The Annals of Probability},
  volume  = {18},
  number  = {3},
  pages   = {1269--1283},
  year    = {1990},
  doi     = {10.1214/aop/1176990746}
}

@article{McAfee2008,
  title   = {The gains from trade under fixed price mechanisms},
  author  = {McAfee, R. Preston},
  journal = {Applied Economics Research Bulletin},
  volume  = {1},
  number  = {1},
  pages   = {1--10},
  year    = {2008}
}

@article{Myerson1981,
  title   = {Optimal auction design},
  author  = {Myerson, Roger B.},
  journal = {Mathematics of Operations Research},
  volume  = {6},
  number  = {1},
  pages   = {58--73},
  year    = {1981},
  doi     = {10.1287/moor.6.1.58}
}

@article{MyersonSatterthwaite1983,
  title     = {Efficient mechanisms for bilateral trading},
  author    = {Myerson, Roger B. and Satterthwaite, Mark A.},
  journal   = {Journal of Economic Theory},
  volume    = {29},
  number    = {2},
  pages     = {265--281},
  year      = {1983},
  doi       = {10.1016/0022-0531(83)90048-0}
}

@article{Panchenko2003,
  title   = {Symmetrization approach to concentration inequalities for empirical processes},
  author  = {Panchenko, Dmitriy},
  journal = {The Annals of Probability},
  volume  = {31},
  number  = {4},
  pages   = {2068--2081},
  year    = {2003},
  doi     = {10.1214/aop/1068646378}
}

@article{Stanley1986,
  title   = {Two poset polytopes},
  author  = {Stanley, Richard P.},
  journal = {Discrete \& Computational Geometry},
  volume  = {1},
  number  = {1},
  pages   = {9--23},
  year    = {1986},
  doi     = {10.1007/BF02187680}
}

@book{Tsybakov2009,
  title     = {Introduction to Nonparametric Estimation},
  author    = {Tsybakov, Alexandre B.},
  series    = {Springer Series in Statistics},
  publisher = {Springer},
  address   = {New York},
  year      = {2009},
  doi       = {10.1007/b13794},
  isbn      = {9780387790510}
}

\clearpage
\appendix

\section{Threshold Mechanisms on Grids}\label{sec:prelim}

\subsection{Second-Best Threshold Structure}
\begin{proof}[Proof of Theorem~\ref{fact:structure-main}]
The second-best characterization of \citet{MyersonSatterthwaite1983,BCWZ2017} gives an optimal allocation that, for some $\lambda^*\ge0$, maximizes pointwise
\[
q\Big((b-s)+\lambda^*(\phi_B(b)-\phi_S(s))\Big),
\qquad q\in[0,1].
\]
Hence, outside a null tie set,
\[
q^*(b,s)=\one\{G(b)\ge H(s)\},
\qquad
G(b):=b+\lambda^*\phi_B(b),
\quad
H(s):=s+\lambda^*\phi_S(s).
\]
Regularity makes $G$ and $H$ nondecreasing, so $q^*$ is nondecreasing in $b$ and nonincreasing in $s$, hence is a canonical threshold allocation.  The tie set $\{G(b)=H(s)\}$ is a monotone graph and has $D$-measure zero because both marginals have densities.  Myerson's payment identity~\citep{Myerson1981} implements $q^*$ as a BIC and IIR mechanism, and the second-best characterization gives ex-ante WBB and value $\SB(D)$.
\end{proof}

\subsection{Finite-grid implementation}
\begin{lemma}[\citealp{Myerson1981}]\label{lem:fin-impl-main}
On finite buyer and seller grids
\[
B=\{\beta_0<\cdots<\beta_K\},
\qquad
S=\{\sigma_0<\cdots<\sigma_L\},
\]
every allocation $q$ that is nondecreasing in the buyer report and nonincreasing in the seller report is implemented by
\[
x_B(\beta_i,\sigma_j)
=\beta_iq(\beta_i,\sigma_j)-\sum_{t=0}^{i-1}(\beta_{t+1}-\beta_t)q(\beta_t,\sigma_j)
\]
and
\[
x_S(\beta_i,\sigma_j)
=\sigma_jq(\beta_i,\sigma_j)+\sum_{t=j+1}^{L}(\sigma_t-\sigma_{t-1})q(\beta_i,\sigma_t).
\]
The lowest buyer type and the highest seller type obtain zero utility.  Thus the mechanism is DSIC and ex-post IR on the grid, hence BIC and IIR.  Integral monotone allocations are deterministic threshold rules; fractional allocations used below are implemented as report-independent mixtures of integral rules.
\end{lemma}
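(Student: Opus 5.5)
We must prove Lemma fin-impl-main: the Myerson payments on the finite grid give a DSIC and ex-post IR mechanism. The argument is the standard discrete envelope calculation, run separately for each side with the other report fixed.

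\textbf{Buyer side.} Fix a seller report $\sigma_j$ and write $a_t:=q(\beta_t,\sigma_j)$, which is nondecreasing in $t$. Under truthful reporting, a buyer of type $\beta_i$ gets utility
\[
u_B(i):=\beta_i a_i-x_B(\beta_i,\sigma_j)=\sum_{t=0}^{i-1}(\beta_{t+1}-\beta_t)a_t\ \ge 0 .
\]
This gives ex-post IR, with $u_B(0)=0$. If the buyer instead reports $\beta_k$, the utility is
\[
\beta_i a_k-x_B(\beta_k,\sigma_j)=u_B(k)+(\beta_i-\beta_k)a_k .
\]
We need $u_B(i)-u_B(k)\ge(\beta_i-\beta_k)a_k$.
\begin{itemize}
\item For $k<i$: $u_B(i)-u_B(k)=\sum_{t=k}^{i-1}(\beta_{t+1}-\beta_t)a_t\ge a_k\sum_{t=k}^{i-1}(\beta_{t+1}-\beta_t)=(\beta_i-\beta_k)a_k$, since $a_t\ge a_k$ for $t\ge k$.
\item For $k>i$: $u_B(k)-u_B(i)=\sum_{t=i}^{k-1}(\beta_{t+1}-\beta_t)a_t\le a_k(\beta_k-\beta_i)$, since $a_t\le a_k$ for $t<k$. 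Rearranging gives the same inequality.
\end{itemize}

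\textbf{Seller side.} This is symmetric. Fix $\beta_i$ and write $c_t:=q(\beta_i,\sigma_t)$, which is nonincreasing in $t$. Truthful utility is
\[
u_S(j)=x_S-\sigma_jc_j=\sum_{t=j+1}^{L}(\sigma_t-\sigma_{t-1})c_t\ge 0,
\]
with $u_S(L)=0$. Deviating to $\sigma_k$ yields $u_S(k)+(\sigma_k-\sigma_j)c_k$. Comparing the telescoping sums as above, using $c_t\le c_k$ for $t>k$ and $c_t\ge c_k$ for $t\le k$, shows the deviation is never profitable.

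Hence the mechanism is DSIC and ex-post IR. Taking expectations over the opponent's report (any distribution on the grid) preserves both properties, so it is BIC and IIR.

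\textbf{Integral and fractional allocations.} A $\{0,1\}$ monotone allocation trades iff $i\ge i^*(j)$, equivalently $j\le j^*(i)$. The payments above reduce to the threshold prices $\beta_{i^*(j)}$ and $\sigma_{j^*(i)}$, which is exactly a deterministic threshold rule.

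A fractional monotone $q$ can be written as a convex combination of integral monotone allocations, because the order polytope has integral vertices. The payments are linear in $q$. Therefore the report-independent mixture of the integral threshold mechanisms has the same interim allocation and payments as $q$, and BIC and IIR are preserved under such mixing.

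The only delicate point is matching the discrete sum conventions with the threshold prices at the boundary indices. This is bookkeeping, not a real obstacle.
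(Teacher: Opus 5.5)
Your proposal is correct and follows essentially the argument the paper relies on. The paper gives no separate proof: it defers to Myerson's (1981) payment identity and the remark that the lowest buyer and highest seller types receive zero utility, and your telescoping comparisons on each side are the standard discrete envelope verification behind that citation, while your treatment of integral rules (payments reduce to threshold prices) and of fractional allocations (integrality of the order polytope) matches how the paper uses the lemma, e.g.\ in Lemma~\ref{lem:lp}.
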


\begin{lemma}\label{lem:grid-step-extension}
Let $q:B\times S\to\{0,1\}$ be monotone on the grids above.  Define
\[
\pi_B(b):=\max\{\beta_i:\beta_i\le b\},
\qquad
\pi_S(s):=\min\{\sigma_j:\sigma_j\ge s\},
\]
and, on $[\beta_0,\beta_K]\times[\sigma_0,\sigma_L]$,
\[
(\operatorname{Step}_{B,S}q)(b,s):=q(\pi_B(b),\pi_S(s)).
\]
Then $\operatorname{Step}_{B,S}q$ is monotone, agrees with $q$ on the grid, admits threshold representations
\[
\one\{b\ge\tau_B(s)\}=\operatorname{Step}_{B,S}q(b,s)=\one\{s\le\tau_S(b)\},
\]
and its threshold payments agree on grid profiles with the envelope payments in Lemma~\ref{lem:fin-impl-main}.  Consequently, componentwise stepwise extension preserves DSIC and ex-post IR for report-independent mixtures and leaves all expectations unchanged under distributions supported on the grid.
\end{lemma}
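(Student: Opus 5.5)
The claim to prove is Lemma~\ref{lem:grid-step-extension}. I will verify its four assertions in order: monotonicity and agreement on the grid, the threshold representations, agreement of payments on grid profiles, and then the consequences.

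\emph{Monotonicity and agreement.} The map $\pi_B$ is nondecreasing in $b$ and $\pi_S$ is nondecreasing in $s$. Since $q$ is nondecreasing in its first argument and nonincreasing in its second, the composition $q(\pi_B(b),\pi_S(s))$ is nondecreasing in $b$ and nonincreasing in $s$. On grid points, $\pi_B(\beta_i)=\beta_i$ and $\pi_S(\sigma_j)=\sigma_j$, so the extension agrees with $q$ there.

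\emph{Threshold representations.} For each $s$, set $j(s)$ to be the index with $\sigma_{j(s)}=\pi_S(s)$, and let
\[
\tau_B(s):=\min\{\beta_i: q(\beta_i,\sigma_{j(s)})=1\},
\]
with $\tau_B(s)=+\infty$ if this set is empty. By monotonicity in $i$, we have $q(\beta_i,\sigma_{j(s)})=1$ if and only if $\beta_i\ge\tau_B(s)$. Because $\tau_B(s)$ is itself a grid point, $\pi_B(b)\ge\tau_B(s)$ holds if and only if $b\ge\tau_B(s)$, which gives the first representation. Symmetrically, for each $b$ let $i(b)$ be the index with $\beta_{i(b)}=\pi_B(b)$, and set
\[
\tau_S(b):=\max\{\sigma_j:q(\beta_{i(b)},\sigma_j)=1\},
\]
with $\tau_S(b)=-\infty$ if empty. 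Since $\tau_S(b)$ is a grid point, $\pi_S(s)\le\tau_S(b)$ holds if and only if $s\le\tau_S(b)$. Both thresholds are monotone because of the monotonicity of $q$. The boundary conventions from Section~\ref{sec:main-model} handle the empty cases.

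\emph{Payments.} This is the only step needing care. The threshold payment charges the buyer $\tau_B(\sigma_j)$ when trade occurs, and zero otherwise. The envelope payment on the grid can be rewritten as
\[
x_B(\beta_i,\sigma_j)=\beta_iq_{ij}-\sum_{t<i}(\beta_{t+1}-\beta_t)q_{tj}.
\]
If $q_{ij}=0$, then $q_{tj}=0$ for all $t<i$ by monotonicity, so both payments are zero. If $q_{ij}=1$, let $i_0$ be the first index with $q_{i_0 j}=1$, so that $\beta_{i_0}=\tau_B(\sigma_j)$. The sum then telescopes over $t=i_0,\dots,i-1$ to $\beta_i-\beta_{i_0}$, hence $x_B(\beta_i,\sigma_j)=\beta_i-(\beta_i-\beta_{i_0})=\tau_B(\sigma_j)$. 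The seller payment is symmetric: its sum telescopes over $t=j+1,\dots,j_1$, where $j_1$ is the last index with $q_{i t}=1$, giving $x_S(\beta_i,\sigma_j)=\sigma_{j_1}=\tau_S(\beta_i)$. For this I need the identity $\tau_S(\beta_i)=\sigma_{j_1}$ at grid points, which follows directly from the definition of $\tau_S$.

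\emph{Consequences.} A deterministic threshold rule with threshold payments is DSIC and ex-post IR on all of $[\beta_0,\beta_K]\times[\sigma_0,\sigma_L]$ by the standard argument: a trader pays or receives the critical value, independent of their own report. A report-independent mixture of such rules is therefore also DSIC and ex-post IR componentwise, and hence in aggregate. Finally, a distribution supported on the grid only evaluates the extension at grid points, where both the allocation and the payments coincide with the grid mechanism. So all expectations, in particular GFT and revenue, are unchanged.
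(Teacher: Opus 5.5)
Your proposal is correct and follows the same route as the paper's proof. Monotonicity comes from the nondecreasing coordinate projections. The inclusive grid-point thresholds come from the upper/lower-interval structure, the threshold payments agree with the envelope payments at grid profiles, and DSIC, ex-post IR and grid-supported expectations extend componentwise to mixtures. You additionally write out the telescoping computation (e.g.\ $x_B(\beta_i,\sigma_j)=\beta_{i_0}=\tau_B(\sigma_j)$) that the paper states in a single sentence.
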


\begin{proof}
Monotonicity follows because both coordinate projections are nondecreasing and the seller projection rounds upward.  For fixed $s$, traded buyer reports form an upper interval; for fixed $b$, traded seller reports form a lower interval.  Taking inclusive endpoints gives the displayed threshold rules. At grid profiles these critical thresholds are exactly the adjacent grid thresholds in the discrete envelope formulas, so payments agree with those in Lemma~\ref{lem:fin-impl-main}.  The statements for mixtures and grid-supported expectations follow componentwise.
\end{proof}

\xhdr{Threshold convention.}
Deterministic threshold allocations use inclusive events $\{b\ge\tau_B(s)\}$ and $\{s\le\tau_S(b)\}$. Thus buyer threshold probabilities use the closed upper tail $\Pr[B\ge t]=1-F_B(t^-)$, seller threshold probabilities use $F_S(t)$, and boundary thresholds follow the conventions $\Pr[B\ge+\infty]=0$, $\Pr[B\ge-\infty]=1$, $F_S(-\infty)=0$, and $F_S(+\infty)=1$.  At jump points, we use left-continuous buyer thresholds and right-continuous seller thresholds.

\section{Proofs for Section~\ref{sec:main-additive}}\label{sec:bsu}

\subsection{Proof of Lemma~\ref{lem:add-transfer-main}}

\begin{proof}
It suffices to prove the statement for deterministic canonical threshold mechanisms; the randomized case follows by linearity.  We use the following one-dimensional estimate.  If two probability measures on $[0,h]$ have cdfs within distance $\delta$, then their integrals against any bounded-variation function $\psi$ differ by at most $\operatorname{Var}(\psi)\delta$, where representatives are chosen consistently with the threshold convention.  Also, if $f$ is monotone and $[0,h]$-valued and $g$ is monotone and $[0,1]$-valued, then
\[
\operatorname{Var}(fg)
\le \|f\|_\infty\operatorname{Var}(g)+\|g\|_\infty\operatorname{Var}(f)
\le 2h .
\]

Let $M$ be a deterministic canonical threshold mechanism with threshold maps $\tau_B,\tau_S$. Write
\[
\overline F_B(t):=\Pr[B\ge t].
\]
Then
\[
\GFT(M;D)
=
\int_0^h bF_S(\tau_S(b))\,dF_B(b)
-
\int_0^h s\overline F_B(\tau_B(s))\,dF_S(s).
\]
Consider the first term. Replacing $F_S$ by $\widehat F_S$ inside the integrand changes the value by at most $h\delta$.  After this replacement, the integrand $b\widehat F_S(\tau_S(b))$ is the product of a monotone $[0,h]$-valued function and a monotone $[0,1]$-valued function, and hence has variation at most $2h$.  Replacing the outer marginal measure $dF_B$ by $d\widehat F_B$ therefore changes the value by at most $2h\delta$. Hence this term changes by at most $3h\delta$.  Applying the same argument to
$s\overline F_B(\tau_B(s))$ gives the same bound for the second term. Thus
\[
|\GFT(M;D)-\GFT(M;\widehat D)|\le 6h\delta .
\]

For revenue, the threshold-payment formula gives
\[
\Rev(M;D)
=
\int_0^h p_B(s)\overline F_B(\tau_B(s))\,dF_S(s)
-
\int_0^h p_S(b)F_S(\tau_S(b))\,dF_B(b),
\]
where $p_B(s)=\tau_B(s)$ when the threshold is finite, and $p_B(s)=h$ under the no-trade boundary convention.  This convention only keeps the integrand bounded; the associated trade probability is zero except possibly on a boundary of zero true mass.  Similarly, $p_S(b)=\tau_S(b)$ when the threshold is finite and $p_S(b)=0$ under the no-trade boundary convention. Each integrand is again the product of a monotone $[0,h]$-valued function and a monotone $[0,1]$-valued function.  The same argument gives
\[
|\Rev(M;D)-\Rev(M;\widehat D)|\le 6h\delta .
\]
Combining the two estimates proves
\[
|\GFT(M;D)-\GFT(M;\widehat D)|
+
|\Rev(M;D)-\Rev(M;\widehat D)|
\le 12h\delta .
\]
This is the claimed $O(h\delta)$ bound.
\end{proof}

\subsection{Proof of Lemma~\ref{lem:projection}}
\begin{proof}
Let $M$ be a deterministic canonical threshold mechanism with threshold maps $\tau_B,\tau_S$, and let $M_{\rm emp}:=\Pi_{\widehat D}M$.

By definition, the finite-grid allocation of $M_{\rm emp}$ agrees with the allocation of $M$ on every empirical profile $(\beta_i,\sigma_j)\in B_{\rm emp}\times S_{\rm emp}$.  Since $\widehat D$ is supported on these profiles, the empirical GFT is unchanged:
\[
\GFT(M_{\rm emp};\widehat D)=\GFT(M;\widehat D).
\]

It remains to compare empirical revenue.  The finite-grid allocation rounds each finite buyer threshold up to the smallest traded buyer grid point, hence to a value at least $\tau_B(\sigma_j)$, and rounds each finite seller threshold down to the largest traded seller grid point, hence to a value at most $\tau_S(\beta_i)$.  Follow the boundary threshold convention.  Therefore, on every traded empirical profile, the buyer payment weakly increases and the seller payment weakly decreases, giving
\[
\Rev(M_{\rm emp};\widehat D)\ge \Rev(M;\widehat D).
\]
The randomized case follows componentwise.
\end{proof}

\subsection{Proof of Lemma~\ref{lem:spread-main}}

\begin{proof}
Let
\[
R:=\sup_P \Rev(P;D),\qquad W:=\FB(D),
\]
where the supremum is over posted-spread mechanisms.  Since $D$ is supported on $[0,h]^2$, we have $0<W\le h$.  If $R=0$, then
\[
F_S(s)\sup_{y\ge s}(y-s)\overline F_B(y)=0
\]
for every $s$, and hence $A_S(D)=0$.  The symmetric argument gives $A_B(D)=0$, contradicting the delegated-pricing ratio below because $W>0$.  Thus $R>0$.  Since posted-spread revenue is pointwise bounded by realized GFT, $R\le W$.

For every seller type $s$,
\[
R\ge F_S(s)\sup_{y\ge s}(y-s)\overline F_B(y).
\]
Let $Q_S(a):=\inf\{s:F_S(s)\ge a\}$ and
\[
g(a):=\sup_{y\ge Q_S(a)}(y-Q_S(a))\overline F_B(y).
\]
Then $g(a)\le R/a$ and $g(a)\le h$. Hence
\[
A_S(D):=\int_0^1 g(a)\,da
\le \int_0^1 \min\{h,R/a\}\,da
\le R\left(1+\log\frac hR\right).
\]
The symmetric buyer-side argument gives
\[
A_B(D)\le R\left(1+\log\frac hR\right).
\]
By the delegated-pricing ratio of \citet[Theorem~1.1]{Fei2022}, 
\[
W\le 3.15\max\{A_S(D),A_B(D)\}
\le 3.15R\left(1+\log\frac hR\right).
\]
Set $z:=R/W$ and $\Lambda_0:=1+\log(h/W)$.  Then
\[
1\le 3.15z\left(\Lambda_0+\log\frac1z\right).
\]
Since $u\mapsto u(\Lambda_0+\log(1/u))$ is nondecreasing on $(0,1]$, this implies
$z\ge c/\Lambda_0$ for a universal constant $c>0$. Therefore
\[
R\ge c\frac{\FB(D)}{1+\log(h/\FB(D))}
\ge c\frac{\SB(D)}{1+\log(h/\SB(D))},
\]
using $\SB(D)\le \FB(D)\le h$.

Taking a posted-spread mechanism with revenue at least $R/2$ gives the stated revenue bound. Finally, a posted-spread mechanism trades only when the realized GFT dominates the price spread, so $\GFT(P;D)\ge \Rev(P;D)\ge 0$.
\end{proof}

\subsection{Proof of Theorem~\ref{thm:add-upper-main}}

\begin{lemma}[\citealp{DvoretzkyKieferWolfowitz1956,Massart1990}] \label{lem:marginal-kolmogorov}
Let $\widehat D=\widehat D_B\times \widehat D_S$ be formed from $m$ independent samples from each marginal of
$D=D_B\times D_S$.  Then, for every $\delta>0$,
\[
\Pr\!\left(\Delta_{\rm K}(\widehat D,D)>\delta\right)
\le 4e^{-2m\delta^2}.
\]
\end{lemma}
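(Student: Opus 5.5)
The plan is to reduce the statement to the one-dimensional Dvoretzky--Kiefer--Wolfowitz inequality with Massart's tight constant, applied separately to each marginal, and then take a union bound. Recall that $\Delta_{\rm K}(\widehat D,D)=\max\{\|\widehat F_B-F_B\|_\infty,\|\widehat F_S-F_S\|_\infty\}$, so
\[
\{\Delta_{\rm K}(\widehat D,D)>\delta\}\subseteq\{\|\widehat F_B-F_B\|_\infty>\delta\}\cup\{\|\widehat F_S-F_S\|_\infty>\delta\}.
\]
It therefore suffices to show that each event on the right has probability at most $2e^{-2m\delta^2}$. Note that $\widehat F_B$ uses only the $m$ i.i.d.\ buyer samples and $\widehat F_S$ uses only the $m$ i.i.d.\ seller samples. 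The union bound needs no independence between the two sides.

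For a single marginal, say the buyer, I will invoke the two-sided DKW inequality of \citet{DvoretzkyKieferWolfowitz1956} with Massart's constant \citep{Massart1990}:
\[
\Pr\Bigl(\sup_t|\widehat F_B(t)-F_B(t)|>\delta\Bigr)\le 2e^{-2m\delta^2}.
\]
Massart states this for continuous $F$. Here the marginals have densities, so this already applies; for completeness I will also note the general case. Write $b_r=F_B^{-1}(U_r)$ with $U_r$ i.i.d.\ uniform on $[0,1]$, using the generalized inverse. Then $\{b_r\le t\}=\{U_r\le F_B(t)\}$, so $\widehat F_B(t)=\widehat G(F_B(t))$, where $\widehat G$ is the empirical cdf of the $U_r$. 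Hence $\sup_t|\widehat F_B(t)-F_B(t)|\le\sup_{u\in[0,1]}|\widehat G(u)-u|$, and the uniform case gives the bound for arbitrary $F_B$.

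The seller side is identical and also contributes $2e^{-2m\delta^2}$. Summing the two bounds gives $4e^{-2m\delta^2}$. There is no real obstacle here. The only point needing care is the reduction for non-continuous cdfs, which the quantile coupling above handles. Since the marginals in our setting have densities, even that step is not strictly needed.
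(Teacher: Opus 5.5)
Your proposal is correct and matches the paper, which gives no proof but cites DKW with Massart's constant; the intended argument is exactly yours, namely the two-sided bound $2e^{-2m\delta^2}$ for each marginal plus a union bound, which is where the constant $4$ comes from. Your quantile-coupling reduction for non-continuous cdfs is a valid extra step that the citation leaves implicit.
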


\begin{lemma}\label{lem:lp}
The empirical-margin LP has an optimum implementable as a mixture of at most two deterministic threshold mechanisms with the same empirical GFT and revenue.
\end{lemma}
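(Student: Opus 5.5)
The plan is to read the feasible set $\{q\in\mathcal Q_{B,S}:\widehat{\Rev}(q)\ge\rho_{\rm add}\}$ as a slice of an order polytope, and to get the two-component decomposition from vertex structure. The monotonicity constraints $q_{i+1,j}\ge q_{ij}$ and $q_{ij}\ge q_{i,j+1}$ define a partial order on the grid $B\times S$: $(i,j)\preceq(i',j')$ iff $i\le i'$ and $j\ge j'$. Under this order, $\mathcal Q_{B,S}$ is exactly the set of order-preserving maps into $[0,1]$, i.e.\ Stanley's order polytope. By \citet{Stanley1986}, its vertices are precisely the indicator vectors of up-sets, which are the integral monotone allocations. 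By Lemma~\ref{lem:fin-impl-main} and Lemma~\ref{lem:grid-step-extension}, each of these is a deterministic threshold rule.

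\textbf{Step 1 (linearity).} The Myerson payments in Lemma~\ref{lem:fin-impl-main} are linear in $q$, so both $\widehat{\GFT}(q)$ and $\widehat{\Rev}(q)$ are linear functionals of $q$. This has two consequences. The LP~\eqref{eq:additive-main-lp} maximizes a linear objective over $\mathcal Q_{B,S}\cap\{\widehat{\Rev}\ge\rho_{\rm add}\}$. Moreover, a report-independent mixture $\sum_r\lambda_r M_{q^r}$ has empirical GFT and revenue equal to $\widehat{\GFT}(\sum_r\lambda_rq^r)$ and $\widehat{\Rev}(\sum_r\lambda_rq^r)$, because the payments of the mixture are the $\lambda$-weighted payments and expectations are linear.

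\textbf{Step 2 (optimum on a low-dimensional face).} If the LP is feasible, the region is a nonempty compact polytope, so an optimum is attained at a vertex $q^\star$. A vertex of the intersection of a polytope $P$ with one halfspace is of one of two kinds:
\begin{itemize}[leftmargin=*]
\item a vertex of $P$ itself, in which case $q^\star$ is integral and a single threshold mechanism suffices;
\item the intersection of an edge of $P$ with the hyperplane $\widehat{\Rev}(q)=\rho_{\rm add}$, in which case $q^\star=\lambda q^1+(1-\lambda)q^2$, where $q^1,q^2$ are the endpoints of that edge.
\end{itemize}
To justify the second case, note that the minimal face of $P$ containing $q^\star$ has dimension at most $1$: adding one linear equality can raise the dimension of a face at a vertex by at most one. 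The endpoints of an edge of $P$ are vertices of $P$, hence integral monotone allocations. So in either case $q^\star=\sum_{r\le2}\lambda_rq^r$ with integral $q^r\in\mathcal Q_{B,S}$.

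\textbf{Step 3 (implementation).} Implement each $q^r$ by Lemma~\ref{lem:fin-impl-main}, extend it by $\operatorname{Step}_{B,S}$ from Lemma~\ref{lem:grid-step-extension}, and mix report-independently with weights $\lambda_r$. Lemma~\ref{lem:grid-step-extension} says the stepwise extension leaves expectations under the grid-supported $\widehat D$ unchanged. Step 1 then gives that the mixture's empirical GFT and revenue equal $\widehat{\GFT}(q^\star)$ and $\widehat{\Rev}(q^\star)$. BIC and IIR of the mixture follow componentwise.

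The main obstacle is Step 2: we need the vertex claim for the order polytope and the ``vertex of $P\cap$ halfspace'' fact to be stated carefully, in particular to make sure a basic optimal solution actually exists. Existence holds because the feasible region is bounded, and a vertex-returning LP solver, or a standard reduction, produces a basic optimum. If the structural route is awkward to write cleanly, I would fall back on a direct argument. Carathéodory gives $q^\star$ as a convex combination of integral vertices. Then, keeping the combination's $\widehat{\GFT}$ and $\widehat{\Rev}$ values fixed, I would reduce the number of components using a two-dimensional Carathéodory argument on the image points $(\widehat{\GFT}(q^r),\widehat{\Rev}(q^r))$. This alone yields at most three components. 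To cut this to two, I would use optimality: the point $(\widehat{\GFT}(q^\star),\widehat{\Rev}(q^\star))$ lies on the upper boundary of the convex hull of the image points within the halfplane $\widehat{\Rev}\ge\rho_{\rm add}$, and any boundary point of a planar convex polygon is a combination of at most two of its vertices.
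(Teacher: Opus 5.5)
Your proposal is correct and is essentially the paper's argument: Stanley's characterization of the order-polytope vertices as integral monotone allocations, then an optimal vertex $q^\star$ of the polytope cut by the revenue halfspace whose minimal face in the order polytope has dimension at most one (so $q^\star$ is a vertex or lies on an edge), then implementation of the at most two integral endpoints via Lemmas~\ref{lem:fin-impl-main} and~\ref{lem:grid-step-extension} together with linearity of the Myerson payments. Your Step~2 dimension count is phrased the wrong way round: the single revenue equality can lower the dimension of the minimal face by at most one, which forces that face to have dimension at most one. The planar-image fallback you sketch is also valid but not needed.
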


\begin{proof}
The monotone allocation constraints define the order polytope of the finite grid order. Its vertices are precisely the monotone $\{0,1\}$ allocations \citep[Corollary~1.3]{Stanley1986}.  Intersect this polytope with the revenue halfspace and take an optimal vertex $q^\star$ of the resulting LP.  Let $G$ be the minimal face of the order polytope containing $q^\star$.  If the active revenue constraint has rank smaller than $\dim(G)$ on the tangent space of $G$, then $q^\star$ can be perturbed in both directions inside $G$ while preserving all active constraints, contradicting vertexhood.  Hence $\dim(G)\le1$.  Therefore $G$ is either a vertex or an edge of the order polytope, and $q^\star$ is either a vertex or a convex combination of the two endpoints of that edge.  By Lemmas~\ref{lem:fin-impl-main} and~\ref{lem:grid-step-extension}, the relevant vertices can be implemented as deterministic threshold mechanisms, and empirical GFT and revenue are preserved by linearity.
\end{proof}

\begin{lemma}\label{lem:mixing}
Fix a distribution and mechanisms $M_0,P$ on the same type space with
\[
\GFT(M_0)\ge G,\qquad \Rev(M_0)\ge -d,\qquad \GFT(P)\ge 0,\qquad \Rev(P)\ge R>0 .
\]
If
\[
\omega:=\frac{d+\rho_{\rm tar}}{R}\le 1,
\]
then $\widetilde M=(1-\omega)M_0+\omega P$ satisfies
\[
\Rev(\widetilde M)\ge \rho_{\rm tar},
\qquad
\GFT(\widetilde M)\ge (1-\omega)G .
\]
\end{lemma}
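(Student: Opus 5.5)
The plan is a direct computation that uses only the fact that both $\GFT$ and $\Rev$ are affine under report-independent mixing. The mixture $\widetilde M=(1-\omega)M_0+\omega P$ first draws $M_0$ with probability $1-\omega$ and $P$ with probability $\omega$, independently of the reports, and then runs the chosen mechanism. For any fixed distribution we therefore have
\[
\GFT(\widetilde M)=(1-\omega)\GFT(M_0)+\omega\GFT(P),
\qquad
\Rev(\widetilde M)=(1-\omega)\Rev(M_0)+\omega\Rev(P).
\]
The hypothesis $\omega\le1$ is exactly what makes $\widetilde M$ a legitimate mixture. We also have $\omega\ge0$, because $R>0$, and $d+\rho_{\rm tar}\ge0$ holds in every application, where $d\ge0$ and $\rho_{\rm tar}>0$. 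I would state this nonnegativity as implicit in the lemma.

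The GFT bound is immediate. Since $\GFT(P)\ge0$ and $\omega\ge0$, we get $\GFT(\widetilde M)\ge(1-\omega)\GFT(M_0)\ge(1-\omega)G$, where the last step uses $1-\omega\ge0$.

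For revenue we get
\[
\Rev(\widetilde M)\ge-(1-\omega)d+\omega R=-(1-\omega)d+d+\rho_{\rm tar}=\omega d+\rho_{\rm tar}\ge\rho_{\rm tar}.
\]
The first inequality uses $\Rev(M_0)\ge-d$ with weight $1-\omega\ge0$ and $\Rev(P)\ge R$ with weight $\omega\ge0$. The middle equality substitutes $\omega R=d+\rho_{\rm tar}$. The final inequality uses $\omega d\ge0$. If $d$ were negative, one can instead replace $d$ by $\max\{d,0\}$ in the hypothesis, so we may take $d\ge0$.

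There is no real obstacle. The only point to be careful about is the sign bookkeeping, namely that $\omega\in[0,1]$ so that both weights are nonnegative. If one wants to note BIC and IIR, mixing preserves them because the randomization is independent of reports, as recorded in Section~\ref{sec:main-model}.
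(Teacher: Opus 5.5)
Your proof is correct and follows the paper's own argument: affine mixing, with your identity $-(1-\omega)d+\omega R=\omega d+\rho_{\rm tar}$ playing the role of the paper's step $-(1-\omega)d\ge -d$. One correction: replacing $d$ by $\max\{d,0\}$ changes $\omega$, and hence the mixture, so it does not handle $d<0$. In fact the statement fails for $d<0$ (e.g.\ $d=-1$ and $\rho_{\rm tar}=R=2$ give $\omega=1/2$, and the mixed revenue can be $3/2<2$), so $d\ge0$ has to be taken as an implicit hypothesis, as the paper's proof also tacitly does.
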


\begin{proof}
Report-independent mixing makes $\Rev$ and $\GFT$ affine in the mixture weight, so
\[
\Rev(\widetilde M)\ge -(1-\omega)d+\omega R\ge -d+\omega R=\rho_{\rm tar},
\qquad
\GFT(\widetilde M)\ge (1-\omega)G .
\]
\end{proof}

\begin{lemma}\label{lem:empirical-mixed-main}
If $\Delta_{\rm K}(\widehat D,D)\le \delta_{\rm K}$, then there is an empirical-grid mechanism $M^\dagger$ such that
\[
\Rev(M^\dagger;\widehat D)\ge \rho_{\rm add},
\qquad
\GFT(M^\dagger;\widehat D)\ge \SB(D)-O(\eps).
\]
\end{lemma}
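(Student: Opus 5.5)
The plan is to build $M^\dagger$ by mixing the projected second-best optimum with the projected posted-spread mechanism, using Lemma~\ref{lem:mixing} on the empirical distribution $\widehat D$. Throughout we work on the event $\Delta_{\rm K}(\widehat D,D)\le\delta_{\rm K}$ and assume $\eps\le c_0\SB(D)$, as in Theorem~\ref{thm:add-upper-main}.

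\emph{Step 1 (the optimum).} Take $M^*$ from Theorem~\ref{fact:structure-main}, with $\GFT(M^*;D)=\SB(D)$ and $\Rev(M^*;D)\ge0$. Lemma~\ref{lem:add-transfer-main} gives $\GFT(M^*;\widehat D)\ge \SB(D)-12h\delta_{\rm K}$ and $\Rev(M^*;\widehat D)\ge -12h\delta_{\rm K}$. Lemma~\ref{lem:projection} then transfers both bounds to $M^*_{\rm emp}=\Pi_{\widehat D}M^*$. We therefore apply Lemma~\ref{lem:mixing} with $G=\SB(D)-12h\delta_{\rm K}$ and $d=12h\delta_{\rm K}=12c_{\rm K}\eps/\Lambda_\eps$.

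\emph{Step 2 (the posted spread).} Take $P$ from Lemma~\ref{lem:spread-main}, so $\Rev(P;D)\ge c_{\rm spr}\SB(D)/\Lambda_\eps$ because $\eps\le\SB(D)$. A posted-spread mechanism is a deterministic canonical threshold mechanism with constant thresholds. Lemma~\ref{lem:add-transfer-main} gives $\Rev(P;\widehat D)\ge c_{\rm spr}\SB(D)/\Lambda_\eps-12h\delta_{\rm K}$. Since $h\delta_{\rm K}=c_{\rm K}\eps/\Lambda_\eps\le c_{\rm K}c_0\SB(D)/\Lambda_\eps$, this is at least $R:=\tfrac12 c_{\rm spr}\SB(D)/\Lambda_\eps$ for small constants. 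For GFT we need $\GFT(P_{\rm emp};\widehat D)\ge0$. This follows because $P$ trades only when $b-s$ is at least the spread, which is nonnegative, so $P$'s realized GFT is pointwise nonnegative and in particular $\GFT(P;\widehat D)\ge0$. Lemma~\ref{lem:projection} then yields $\Rev(P_{\rm emp};\widehat D)\ge R$ and $\GFT(P_{\rm emp};\widehat D)\ge0$.

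\emph{Step 3 (mixing).} Apply Lemma~\ref{lem:mixing} with $\rho_{\rm tar}=\rho_{\rm add}=c_{\rm add}\eps/\Lambda_\eps$. This gives
\[
\omega=\frac{(12c_{\rm K}+c_{\rm add})\eps/\Lambda_\eps}{\tfrac12 c_{\rm spr}\SB(D)/\Lambda_\eps}
=\frac{2(12c_{\rm K}+c_{\rm add})}{c_{\rm spr}}\cdot\frac{\eps}{\SB(D)}.
\]
The factors of $\Lambda_\eps$ cancel, and this cancellation is the reason the margin was chosen as $\rho_{\rm add}\propto\eps/\Lambda_\eps$. We also get $\omega\le1$ once $c_0$, $c_{\rm add}$ and $c_{\rm K}$ are small. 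Then $\Rev(M^\dagger;\widehat D)\ge\rho_{\rm add}$ and
\[
\GFT(M^\dagger;\widehat D)\ge(1-\omega)G\ge \SB(D)-\omega\SB(D)-12h\delta_{\rm K}=\SB(D)-O(\eps).
\]
Finally, $M^\dagger$ is a report-independent mixture of grid mechanisms, so it is an empirical-grid mechanism, and its allocation $(1-\omega)q^*_{\rm emp}+\omega q^P_{\rm emp}$ lies in $\mathcal Q_{B,S}$.

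The main obstacle is Step 2: keeping $P$'s empirical revenue of order $\SB(D)/\Lambda_\eps$ after the transfer. This requires the Kolmogorov error $h\delta_{\rm K}$ to be dominated by $\SB(D)/\Lambda_\eps$, which is exactly where the hypothesis $\eps\le c_0\SB(D)$ and the $1/\Lambda_\eps$ factor in $\delta_{\rm K}$ enter.
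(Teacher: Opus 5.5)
Your proposal is correct and follows the paper's proof essentially step for step. You project the canonical optimum $M^*$ and the posted-spread mechanism $P$ to the empirical grid using Lemmas~\ref{lem:add-transfer-main} and~\ref{lem:projection}, then mix them via Lemma~\ref{lem:mixing} at target margin $\rho_{\rm add}$ to get $\omega=O(\eps/\SB(D))$; the only difference is that you make the constants explicit (e.g.\ $d=12h\delta_{\rm K}$ and $R=\tfrac12c_{\rm spr}\SB(D)/\Lambda_\eps$) where the paper writes $O(\cdot)$.
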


\begin{proof}
Let $M^\ast$ be a canonical optimum for $D$, and set
$M^\ast_{\rm emp}:=\Pi_{\widehat D}M^\ast$.  Lemmas~\ref{lem:add-transfer-main}
and~\ref{lem:projection} give
\[
\GFT(M^\ast_{\rm emp};\widehat D)\ge \SB(D)-O(h\delta_{\rm K}),
\qquad
\Rev(M^\ast_{\rm emp};\widehat D)\ge -O(h\delta_{\rm K}).
\]
Let $P$ be the posted-spread mechanism from Lemma~\ref{lem:spread-main}, and set
$P_{\rm emp}:=\Pi_{\widehat D}P$.  Lemmas~\ref{lem:add-transfer-main},
\ref{lem:projection}, and~\ref{lem:spread-main} imply
\[
\Rev(P_{\rm emp};\widehat D)
\ge
c_{\rm spr}\frac{\SB(D)}{\Lambda_\eps}-O(h\delta_{\rm K})
=
\Omega\!\left(\frac{\SB(D)}{\Lambda_\eps}\right),
\]
where the last step uses the choice of $\delta_{\rm K}$ and $\eps\le c_0\SB(D)$. Applying Lemma~\ref{lem:mixing} under $\widehat D$ to $M^\ast_{\rm emp}$ and $P_{\rm emp}$, with target margin $\rho_{\rm add}$, gives a mixture weight
\[
\omega
=
O\!\left(\frac{(\rho_{\rm add}+h\delta_{\rm K})\Lambda_\eps}{\SB(D)}\right)
=
O\!\left(\frac{\eps}{\SB(D)}\right).
\]
Since $P_{\rm emp}$ has nonnegative empirical GFT, the resulting mixture $M^\dagger$ satisfies
\[
\Rev(M^\dagger;\widehat D)\ge \rho_{\rm add},
\qquad
\GFT(M^\dagger;\widehat D)\ge \SB(D)-O(\eps).
\]
\end{proof}

\begin{proof}[Proof of Theorem~\ref{thm:add-upper-main}]
By Lemma~\ref{lem:marginal-kolmogorov}, the event
$\Delta_{\rm K}(\widehat D,D)\le\delta_{\rm K}$ holds with probability at least $1-\gamma$ whenever
\[
m\ge \frac{1}{2\delta_{\rm K}^2}\log\frac4\gamma,
\]
which is implied by the stated sample bound since
$\delta_{\rm K}=c_{\rm K}\eps/(h\Lambda_\eps)$.

On this event, Lemma~\ref{lem:empirical-mixed-main} gives an empirical feasible mechanism with GFT at least $\SB(D)-O(\eps)$.  Therefore the empirical-margin LP is feasible, and its optimum has empirical GFT at least $\SB(D)-O(\eps)$.  By Lemma~\ref{lem:lp}, the mechanism $\widehat M$ returned by the algorithm has the same empirical GFT and revenue as an optimal LP solution, so
\[
\GFT(\widehat M;\widehat D)\ge \SB(D)-O(\eps),
\qquad
\Rev(\widehat M;\widehat D)\ge \rho_{\rm add}.
\]
Lemma~\ref{lem:add-transfer-main} gives
\[
\GFT(\widehat M;D)\ge \SB(D)-O(\eps)-O(h\delta_{\rm K}),
\qquad
\Rev(\widehat M;D)\ge \rho_{\rm add}-O(h\delta_{\rm K}).
\]
The constants in $\rho_{\rm add}$ and $\delta_{\rm K}$ make the true revenue nonnegative and the total GFT loss at most $\eps$.  \Cref{lem:fin-impl-main,lem:grid-step-extension} give BIC and IIR, and nonnegative true revenue gives ex-ante WBB.
\end{proof}

\subsection{Proof of Lemma~\ref{lem:additive-kl}}

\begin{proof}
The seller marginal is common, so it suffices to compare the buyer marginals.  For sufficiently small $\eta$, assume $\eta\|r\|_\infty\le 1/2$.  Then
\[
\mathrm{KL}(D_B^+\|D_B^-)
=
\int_0^1
(1+\eta r(z))
\log\frac{1+\eta r(z)}{1-\eta r(z)}\,dz .
\]
For $|x|\le 1/2$,
\[
(1+x)\log\frac{1+x}{1-x}=2x+O(x^2).
\]
Since $\int_0^1 r(z)\,dz=0$, this gives
\[
\mathrm{KL}(D_B^+\|D_B^-)
\le
C\eta^2\int_0^1 r(z)^2\,dz
=
O(\eta^2).
\]
The same argument with the two signs exchanged gives
\[
\mathrm{KL}(D_B^-\|D_B^+)=O(\eta^2).
\]
Because $D^+$ and $D^-$ differ only in the buyer marginal,
\[
\mathrm{KL}(D^+\|D^-),\ \mathrm{KL}(D^-\|D^+)=O(\eta^2).
\]
\end{proof}

\subsection{Proof of Lemma~\ref{lem:additive-cross-world-separation}}

\begin{proof}
For an allocation $q$, write $a(z,y):=q(hz,hy)$ and set
\[
J(z):=\int_z^1 r(t)\,dt .
\]
Under $D^\theta$,
\[
\GFT(q;D^\theta)
=
h\int_{[0,1]^2} a(z,y)(z-y)\bigl(1+\theta\eta r(z)\bigr)\,dzdy,
\]
and the virtual-surplus expression is
\[
h\int_{[0,1]^2} a(z,y)K^\theta(z,y)\,dzdy,
\]
where
\[
K^\theta(z,y)
=
2z-1-2y+\theta\eta\bigl((z-2y)r(z)-J(z)\bigr).
\]
Here $K^\theta$ is the virtual-surplus density after multiplying by the buyer density
factor $1+\theta\eta r(z)$; it is not the pointwise virtual surplus alone.  Let
\[
T:=\{(z,y):z-y\ge 1/4\},\qquad
G_0:=\int_T (z-y)\,dzdy .
\]
For the threshold allocation $1_T$,
\[
\GFT(1_T;D^+)
=
h\left(G_0+\eta L_{\rm sep}\right),
\qquad
L_{\rm sep}:=\int_T (z-y)r(z)\,dzdy .
\]
Its revenue equals its virtual surplus, namely
\[
\eta h B_{\rm sep},
\qquad
B_{\rm sep}:=\int_T\bigl((z-2y)r(z)-J(z)\bigr)\,dzdy .
\]
Integrating first in $y$ and then by parts gives
\[
B_{\rm sep}
=
\frac12\int_{1/4}^1
\left(z-\frac14\right)\left(\frac34-z\right)r(z)\,dz>0,
\]
by the chosen supports of $r_H$ and $r_L$.  Hence $1_T$ is ex-ante WBB under $D^+$ for
$\eta>0$, and
\[
\SB(D^+)\ge h\left(G_0+\eta L_{\rm sep}\right).
\]

Now let $M$ be BIC, IIR, and ex-ante WBB under $D^-$, with allocation $q$. Lemma~\ref{lem:vs-ineq-main} and $\Rev(M;D^-)\ge0$ imply 
\[
\int_{[0,1]^2}a(z,y)K^-(z,y)\,dzdy\ge0.
\]
Therefore
\[
\begin{aligned}
\frac1h\GFT(M;D^+)
&\le
\int a(z,y)\left[(z-y)(1+\eta r(z))+\frac12K^-(z,y)\right]\,dzdy  \\
&\le
\int_{[0,1]^2}
\left[
2(z-y)-\frac12+\frac{\eta}{2}\bigl(zr(z)+J(z)\bigr)
\right]_+ dzdy .
\end{aligned}
\]
Since the boundary of $T$ is a line segment and $zr(z)+J(z)$ is bounded,
\[
\int
\left[
2(z-y)-\frac12+\frac{\eta}{2}\bigl(zr(z)+J(z)\bigr)
\right]_+ dzdy
=
G_0+\eta U_{\rm sep}+O(\eta^2),
\]
where
\[
U_{\rm sep}:=\frac12\int_T\bigl(zr(z)+J(z)\bigr)\,dzdy .
\]
A direct calculation using the displayed formula for $B_{\rm sep}$ gives
\[
L_{\rm sep}-U_{\rm sep}=\frac12B_{\rm sep}>0.
\]
Thus, for all sufficiently small $\eta$,
\[
\GFT(M;D^+)
\le
h\left(G_0+\eta U_{\rm sep}+O(\eta^2)\right)
\le
\SB(D^+)-c\eta h
\]
for a universal constant $c>0$.  This proves the lemma.
\end{proof}

\begin{lemma}\label{lem:two-point-disjoint}
Let $D^+=D_B^+\times D_S^+$ and $D^-=D_B^-\times D_S^-$.  Let
$\mathcal G^+$ and $\mathcal G^-$ be disjoint mechanism classes.  If a learner using
$m$ independent samples from each marginal returns a mechanism in $\mathcal G^\theta$ with probability at least $2/3$ under $D^\theta$, for each $\theta\in\{+,-\}$, then
\[
m\left(
\KL(D_B^+\|D_B^-)+\KL(D_S^+\|D_S^-)
\right)=\Omega(1).
\]
\end{lemma}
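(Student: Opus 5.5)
The proof reduces to a two-point hypothesis test followed by the data-processing inequality for KL divergence. Let $P^\theta_m$ be the joint law of the learner's input under $D^\theta$, namely the $m$ buyer samples and $m$ seller samples, together with the learner's internal randomness $U$, which has the same law in both worlds. By independence of all samples and the chain rule for KL,
\[
\KL(P^+_m\|P^-_m)=m\bigl(\KL(D_B^+\|D_B^-)+\KL(D_S^+\|D_S^-)\bigr).
\]
Denote the right-hand side by $m\kappa$; the goal is to show $m\kappa=\Omega(1)$.

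Next we build a test from the learner's output $\widehat M$, a measurable function of the samples and $U$. Define $\psi:=+$ if $\widehat M\in\mathcal G^+$ and $\psi:=-$ otherwise. Under $D^+$ the success assumption gives $\Pr_+[\psi=+]\ge 2/3$. Under $D^-$ we have $\Pr_-[\widehat M\in\mathcal G^-]\ge 2/3$; since $\mathcal G^+\cap\mathcal G^-=\emptyset$, this yields $\Pr_-[\psi=+]\le 1/3$. Therefore
\[
\TV(P^+_m,P^-_m)\ge \Pr_+[\psi=+]-\Pr_-[\psi=+]\ge 1/3 .
\]
The one point needing care is measurability of the event $\{\widehat M\in\mathcal G^+\}$. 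If there is any doubt, we replace $\mathcal G^\pm$ by the induced indicator events on the output, which is all that the success probabilities refer to.

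Finally we convert total variation into KL. By Pinsker's inequality, $1/3\le \TV\le\sqrt{\KL(P^+_m\|P^-_m)/2}=\sqrt{m\kappa/2}$, so $m\kappa\ge 2/9=\Omega(1)$. Alternatively, the Bretagnolle--Huber inequality, $\Pr_+[\psi=-]+\Pr_-[\psi=+]\ge\frac12 e^{-\KL}$, applied with error sum at most $2/3$, gives $m\kappa\ge\log(3/4)^{-1}\cdot$const and hence the same conclusion.

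There is no real obstacle in this argument. The only subtle step is the disjointness argument, which turns a learner succeeding in both worlds into a test with bounded error in both worlds. The choice of direction, $\KL(P^+\|P^-)$, matches the stated sum. By symmetry the reverse direction $\KL(P^-\|P^+)$ would work equally well.
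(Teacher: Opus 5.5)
Your argument is correct and is the same as the paper's proof: the event $\{\widehat M\in\mathcal G^+\}$ forces $\TV(P^+_m,P^-_m)\ge 1/3$, and Pinsker's inequality together with product additivity of KL gives $m\kappa\ge 2/9$. The one flaw is your Bretagnolle--Huber aside: with an error sum of up to $2/3>1/2$, the bound $\Pr_+[\psi=-]+\Pr_-[\psi=+]\ge\frac12 e^{-\KL}$ is vacuous (and $\log(3/4)<0$), so either drop it or use the form $\TV\le\sqrt{1-e^{-\KL}}$, which yields $m\kappa\ge\log(9/8)$.
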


\begin{proof}
Let $P_+$ and $P_-$ be the joint laws of the learner's samples and internal randomness under $D^+$ and $D^-$, respectively.  The event
\[
A:=\{\text{the learner returns a mechanism in }\mathcal G^+\}
\]
satisfies $P_+(A)\ge2/3$.  Since $\mathcal G^+\cap\mathcal G^-=\varnothing$, success under $D^-$ implies $P_-(A)\le1/3$.  Hence
\[
\TV(P_+,P_-)\ge |P_+(A)-P_-(A)|\ge \frac13 .
\]
Pinsker's inequality and product additivity of Kullback--Leibler divergence \citep[Lemma~2.5(i); Section~2.4, property~(iii)]{Tsybakov2009} give
\[ \frac19 \le \frac12\KL(P_+\|P_-) = \frac m2\left( \KL(D_B^+\|D_B^-)+\KL(D_S^+\|D_S^-) \right).
\]
This proves the claim.
\end{proof}

\subsection{Proof of Theorem~\ref{thm:add-lower-main}}

\begin{proof}
\xhdr{Regularity of the additive hard pair.}
For $z=b/h$, set
\[ J(z):=\int_z^1 r(t)\,dt . \]
Then
\[ 1-F_B^\theta(hz)=1-z+\theta\eta J(z), \qquad h f_B^\theta(hz)=1+\theta\eta r(z). \]
Since $r\in C_c^2((0,1))$ and $\int_0^1 r(z)\,dz=0$, both $r$ and $J$ vanish in neighborhoods of the endpoints.  For sufficiently small $\eta$, the densities are positive.  Moreover,
\[ \frac{1-F_B^\theta(hz)}{h f_B^\theta(hz)} = \frac{1-z+\theta\eta J(z)}{1+\theta\eta r(z)}
\]
has derivative $-1+O(\eta)$, uniformly over $z\in[0,1]$. Hence the buyer inverse hazard rate is decreasing, so the buyer marginal is regular. The hazard rate is
\[ \frac{f_B^\theta(hz)}{1-F_B^\theta(hz)} = \frac1h\frac{1+\theta\eta r(z)}{1-z+\theta\eta J(z)} . \]
On the compact region containing the perturbation, the hazard is a $C^1$-small perturbation of the uniform hazard $(h(1-z))^{-1}$, whose derivative is bounded away from zero there.  Near the endpoints it equals the uniform hazard.  Thus the buyer marginal is MHR for sufficiently small $\eta$.  The seller marginal is uniform, hence regular and MHR, with nondecreasing virtual cost.

By the regularity verification above, $D^+$ and $D^-$ are regular product distributions.  Let
\[
\mathcal G^\theta := \left\{ M:\ M \text{ is BIC, IIR, and ex-ante WBB under }D^\theta,\  \GFT(M;D^\theta)\ge \SB(D^\theta)-\eps \right\}. \]
By Lemma~\ref{lem:additive-cross-world-separation}, if $M$ is ex-ante WBB under $D^-$, then
\[
\GFT(M;D^+)\le \SB(D^+)-c\eta h .
\]
Thus $\mathcal G^+\cap\mathcal G^-=\varnothing$ whenever $\eps<c\eta h$.

Set $\eta:=2\eps/(ch)$, decreasing the constant $c_1$ in the theorem so that $\eta$ lies in the perturbative range required above and in Lemma~\ref{lem:additive-cross-world-separation}.  Since the seller marginal is common, Lemmas~\ref{lem:two-point-disjoint} and~\ref{lem:additive-kl} give
\[
m=\Omega(\eta^{-2})
=
\Omega\!\left(\frac{h^2}{\eps^2}\right).
\]
This proves the theorem.
\end{proof}
\section{Proofs for Section~\ref{sec:main-bounded-mult}}\label{sec:bsm}

\subsection{Proof of Lemma~\ref{lem:bounded-mult-scale}}

\begin{proof}
Let $Z_i=(b_i^0-s_i^0)^+$.  Then $0\le Z_i\le h$, $\E Z_i=\FB(D)$, and
\[
\Var(Z_i)\le \E Z_i^2\le h\E Z_i=h\FB(D).
\]
By Bernstein's inequality,
\[
\Prob\left(\left|\widehat V-\FB(D)\right|\ge \frac12\FB(D)\right)
\le 2\exp\left(-c\,\frac{m_0\FB(D)}{h}\right)
\]
for a universal constant $c>0$.  The stated pilot size gives
\[
\frac12\FB(D)\le \widehat V\le \frac32\FB(D)
\]
with probability at least $1-\gamma$.  The bounds
\[
\SB(D)\le \FB(D)\le 3.15 \cdot \SB(D)
\]
from Lemma~\ref{lem:scale-main} give the final claim.
\end{proof}

\subsection{Proof of Lemma~\ref{lem:localized-transfer}}

\begin{lemma}[{\citep[Corollary~4]{Panchenko2003}}\footnote{This formulation is a corollary by plugging the half line indicators as the function class $\mathcal{F}$ in \citep[Corollary~4]{Panchenko2003}}]\label{lem:panchenko-vc}
Let $P$ be a distribution on $[0,h]$, and let $\widehat P$ be the empirical distribution
formed from independent samples $X_1,\ldots,X_m\sim P$.  Let
\[
\mathcal H:=\{[a,h]:0\le a\le h\}\cup\{[0,a]:0\le a\le h\}.
\]
For $H\in\mathcal H$, write
\[
Q_m(H):=P(H)-\widehat P(H),
\]
and
\[
V_H
:=4\left(
\operatorname{Var}_P(\one_H)
+\operatorname{Var}_{\widehat P}(\one_H)
+Q_m(H)^2
\right),
\]
where
\[
\operatorname{Var}_{\widehat P}(\one_H)
:=\frac1m\sum_{i=1}^m\left(\one_H(X_i)-\widehat P(H)\right)^2 .
\]
For every fixed $\beta\in(0,1)$, there is a constant $K_\beta>0$ such that, for every
$t\ge \log(1/\beta)$,
\[
\Pr\left(
\exists H\in\mathcal H:
\frac{|Q_m(H)|}{\sqrt{V_H}}
\ge
K_\beta\sqrt{\frac{\log m}{m}}+\sqrt{\frac{t}{m}}
\right)
\le
\exp\left\{1-\left(\sqrt t-\sqrt{\log(1/\beta)}\right)^2\right\}.
\]
\end{lemma}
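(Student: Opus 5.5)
We would obtain this statement by specializing \citet[Corollary~4]{Panchenko2003} to the function class $\mathcal F:=\{\one_H:H\in\mathcal H\}$. Panchenko's corollary states the following. Take a class of $[0,1]$-valued functions whose empirical complexity is controlled uniformly in the sample, for instance through a polynomial bound on its shatter coefficients. Then the deviation $Pf-\widehat Pf$, normalized by the square root of the symmetrized variance proxy $\operatorname{Var}_P(f)+\operatorname{Var}_{\widehat P}(f)+(Pf-\widehat Pf)^2$, is uniformly bounded. The bound is a complexity term plus $\sqrt{t/m}$, and it fails with probability at most $\exp\{1-(\sqrt t-\sqrt{\log(1/\beta)})^2\}$. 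Here $\beta$ is the auxiliary confidence parameter that appears in his symmetrization step. The plan is therefore to check the hypotheses for $\mathcal F$, identify the complexity term, and match normalizations.

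\textbf{Step 1: complexity of $\mathcal F$.} We check that $\mathcal F$ consists of $\{0,1\}$-valued functions. Each of the two subfamilies $\{[a,h]\}$ and $\{[0,a]\}$ has VC dimension $1$. On any $m$ points, each subfamily induces at most $m+1$ distinct traces. Hence the shatter coefficient of $\mathcal H$ is at most $2(m+1)$, and its logarithm is $O(\log m)$. Substituting this into Panchenko's complexity term gives a quantity of order $\sqrt{\log m/m}$. Its constant, which depends only on $\beta$ (through the conversion in his Corollary~4), we call $K_\beta$. We also handle measurability of the supremum. The supremum over $a\in[0,h]$ can be replaced by a supremum over rational $a$ together with one-sided limits, because both $P(H)$ and $\widehat P(H)$ are monotone and one-sided continuous in $a$. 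This makes the class pointwise measurable, as Panchenko requires.

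\textbf{Step 2: two-sided deviations.} Panchenko's bound controls $Pf-\widehat Pf$ from one side. To obtain $|Q_m(H)|$ we apply the bound a second time, to the functions $1-\one_H$. These functions have the same variance proxy $V_H$, since both variances and $Q_m^2$ are invariant under $f\mapsto 1-f$. They also have the same shatter-coefficient bound, because complementation is a bijection on traces. A union bound over the two applications doubles the failure probability. Since the base of the exponent is $e$, we absorb the factor $2$ by enlarging $K_\beta$, or equivalently by shifting $t$ by a constant. This yields exactly the displayed bound, with the factor $4$ in $V_H$ coming from Panchenko's normalization.

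\textbf{Main obstacle.} We expect the main difficulty to be bookkeeping: matching Panchenko's precise normalization, namely the factor $4$, the role of $\beta$, and the requirement $t\ge\log(1/\beta)$, to the formulation stated here. The mathematical content lies entirely in the VC-type complexity bound of Step~1. To settle the bookkeeping, we would restate his Corollary~4 verbatim with the $\log m$ entropy estimate plugged in. We would then check that every remaining constant can be absorbed into $K_\beta$.
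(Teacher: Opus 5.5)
Your proposal is correct and follows the paper's route: the paper proves this lemma only by citing \citet[Corollary~4]{Panchenko2003} with $\mathcal F$ the half-line indicators, and your shatter-coefficient count, measurability remark, and complement trick for $|Q_m(H)|$ are the bookkeeping the paper leaves implicit. One small point about absorbing the union-bound factor $2$ into $K_\beta$: this uses $\sqrt{\log 2/m}\le\sqrt{\log m/m}$, which fails at $m=1$. There the event is empty whenever the bound is nontrivial, because $V_H\ge 4Q_m(H)^2$ forces $|Q_m(H)|/\sqrt{V_H}\le 1/2$; alternatively, applying the corollary once to $\mathcal F\cup\{1-f:f\in\mathcal F\}$ avoids the factor $2$ altogether.
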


\begin{lemma}\label{lem:panchenko-halfline}
Let $P$ be a distribution on $[0,h]$, and let $\widehat P$ be the empirical distribution
formed from $m$ independent samples from $P$.  Let
\[
\mathcal H:=\{[t,h]:0\le t\le h\}\cup\{[0,t]:0\le t\le h\}.
\]
There is a universal constant $C>0$ such that, with probability at least $1-\gamma$,
simultaneously for every $H\in\mathcal H$,
\[
\bigl|\widehat P(H)-P(H)\bigr|
\le
C\left(
\sqrt{\frac{(P(H)+\widehat P(H))\ell}{m}}
+
\frac{\ell}{m}
\right),
\qquad
\ell:=\log\!\left(\frac{Cm}{\gamma}\right).
\]
\end{lemma}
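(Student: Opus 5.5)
We derive this from Lemma~\ref{lem:panchenko-vc} by fixing $\beta=1/2$ (say) and converting the self-normalized deviation bound into the stated Bernstein-type form. First choose $t:=\log(e/\gamma)$ shifted appropriately, namely $t:=\bigl(\sqrt{\log(e/\gamma)}+\sqrt{\log 2}\bigr)^2$. Then $t\ge\log(1/\beta)$ and the failure probability $\exp\{1-(\sqrt t-\sqrt{\log 2})^2\}$ is at most $\gamma$. Moreover $t=O(\log(1/\gamma)+1)$. Hence, with probability at least $1-\gamma$, simultaneously for all $H\in\mathcal H$,
\[
|Q_m(H)|\le \sqrt{V_H}\,\kappa,\qquad \kappa:=K_{1/2}\sqrt{\frac{\log m}{m}}+\sqrt{\frac tm}\le C_1\sqrt{\frac{\ell}{m}},
\]
where $\ell=\log(Cm/\gamma)$ dominates both $\log m$ and $\log(1/\gamma)$ once $C$ is a large enough universal constant.

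Next, bound $V_H$ in terms of the masses. Since $\operatorname{Var}_P(\one_H)=P(H)(1-P(H))\le P(H)$ and likewise $\operatorname{Var}_{\widehat P}(\one_H)=\widehat P(H)(1-\widehat P(H))\le\widehat P(H)$, we get
\[
V_H\le 4\bigl(P(H)+\widehat P(H)+Q_m(H)^2\bigr).
\]
Write $x:=|Q_m(H)|$ and $s:=P(H)+\widehat P(H)$. Then $x\le 2\kappa\sqrt{s+x^2}\le 2\kappa\sqrt s+2\kappa x$.

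The only delicate point is absorbing the $Q_m(H)^2$ term, since it makes the inequality implicit in $x$. Two regimes handle it. If $2\kappa\le 1/2$, rearranging gives $x\le 4\kappa\sqrt s$, which is of the claimed form $C\sqrt{s\ell/m}$. If $2\kappa>1/2$, then $\ell/m\gtrsim 1$, and the trivial bound $x\le 1\le C\ell/m$ already suffices. Enlarging $C$ to cover both cases gives
\[
|\widehat P(H)-P(H)|\le C\Bigl(\sqrt{s\ell/m}+\ell/m\Bigr)
\]
uniformly over $\mathcal H$, as claimed. Finally, the index set in Lemma~\ref{lem:panchenko-vc} uses $a$ in place of $t$ but is the same class of closed half-lines, so no further adjustment is needed.
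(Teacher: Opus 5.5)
Your proof is correct and follows the paper's route: apply Lemma~\ref{lem:panchenko-vc} with $\beta=1/2$, bound the two variances by $P(H)$ and $\widehat P(H)$, and solve the resulting implicit inequality in $x_H$, using $x_H\le 1$ when $\ell/m$ is of constant order. Your explicit choice of $t$ and your two-regime absorption of the $Q_m(H)^2$ term fill in details the paper leaves implicit.
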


\begin{proof}
Apply Lemma~\ref{lem:panchenko-vc}, with $\beta=1/2$ and
$t=C\log(C/\gamma)$.  For $f=\one_H$,
\[
\operatorname{Var}_P(f)\le P(H),\qquad
\operatorname{Var}_{\widehat P}(f)\le \widehat P(H).
\]
Thus, on an event of probability at least $1-\gamma$,
\[
x_H:=|P(H)-\widehat P(H)|
\le
C\sqrt{\frac{(P(H)+\widehat P(H)+x_H^2)\ell}{m}}
\]
for all $H\in\mathcal H$, after enlarging $C$.  Solving this scalar inequality gives
\[
x_H
\le
C\left(
\sqrt{\frac{(P(H)+\widehat P(H))\ell}{m}}
+
\frac{\ell}{m}
\right),
\]
where the case $\ell/m$ larger than an absolute constant is covered by increasing $C$ since
$x_H\le1$.  This proves the lemma.
\end{proof}

\begin{lemma}\label{lem:oned-monotone}
Under the event of Lemma~\ref{lem:panchenko-halfline}, every monotone function
$u:[0,h]\to[0,h]$ satisfies
\[
\left|\E_{\widehat P}u-\E_Pu\right|
\le
C\left(
\sqrt{\frac{h(\E_Pu+\E_{\widehat P}u)\ell}{m}}
+
\frac{h\ell}{m}
\right).
\]
\end{lemma}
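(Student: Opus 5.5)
We prove Lemma~\ref{lem:oned-monotone} by writing $u$ as a superposition of half-line indicators (the layer-cake representation) and applying the bound of Lemma~\ref{lem:panchenko-halfline} level by level, then combining the pieces with Cauchy--Schwarz.

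Assume first that $u$ is nondecreasing; the nonincreasing case is symmetric and uses the sets $[0,t]$ instead of $[t,h]$. For each level $y\in[0,h)$, the superlevel set $H_y:=\{x:u(x)>y\}$ is an upper interval of $[0,h]$, either $[a,h]$ or $(a,h]$. Since the event of Lemma~\ref{lem:panchenko-halfline} holds for all closed half-lines at once, it extends to open ones by a monotone limit: take $[a',h]$ with $a'\downarrow a$, so both $P$ and $\widehat P$ converge by continuity of measure. Otherwise we simply enlarge $\mathcal H$. By the layer-cake formula,
\[
\E_Pu=\int_0^h P(H_y)\,dy,\qquad \E_{\widehat P}u=\int_0^h \widehat P(H_y)\,dy .
\]
Applying Lemma~\ref{lem:panchenko-halfline} pointwise in $y$ on its event then gives
\[
\bigl|\E_{\widehat P}u-\E_Pu\bigr|\le \int_0^h\bigl|\widehat P(H_y)-P(H_y)\bigr|\,dy
\le C\sqrt{\frac{\ell}{m}}\int_0^h\sqrt{P(H_y)+\widehat P(H_y)}\,dy + C\frac{h\ell}{m}.
\]

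For the remaining integral we use Cauchy--Schwarz on $[0,h]$:
\[
\int_0^h\sqrt{P(H_y)+\widehat P(H_y)}\,dy\le \sqrt{h\int_0^h\bigl(P(H_y)+\widehat P(H_y)\bigr)dy}=\sqrt{h(\E_Pu+\E_{\widehat P}u)} .
\]
Substituting this into the previous display yields exactly the claimed bound, with the same constant $C$.

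The argument is short. The one step needing care is measurability and the open/closed endpoint issue, which we handle by noting that $y\mapsto P(H_y)$ and $y\mapsto\widehat P(H_y)$ are monotone and hence measurable. Since the event is uniform over $\mathcal H$, no union bound over levels is needed.
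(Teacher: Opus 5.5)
Your proof is correct and follows the paper's argument: the layer-cake representation, the uniform relative half-line bound of Lemma~\ref{lem:panchenko-halfline} applied level by level, and Cauchy--Schwarz over $[0,h]$, with the same constant $C$. Your monotone-limit treatment of superlevel sets of the form $(a,h]$ is extra care; the paper skips this point when it simply asserts that $\{x:u(x)\ge t\}$ lies in $\mathcal H$.
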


\begin{proof}
Assume first that $u$ is nondecreasing.  By the layer-cake representation,
\[
u(x)=\int_0^h \one\{u(x)\ge t\}\,dt .
\]
For each $t$, the set $\{x:u(x)\ge t\}$ is a half-line in $\mathcal H$.  Therefore Lemma~\ref{lem:panchenko-halfline}
and Cauchy's inequality give
\[
\begin{aligned}
\left|\E_{\widehat P}u-\E_Pu\right|
&\le
\int_0^h
\left|\widehat P(u\ge t)-P(u\ge t)\right|\,dt\\
&\le
C\int_0^h
\left(
\sqrt{\frac{(P(u\ge t)+\widehat P(u\ge t))\ell}{m}}
+
\frac{\ell}{m}
\right)dt\\
&\le
C\left(
\sqrt{\frac{h\ell}{m}
\int_0^h(P(u\ge t)+\widehat P(u\ge t))\,dt}
+
\frac{h\ell}{m}
\right)\\
&=
C\left(
\sqrt{\frac{h(\E_Pu+\E_{\widehat P}u)\ell}{m}}
+
\frac{h\ell}{m}
\right).
\end{aligned}
\]
The nonincreasing case is identical, using lower half-lines.
\end{proof}

\begin{lemma}\label{lem:rel-bound}
Let $\widehat D=\widehat D_B\times\widehat D_S$ be the empirical product distribution from
$m$ samples from each marginal.  With probability at least $1-\gamma$, every measurable
$f:[0,h]^2\to[0,h]$ nondecreasing in $b$ and nonincreasing in $s$ satisfies
\[
\left|\E_{\widehat D}f-\E_Df\right|
\le
C\left(
\sqrt{\frac{h(\E_Df+\E_{\widehat D}f)\ell}{m}}
+
\frac{h\ell}{m}
\right),
\qquad
\ell:=\log\!\left(\frac{C m }{\gamma}\right).
\]
\end{lemma}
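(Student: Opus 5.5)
The plan is to reduce the two-dimensional statement to two applications of the one-dimensional bound in Lemma~\ref{lem:oned-monotone}. We do this by iterated integration: replace the seller marginal first, then the buyer marginal, in the spirit of the proof of Lemma~\ref{lem:add-transfer-main}. We work on the intersection of two events from Lemma~\ref{lem:panchenko-halfline}, one for $P=D_B$ and one for $P=D_S$, each at confidence $\gamma/2$. This costs only a constant inside $\ell$. Both events are uniform over all half-lines, so the resulting bound holds simultaneously for every $f$.

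Write $\E_Df=\E_{b\sim D_B}g(b)$ with $g(b):=\E_{s\sim D_S}f(b,s)$, and define $\widehat g(b):=\E_{s\sim\widehat D_S}f(b,s)$. We split
\[
\E_{\widehat D}f-\E_Df=\bigl(\E_{\widehat D_B}\widehat g-\E_{D_B}\widehat g\bigr)+\E_{D_B}\bigl(\widehat g-g\bigr).
\]
\emph{Buyer step.} $\widehat g$ is nondecreasing in $b$ and $[0,h]$-valued, and it depends only on the seller samples, which are independent of the buyer samples. The buyer event is uniform over half-lines anyway. So Lemma~\ref{lem:oned-monotone} bounds the first term by $C(\sqrt{h(\E_{D_B}\widehat g+\E_{\widehat D_B}\widehat g)\ell/m}+h\ell/m)$. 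Here $\E_{\widehat D_B}\widehat g=\E_{\widehat D}f$, and $\E_{D_B}\widehat g=\E_{D_B\times\widehat D_S}f$ is a mixed quantity.

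\emph{Seller step.} For each fixed $b$, the map $s\mapsto f(b,s)$ is nonincreasing and $[0,h]$-valued. Lemma~\ref{lem:oned-monotone} therefore gives
\[
|\widehat g(b)-g(b)|\le C\Bigl(\sqrt{h(g(b)+\widehat g(b))\ell/m}+h\ell/m\Bigr).
\]
Integrating over $b\sim D_B$ and applying Jensen/Cauchy--Schwarz ($\E\sqrt{X}\le\sqrt{\E X}$) bounds the second term by $C(\sqrt{h(\E_Df+\E_{D_B\times\widehat D_S}f)\ell/m}+h\ell/m)$.

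The main obstacle is removing the mixed quantity $X:=\E_{D_B\times\widehat D_S}f$ so that the right-hand side involves only $\E_Df$ and $\E_{\widehat D}f$. We handle it with a self-bounding argument. Put $a:=\E_Df$, $\widehat a:=\E_{\widehat D}f$, and $\epsilon_0:=\ell/m$. The two steps give
\[
|X-a|\le C\bigl(\sqrt{h(a+X)\epsilon_0}+h\epsilon_0\bigr),\qquad |X-\widehat a|\le C\bigl(\sqrt{h(X+\widehat a)\epsilon_0}+h\epsilon_0\bigr).
\]
The first inequality is quadratic in $\sqrt X$, and solving it yields $X\le 2a+O(h\epsilon_0)$. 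This follows from AM--GM, $\sqrt{hX\epsilon_0}\le X/(2C)+O(h\epsilon_0)$, after absorbing terms. Substituting this into both bounds and using the triangle inequality, the total deviation is $O(\sqrt{h(a+\widehat a)\epsilon_0}+h\epsilon_0)$, with the additive $h\epsilon_0$ terms inside the square roots absorbed into $h\ell/m$. Enlarging $C$ then gives the stated inequality.

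A measurability subtlety remains, since $\widehat g$ is random. It is harmless because the Panchenko event is a uniform statement over all half-lines, and every monotone $[0,h]$-valued function is covered by the layer-cake argument.
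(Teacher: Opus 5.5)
Your proposal is correct and is essentially the paper's proof. It uses the same hybrid quantity $\E_{D_B\times\widehat D_S}f$, the same buyer-coordinate projection $b\mapsto\E_{s\sim\widehat D_S}f(b,s)$ controlled by the buyer event, and the same AM--GM self-bounding step to eliminate the hybrid before the triangle inequality; the only cosmetic difference is the seller step, where you apply Lemma~\ref{lem:oned-monotone} to every section $f(b,\cdot)$ and average over $b\sim D_B$ via Jensen, whereas the paper applies it once to the averaged nonincreasing function $u_f(s)=\E_{B\sim D_B}f(B,s)$, and both are valid because the event is uniform over all monotone functions.
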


\begin{proof}
Let
\[
\begin{aligned}
\mathcal M_+&:=\{\phi:[0,h]\to[0,h]:\phi\text{ is nondecreasing}\},\\
\mathcal M_-&:=\{\psi:[0,h]\to[0,h]:\psi\text{ is nonincreasing}\}.
\end{aligned}
\]
By Lemma~\ref{lem:oned-monotone}, after absorbing the union bound into the constant in $\ell$, with probability at least $1-\gamma$ the following two events hold simultaneously:
\[
\mathcal E_B:=
\left\{
\forall \phi\in\mathcal M_+,\quad
\left|\E_{\widehat D_B}\phi-\E_{D_B}\phi\right|
\le
C\left(\sqrt{\frac{h(\E_{D_B}\phi+\E_{\widehat D_B}\phi)\ell}{m}}
+\frac{h\ell}{m}\right)
\right\},
\]
and
\[
\mathcal E_S:=
\left\{
\forall \psi\in\mathcal M_-,\quad
\left|\E_{\widehat D_S}\psi-\E_{D_S}\psi\right|
\le
C\left(\sqrt{\frac{h(\E_{D_S}\psi+\E_{\widehat D_S}\psi)\ell}{m}}
+\frac{h\ell}{m}\right)
\right\}.
\]
Work on $\mathcal E_B\cap\mathcal E_S$ and fix any admissible $f$.  Write
\[
P_0:=\E_{D_B\times D_S}f,\qquad
T:=\E_{D_B\times\widehat D_S}f,\qquad
Q:=\E_{\widehat D_B\times\widehat D_S}f,\qquad
k:=\frac{h\ell}{m}.
\]

First define the seller-coordinate projection
\[
u_f(s):=\E_{B\sim D_B}f(B,s).
\]
Since $f$ is nonincreasing in its seller coordinate, $u_f\in\mathcal M_-$.  Moreover,
\[
\E_{D_S}u_f
=
\E_{S\sim D_S}\E_{B\sim D_B}f(B,S)
=
\E_{D_B\times D_S}f
=P_0,
\]
while
\[
\E_{\widehat D_S}u_f
=
\E_{S\sim\widehat D_S}\E_{B\sim D_B}f(B,S)
=
\E_{D_B\times\widehat D_S}f
=T.
\]
Applying $\mathcal E_S$ to $u_f$ gives
\[
|T-P_0|\le C(\sqrt{k(P_0+T)}+k).
\]

Next fix the realized seller sample, and define the buyer-coordinate projection
\[
v_f(b):=\E_{S\sim\widehat D_S}f(b,S)
=\frac1m\sum_{j=1}^m f(b,s_j).
\]
For this fixed seller sample, $v_f$ is a deterministic element of $\mathcal M_+$ because
$f$ is nondecreasing in its buyer coordinate.  The event $\mathcal E_B$ is simultaneous over all functions in $\mathcal M_+$, so it applies to this particular $v_f$.  The two expectations of $v_f$ are
\[
\E_{D_B}v_f
=
\E_{B\sim D_B}\E_{S\sim\widehat D_S}f(B,S)
=
\E_{D_B\times\widehat D_S}f
=T,
\]
and
\[
\E_{\widehat D_B}v_f
=
\E_{B\sim\widehat D_B}\E_{S\sim\widehat D_S}f(B,S)
=
\E_{\widehat D_B\times\widehat D_S}f
=Q.
\]
Applying $\mathcal E_B$ to $v_f$ gives
\[
|Q-T|\le C(\sqrt{k(Q+T)}+k).
\]

It remains to remove the hybrid term $T$ from the right-hand side.  From the first display,
\[
T\le P_0+C\sqrt{k(P_0+T)}+Ck.
\]
Using $\sqrt{k(P_0+T)}\le \sqrt{kP_0}+\sqrt{kT}$ and AM--GM,
\[
C\sqrt{kP_0}\le \frac12 P_0+Ck,
\qquad
C\sqrt{kT}\le \frac12 T+Ck,
\]
after increasing $C$.  Hence $T\le C(P_0+k)$.  The same argument applied to
\[
T\le Q+C\sqrt{k(Q+T)}+Ck
\]
gives $T\le C(Q+k)$.  Substituting $T\le C(P_0+k)$ into the seller-coordinate bound and $T\le C(Q+k)$ into the buyer-coordinate bound yields
\[
|T-P_0|\le C(\sqrt{kP_0}+k),
\qquad
|Q-T|\le C(\sqrt{kQ}+k).
\]
Finally, by the triangle inequality and $\sqrt{kP_0}+\sqrt{kQ}\le \sqrt{2k(P_0+Q)}$,
\[
|Q-P_0|\le C(\sqrt{k(P_0+Q)}+k),
\]
as claimed.
\end{proof}



\begin{proof}[Proof of Lemma~\ref{lem:localized-transfer}]
It suffices to write the deterministic kernels obtained after fixing the report-independent random seed and then average these kernels over the randomization.  Let $\Theta$ be the seed, independent of the reports.  Conditional on $\Theta=\theta$, the mechanism is a deterministic threshold component with allocation $q^\theta$ and threshold maps $\tau_B^\theta,\tau_S^\theta$.  Define three kernels
\[
\begin{aligned}
g_M(b,s)&:=\E_\Theta\bigl[q^\Theta(b,s)(b-s)\bigr],\\
u^B_M(b,s)&:=\E_\Theta\bigl[q^\Theta(b,s)(b-\tau_B^\Theta(s))\bigr],\\
u^S_M(b,s)&:=\E_\Theta\bigl[q^\Theta(b,s)(\tau_S^\Theta(b)-s)\bigr].
\end{aligned}
\]
For any $D_0\in\{D,\Db\}$,
\[
\begin{aligned}
\GFT(M;D_0)&=\E_{D_0}g_M,\\
U(M;D_0)&=\E_{D_0}u^B_M+\E_{D_0}u^S_M,\\
\Rev(M;D_0)&=\GFT(M;D_0)-U(M;D_0).
\end{aligned}
\]
Since $q(b,s)=0$ whenever $b<s$, we have $q^\theta(b,s)=0$ whenever $b<s$ for almost every seed $\theta$.  Hence the three deterministic kernels inside the $\Theta$-expectations are nonnegative, $[0,h]$-bounded, nondecreasing in $b$, and nonincreasing in $s$ for almost every $\theta$.  Averaging over $\Theta$ preserves these properties, so $g_M,u^B_M,u^S_M$ also have them.

We prove the case in which $M$ is $A$-localized under $\Db$; the case in which
$M$ is $A$-localized under $D$ is the same after interchanging $D$ and $\Db$.
The localization assumption under $\Db$ gives
\[
\E_{\Db}g_M\le A,\qquad
\E_{\Db}u^B_M\le A,\qquad
\E_{\Db}u^S_M\le A.
\]
Indeed, the first inequality is $\GFT(M;\Db)\le A$, while the last two follow from nonnegativity and
$\E_{\Db}u^B_M+\E_{\Db}u^S_M=U(M;\Db)\le A$.

Let $f$ be any one of $g_M,u^B_M,u^S_M$.  Lemma~\ref{lem:rel-bound} gives
\[
\left|\E_D f-\E_{\Db}f\right|
\le
C\left(
\sqrt{\frac{h\ell}{m}\bigl(\E_D f+\E_{\Db}f\bigr)}
+\frac{h\ell}{m}
\right).
\]
For the three kernels above, the localization inequalities give $\E_{\Db}f\le A$.  Since
\[
\E_D f\le \E_{\Db}f+\left|\E_D f-\E_{\Db}f\right|,
\]
we have
\[
\E_D f+\E_{\Db}f
\le
2A+\left|\E_D f-\E_{\Db}f\right|.
\]
Therefore
\[
\left|\E_D f-\E_{\Db}f\right|
\le
C\left(
\sqrt{\frac{hA\ell}{m}}
+
\sqrt{\frac{h\ell}{m}\left|\E_D f-\E_{\Db}f\right|}
+
\frac{h\ell}{m}
\right).
\]
Using AM--GM to absorb the middle term,
\[
C\sqrt{\frac{h\ell}{m}\left|\E_D f-\E_{\Db}f\right|}
\le
\frac12\left|\E_D f-\E_{\Db}f\right|+\frac{C^2}{2}\frac{h\ell}{m}.
\]
Substituting this bound into the previous display and moving the half-error term to the left gives
\[
\left|\E_D f-\E_{\Db}f\right|
\le
2C\sqrt{\frac{hA\ell}{m}}+(2C+C^2)\frac{h\ell}{m} \leq O\left(\sqrt{\frac{hA\ell}{m}}+\frac{h\ell}{m}\right)
\]

Thus the same bound applies to each of $g_M,u_M^B,u_M^S$.  In particular,
\[
\left|\GFT(M;D)-\GFT(M;\Db)\right|\le  O\left(\sqrt{\frac{hA\ell}{m}}+\frac{h\ell}{m}\right),
\]
and
\[
\begin{aligned}
\left|U(M;D)-U(M;\Db)\right|
&\le
\left|\E_Du^B_M-\E_{\Db}u^B_M\right|
+
\left|\E_Du^S_M-\E_{\Db}u^S_M\right|\leq  O\left(\sqrt{\frac{hA\ell}{m}}+\frac{h\ell}{m}\right).
\end{aligned}
\]
The revenue bound follows from $\Rev=\GFT-U$.


\end{proof}

\subsection{Proof of Theorem~\ref{thm:bounded-mult-upper-main}}
\begin{proof}
Apply Lemma~\ref{lem:bounded-mult-scale} and Lemma~\ref{lem:localized-transfer} with failure probability $\gamma/2$ each.  By a union bound, the two events hold simultaneously with probability at least $1-\gamma$.  Replacing $\gamma$ by $\gamma/2$ only changes the logarithmic factor by an absolute constant, so we keep the notation $\ell$.  On the pilot event,
\[
\frac12\FB(D)\le \widehat V\le \frac32\FB(D),
\qquad
\SB(D)\le \FB(D)\le 3.15\,\SB(D),
\]
and hence $\widehat V=\Theta(\SB(D))$ and
\[
\widehat\Lambda:=1+\log\frac h{\widehat V}=\Theta(\Lambda_h(D)).
\]

Let $C_{\rm loc}$ be a universal constant large enough for the $O(\cdot)$ bounds in Lemma~\ref{lem:localized-transfer}.  Fix $C_0\ge4$, then choose $c_\rho>0$ sufficiently small, and finally choose the hidden universal constant in the training sample size sufficiently large.  More explicitly, after $C_0$ and $c_\rho$ are fixed, the stated training sample bound implies
\[
m\ge
K_m\,\frac{h\ell}{\widehat V}\frac{\widehat\Lambda^2}{\alpha^2}
\]
for a universal constant $K_m$ that we may choose as large as needed.  Since
\[
\rho=c_\rho\frac{\alpha\widehat V}{\widehat\Lambda},
\]
choosing $K_m$ large enough gives
\begin{equation}\label{eq:bounded-transfer-small}
C_{\rm loc}\left(
\sqrt{\frac{hC_0\widehat V\ell}{m}}
+
\frac{h\ell}{m}
\right)
\le
\frac{\rho}{16}.
\end{equation}
Indeed, the sample-size lower bound gives
\[
C_{\rm loc}\sqrt{\frac{hC_0\widehat V\ell}{m}}
\le
C_{\rm loc}\sqrt{\frac{C_0}{K_m}}\frac{\alpha\widehat V}{\widehat\Lambda}
\le
\frac{\rho}{32},
\]
and, since $\alpha<1/2$ and $\widehat\Lambda\ge1$,
\[
C_{\rm loc}\frac{h\ell}{m}
\le
\frac{C_{\rm loc}}{K_m}\frac{\alpha^2\widehat V}{\widehat\Lambda^2}
\le
\frac{C_{\rm loc}}{K_m}\frac{\alpha\widehat V}{\widehat\Lambda}
\le
\frac{\rho}{32}.
\]

Let $M^\ast$ be a second-best canonical mechanism for $D$.  By surplus trimming, we may assume that it trades only on $\{b\ge s\}$, without decreasing GFT or revenue.  Indeed, for a deterministic threshold component with allocation $q$ and threshold maps $\tau_B,\tau_S$, replacing
\[
q(b,s)\quad\text{by}\quad q(b,s)\one\{b\ge s\}
\]
keeps the allocation threshold implementable, with threshold maps
\[
\widetilde\tau_B(s):=\max\{\tau_B(s),s\},
\qquad
\widetilde\tau_S(b):=\min\{\tau_S(b),b\}.
\]
On $b<s$, regularity gives $\phi_B(b)\le b<s\le \phi_S(s)$, so deleting such trades weakly increases both GFT and the virtual-surplus expression for revenue.  The same operation is applied componentwise to randomized threshold mechanisms.

Lemma~\ref{lem:spread-main} gives a posted-spread mechanism $P$ with
\[
\Rev(P;D)=\Omega\!\left(\frac{\SB(D)}{\Lambda_h(D)}\right).
\]
Posted-spread mechanisms are also surplus-trimmed.  Applying Lemma~\ref{lem:mixing} to the surplus-trimmed $M^\ast$ and $P$ with target margin $8\rho$ gives a surplus-trimmed mechanism $W$ such that
\[
\Rev(W;D)\ge 8\rho,
\qquad
\GFT(W;D)\ge (1-O(c_\rho\alpha))\SB(D),
\]
where the mixture weight is $O(\rho\Lambda_h(D)/\SB(D))=O(c_\rho\alpha)$ and is at most one by the choice of $c_\rho$.  Moreover, $W$ is $C_0\widehat V$-localized under $D$, since
\[
\GFT(W;D)\le \FB(D)\le 2\widehat V\le \frac{C_0}{2}\widehat V,
\qquad
\GFT(W;D)-\Rev(W;D)\le \GFT(W;D)\le \frac{C_0}{2}\widehat V,
\]
where the last inequalities use $C_0\ge4$.
Lemma~\ref{lem:localized-transfer}, applied to $W$, gives
\[
\Rev(W;\widehat D)\ge 8\rho-\frac{\rho}{16}\ge 4\rho,
\qquad
\GFT(W;\widehat D)\le C_0\widehat V.
\]
Indeed, the revenue inequality uses Lemma~\ref{lem:localized-transfer} and \eqref{eq:bounded-transfer-small}.  For the GFT cap, the same transfer bound gives
\[
\GFT(W;\widehat D)
\le
\GFT(W;D)+\frac{\rho}{16}
\le
\frac{C_0}{2}\widehat V+\frac{\rho}{16}
\le
C_0\widehat V,
\]
where the last step uses $\rho=c_\rho\alpha\widehat V/\widehat\Lambda$, $\alpha<1/2$, $\widehat\Lambda\ge1$, and the fixed choice of sufficiently small $c_\rho$.  Let $W_{\rm emp}:=\Pi_{\widehat D}W$.  By Lemma~\ref{lem:projection},
\[
\GFT(W_{\rm emp};\widehat D)=\GFT(W;\widehat D),
\qquad
\Rev(W_{\rm emp};\widehat D)\ge \Rev(W;\widehat D).
\]
Also, $W_{\rm emp}$ satisfies the empirical negative-surplus constraint: on every empirical grid profile with $\beta_i<\sigma_j$, the allocation of $W$ is zero because $W$ is surplus-trimmed, and the projection preserves the grid allocation.  Therefore $W_{\rm emp}$ is feasible for the localized empirical LP.  In particular, on the current event the LP is feasible, so Algorithm~\ref{alg:mult-upper-main} does not return no trade.

Let $q^\star$ be an optimal solution of the localized empirical LP.  Since
$W_{\rm emp}$ is feasible, optimality gives the only comparison needed here:
\[
\widehat{\GFT}(q^\star)
\ge
\GFT(W_{\rm emp};\widehat D).
\]
It remains only to justify that this LP optimum is implemented by the mechanism class used in the transfer lemma.  After imposing the zero-allocation constraints $q_{ij}=0$ for $\beta_i<\sigma_j$, the remaining monotone allocation set is a face of the finite-grid order polytope.  Intersecting this face with the empirical revenue halfspace and the empirical GFT-cap halfspace, the same face-dimension argument as in Lemma~\ref{lem:lp} shows that an optimal LP solution may be chosen in a face of dimension at most two.  By Carath\'eodory's theorem inside this face, it is a convex combination of at most three monotone integral grid allocations.  The finite-grid implementation therefore realizes $q^\star$ as a randomized threshold mechanism $\widehat M$, with the same empirical GFT and revenue.  In particular,
\[
\GFT(\widehat M;\widehat D)
=
\widehat{\GFT}(q^\star)
\ge
\GFT(W_{\rm emp};\widehat D).
\]

The LP constraints give $\Rev(\widehat M;\widehat D)\ge4\rho$ and $\GFT(\widehat M;\widehat D)\le C_0\widehat V$.  Hence
\[
U(\widehat M;\widehat D)
=\GFT(\widehat M;\widehat D)-\Rev(\widehat M;\widehat D)
\le \GFT(\widehat M;\widehat D)
\le C_0\widehat V.
\]
Thus $\widehat M$ is $C_0\widehat V$-localized under $\widehat D$.

We also need the surplus-trimming hypothesis of Lemma~\ref{lem:localized-transfer} for $\widehat M$.  Each deterministic component of the stepwise extension uses
\[
\pi_B(b)=\max\{\beta_i:\beta_i\le b\},
\qquad
\pi_S(s)=\min\{\sigma_j:\sigma_j\ge s\}.
\]
If $b<s$, then $\pi_B(b)\le b<s\le\pi_S(s)$, so $\pi_B(b)<\pi_S(s)$.  The grid constraint $q_{ij}=0$ for $\beta_i<\sigma_j$ therefore implies that every deterministic component, and hence $\widehat M$, has zero allocation whenever $b<s$.

Then Lemma~\ref{lem:localized-transfer} gives
\[
\GFT(\widehat M;D)
\ge
\GFT(\widehat M;\widehat D)-\frac{\rho}{16}
\ge
\GFT(W_{\rm emp};\widehat D)-\frac{\rho}{16}.
\]
By Lemma~\ref{lem:projection} and another application of Lemma~\ref{lem:localized-transfer} to
$W$,
\[
\GFT(W_{\rm emp};\widehat D)-\frac{\rho}{16}
=
\GFT(W;\widehat D)-\frac{\rho}{16}
\ge
\GFT(W;D)-\frac{\rho}{16}-\frac{\rho}{16}
\ge
(1-\alpha)\SB(D),
\]
after choosing the universal constant $c_{\rho}$ small enough. Similarly,
\[
\Rev(\widehat M;D)
\ge
\Rev(\widehat M;\widehat D)-\frac{\rho}{16}
\ge
4\rho-\frac{\rho}{16}
\ge0.
\]
Thus $\widehat M$ is ex-ante WBB under $D$.  Lemmas~\ref{lem:fin-impl-main} and
\ref{lem:grid-step-extension} give BIC and IIR for each deterministic component, and
report-independent randomization preserves both properties. The pilot and training sample sizes are bounded by the stated
\[
\Oe\!\left(
\frac{h}{\SB(D)}
\frac{\Lambda_h(D)^2}{\alpha^2}
\log\frac1\gamma
\right),
\]
using Lemmas~\ref{lem:bounded-mult-scale} and~\ref{lem:scale-main}.
\end{proof}

\subsection{Proof of Theorem~\ref{thm:bounded-mult-lower-main}}

\begin{proof}
Let $H^\theta_R=X^\theta_R\times Y_R$, $\theta\in\{+,-\}$, be the scale-$R$ additive hard pair from Subsection~\ref{ssec:add-lower-main}.  By Lemmas~\ref{lem:additive-kl} and~\ref{lem:additive-cross-world-separation},
\[
\KL(H^+_R\|H^-_R),\ \KL(H^-_R\|H^+_R)=O(\eta^2),
\qquad
\SB(H^\pm_R)=\Theta(R),
\]
and the cross-world WBB separation is $\Omega(\eta R)$.

Embed this pair in a probability-$p$ active buyer block.  Let $x_\theta(u):=F^{-1}_{X^\theta_R}(1-u)$ be the buyer upper-tail quantile in the additive pair, and define the upper-tail quantile of the embedded buyer by
\[
b_\theta(q):=
\begin{cases}
R+x_\theta(q/p), & 0<q\le p,\\[1mm]
pR/q, & p<q\le 1.
\end{cases}
\]
Equivalently, the buyer CDF is
\[
F_{B^\theta_p}(b)=
\begin{cases}
1-pR/b, & pR\le b\le R,\\[1mm]
1-p\bigl(1-F_{X_R^\theta}(b-R)\bigr), & R\le b\le 2R .
\end{cases}
\]
Let the seller be $S=R+Y_R$, and denote the resulting product distribution by $D^\theta_p$. 
The two branches join at $q=p$, since $b_\theta(p)=R+x_\theta(1)=R$ and $pR/p=R$.  Hence the buyer support is contained in $[pR,2R]$, while the seller support is contained in $[R,2R]$.

To verify buyer regularity, consider the buyer revenue curve $\mathcal R_\theta(q):=q b_\theta(q)$ in the upper-tail quantile parameter $q$.  It is flat on the inactive branch, $\mathcal R_\theta(q)=pR$ for $q>p$, and on the active branch it is $\mathcal R_\theta(q)=q(R+x_\theta(q/p))$.  
The active-branch curve is concave because $X_R^\theta$ is regular and adding the linear term $qR$ preserves concavity.  
At the joining point, the perturbation is supported away from the endpoint, so $x_\theta(1)=0$ and $x_\theta'(1)=-R$; hence the active and inactive pieces have matching one-sided derivatives, both equal to zero, at $q=p$. 
Thus $\mathcal R_\theta$ is concave across the join, and the buyer marginal is regular.  
The seller marginal is the translate $S=R+Y_R$, so seller regularity is preserved.

On the inactive block, buyer values are at most $R$ while seller values are at least $R$, so realized surplus is nonpositive.  The inactive buyer virtual value is zero, because $b_\theta(q)+q b_\theta'(q)=0$ for $q>p$, while the seller virtual cost is at least $R$.  Hence the inactive block also has nonpositive virtual surplus and cannot contribute to the benchmark or offset the cross-world WBB separation.  Since the inactive block has nonpositive realized surplus,
\[
\SB(D_p^\theta)
\le
p\,\FB(H_R^\theta)
=
O(pR).
\]
Conversely, take a feasible canonical mechanism for $H_R^\theta$, shift its buyer and seller reports and payments upward by $R$, run it on the active buyer block, and use no trade on the inactive block. Seller interim utilities and expected revenue are multiplied by $p$. Buyer incentive compatibility is preserved within each block, and an inactive buyer cannot profitably report an active type because every active buyer critical payment is at least $R$. Hence the extended mechanism is feasible on $D_p^\theta$ and
\[
\SB(D_p^\theta)
\ge
p\,\SB(H_R^\theta)
=
\Omega(pR).
\]
Therefore
\[
\SB(D_p^\theta)=\Theta(pR).
\]

Now let $M$ be BIC, IIR, and ex-ante WBB under $D_p^-$, with allocation rule $q_M$. By the envelope inequality,
\[
\E_{D_p^-}\!\left[
q_M(B,S)\bigl(\phi_B^-(B)-\phi_S(S)\bigr)
\right]
\ge0.
\]
Define the shifted active allocation
\[
\widetilde q(x,y):=q_M(R+x,R+y).
\]
The inactive-region virtual surplus is nonpositive, so the active-region virtual-surplus integral is nonnegative. After shifting the active reports downward by $R$ and substituting $q=pu$, the active-region GFT and virtual-surplus expressions are exactly $p$ times the corresponding expressions in the additive hard pair. The variational argument in the proof of Lemma~\ref{lem:additive-cross-world-separation} therefore gives
\[
\GFT(M;D_p^+)
\le
p\,\SB(H_R^+)-cp\eta R.
\]
Since the inactive-region GFT is nonpositive and
\[
p\,\SB(H_R^+)\le\SB(D_p^+),
\]
we obtain
\[
\GFT(M;D_p^+)
\le
\SB(D_p^+)-cp\eta R.
\]
Thus the cross-world WBB separation is $\Omega(p\eta R)$.

Finally, the two buyer marginals coincide on the inactive block and, conditional on the active block, are the buyer marginals of the additive hard pair. Since the seller marginal is common,
\[
\KL(D_p^+\|D_p^-)
=
p\,\KL(H_R^+\|H_R^-)
=
O(p\eta^2),
\]
and similarly
\[
\KL(D_p^-\|D_p^+)=O(p\eta^2).
\]

Set $R=h/2$ and $\eta=c_\alpha\alpha$, where $c_\alpha>0$ is an absolute
constant chosen within the perturbative range so that the separation exceeds the
$\alpha\SB(D^\pm_p)$ tolerance.  Then $D^+_p,D^-_p$ are supported on
$[0,h]^2$, satisfy
\[
\SB(D^\pm_p)=\Theta(ph),
\qquad
\KL(D^+_p\|D^-_p),\ \KL(D^-_p\|D^+_p)=O(p\alpha^2),
\]
and their feasible $(1-\alpha)$-optimal classes are disjoint.  Lemma~\ref{lem:two-point-disjoint}
therefore gives
\[
m=\Omega\!\left(\frac1{p\alpha^2}\right).
\]
Using $\SB(D^\pm_p)=\Theta(ph)$, this becomes
\[
m=\Omega\!\left(\frac{h}{\SB(D^\pm_p)}\frac1{\alpha^2}\right).
\]
\end{proof}


\section{Proofs for Section~\ref{sec:main-mhr}}\label{sec:mhr}

\subsection{Algorithm and supporting lemmas}

We use the notation $D_w$ and $\Ext_w(\cdot)$ from Section~\ref{sec:main-mhr}. For each marginal, partition the samples into two independent blocks: a cap-selection block and a bounded-learner block. Split the cap-selection block into independent subsamples for mean estimation and the posted-spread search, and split the bounded-learner block into the pilot and training subsamples required by Algorithm~\ref{alg:mult-upper-main}.

Throughout this appendix, for every cap $u>0$, the seller sentinel $\top_S$ in $D_u$ is treated as a fixed no-trade state and is not assigned a numerical cost. A mechanism on $D_u$ is specified on the numerical report region $[0,u]^2$ and extended to the sentinel row by
\[
q(b,\top_S)=x_B(b,\top_S)=x_S(b,\top_S)=0
\qquad
\text{for every }b\in[0,u].
\]
BIC and IIR are imposed on the numerical buyer and seller types under the full law $D_u$. The sentinel is also an admissible report for a numerical seller and yields zero utility. We use $\SB(D_u)$ for the supremum GFT over BIC, IIR, and ex-ante-WBB mechanisms in this class. Sentinel observations retain their probability mass. On such profiles, first-best surplus, GFT, buyer and seller utilities, revenue, and the pilot statistic are zero. In the empirical LP, sentinel observations form a fixed zero row, contribute zero to the empirical GFT, utility, and revenue expressions, and are excluded from the numerical seller envelope-payment formula. We identify no trade with the zero mechanism.

Fix a sufficiently large universal constant $C_{\rm cap}$. Given the mean estimate $\widehat\mu$, the posted-spread mechanism $P$, and its lower-confidence revenue $\widehat R>0$ produced by the cap-selection stage, set
\[
\widehat w
:=
C_{\rm cap}\widehat\mu
\left(
1+\log\frac{2\widehat\mu}{\alpha\widehat R}
\right).
\]
Algorithm~\ref{alg:mhr-main} projects the bounded-learner samples to the truncated instance at width $\widehat w$, applies Algorithm~\ref{alg:mult-upper-main}, and extends the resulting mechanism back to the original instance.

\begin{algorithm}[htbp]
\caption{MHR estimated-cap learner}\label{alg:mhr-main}
\KwIn{Accuracy $\alpha$, confidence $\gamma$, and independent samples from each marginal}
\KwOut{A mechanism for the original instance, or no trade}
Compute the mean estimate $\widehat\mu$ from the mean-estimation split of the cap-selection samples\;
Run the posted-spread search on the remaining cap-selection samples, obtaining a posted-spread mechanism $P$ and its lower-confidence revenue $\widehat R$\;
\If{$\widehat R=0$}{
    \Return no trade\;
}
Project the bounded-learner pilot and training samples to the truncated instance $D_{\widehat w}$\;
Run Algorithm~\ref{alg:mult-upper-main} on the projected instance $D_{\widehat w}$ with $h=\widehat w$, accuracy $\alpha/2$, and confidence $\gamma/3$, treating $\widehat w$ as a buyer endpoint and adjoining the fixed zero seller-sentinel row; denote its output by $N_{\widehat w}$\;
\Return $\Ext_{\widehat w}(N_{\widehat w})$\;
\end{algorithm}

\begin{lemma}\label{lem:mhr-tail-main}
There is a universal constant $C>0$ such that, for every nonnegative MHR random variable $X$ with mean $\mu_X>0$ and every $v\ge0$,
\[
\E[(X-v)^+]\le C\mu_X e^{-v/\mu_X},
\qquad
v\Pr[X>v]\le C(v+\mu_X)e^{-v/\mu_X}.
\]
Consequently, for
\[
\Xi_v(D)
:=
\E[(B-v)^+]
+
v\Pr[B>v],
\]
one has
\[
\Xi_v(D)
\le
C(v+\mu(D))
\exp\!\left(-\frac{v}{\mu(D)}\right).
\]
\end{lemma}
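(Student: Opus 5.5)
The plan rests on the standard MHR tail bound: a nonnegative MHR variable has survival function dominated by an exponential at the scale of its mean. Write $\overline F(x)=\Pr[X>x]=e^{-H(x)}$ with cumulative hazard $H$ convex, since the hazard rate is nondecreasing. I will first establish two facts.

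\begin{enumerate}[label=(\roman*)]
\item $\Pr[X>\mu_X]\le e^{-1}$, which is the classical Barlow--Marshall--Proschan inequality $\overline F(\mu_X)\ge e^{-1}$ in reverse. I will derive it from Jensen: $H(X)$ is $\mathrm{Exp}(1)$, so $\E H(X)=1$, and convexity gives $H(\mu_X)\le 1$. That is the wrong direction, so instead I use convexity of $H$ with $H(0)=0$: for $x\ge t$, $H(x)\ge (x/t)H(t)$.
\item For $x\ge t$, $\overline F(x)\le \overline F(t)^{x/t}$.
\end{enumerate}

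Choosing $t$ as a fixed multiple of $\mu_X$ then requires a uniform bound $\overline F(c\mu_X)\le e^{-1}$ for some absolute $c$. Markov gives $\overline F(e\mu_X)\le 1/e$, so I take $t=e\mu_X$. This yields $\overline F(x)\le e^{-x/(e\mu_X)}$ for $x\ge e\mu_X$, and trivially $\overline F(x)\le 1\le e\cdot e^{-x/(e\mu_X)}$ for $x< e\mu_X$. The resulting rate is $1/(e\mu_X)$, not the $1/\mu_X$ the statement asks for, so a sharper argument is needed.

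For the exact rate I will use $t=\mu_X$ itself. The key claim is that for MHR, $H(\mu_X)\ge c$ is not enough; I need $\overline F(x)\le C e^{-x/\mu_X}$. The known sharp fact is that the MHR tail beyond the mean decays at least as fast as $e^{-x/\mu_X}$ up to a constant factor. To prove it, set $\lambda(x)=H'(x)$. Since $\overline F$ is log-concave, the mean residual life $m(x)=\E[X-x\mid X>x]$ is nonincreasing, so $m(x)\le m(0)=\mu_X$. Then $\frac{d}{dx}\log\int_x^\infty \overline F=-\overline F(x)/\int_x^\infty\overline F=-1/m(x)\le -1/\mu_X$. Integrating,
\[
\int_x^\infty\overline F\le \mu_X e^{-x/\mu_X}.
\]
This is exactly $\E[(X-v)^+]\le\mu_X e^{-v/\mu_X}$, so the first inequality holds with $C=1$. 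This mean-residual-life monotonicity is the main step; I would justify it from log-concavity of $\overline F$, which follows from $H$ being convex, via the standard fact that integrals of log-concave functions are log-concave.

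For the second inequality, I will combine the monotone hazard with the first bound. Since $\overline F$ is nonincreasing, $\mu_X\overline F(v+\mu_X)\le\int_v^{v+\mu_X}\overline F\le \mu_X e^{-v/\mu_X}$, so $\overline F(v+\mu_X)\le e^{-v/\mu_X}$. That shifts the argument, which is inconvenient. A direct route is better: from the derivative identity, $\overline F(v)=\int_v^\infty\overline F/m(v)$, and I need a lower bound on $m(v)$, which fails in general since $m$ can shrink. Instead, write $v\overline F(v)\le \int_{v/2}^{v}\cdot 2\overline F\le 2\mu_X e^{-v/(2\mu_X)}$, which loses a factor in the exponent. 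To keep rate $1/\mu_X$, I use the shifted bound: for $v\ge\mu_X$, $\overline F(v)\le e^{-(v-\mu_X)/\mu_X}=e\,e^{-v/\mu_X}$, and for $v<\mu_X$, $\overline F(v)\le 1\le e\,e^{-v/\mu_X}$. Hence $v\Pr[X>v]\le e\,v\,e^{-v/\mu_X}\le C(v+\mu_X)e^{-v/\mu_X}$.

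For the consequence, I apply both bounds to $X=B$, using $\mu_B\le\mu(D)$ and the fact that $(v+\mu)e^{-v/\mu}$ is nondecreasing in $\mu$ for $v\ge0$, since each factor is nondecreasing in $\mu$. This gives $\Xi_v(D)\le C(v+\mu(D))e^{-v/\mu(D)}$. The degenerate case $\mu_B=0$ is trivial. The main obstacle, and the step to verify carefully, is the monotone mean residual life, in particular handling supports bounded away from $0$ or with atoms. Nonnegativity gives $m(0)=\mu_X$, and on a bounded support the integrals simply vanish beyond the endpoint.
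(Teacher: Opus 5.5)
Your final argument is correct, but it runs in the opposite order from the paper's proof.

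\textbf{The paper's route.} It first bounds the survival function. Concavity of $\log\overline F$ together with $\overline F(0)=1$ gives the chord bound $\overline F(t)\ge \overline F(v)^{t/v}$ on $[0,v]$. Integrating this against $\mu_X\ge\int_0^v\overline F$ yields $\overline F(v)\le e\,e^{-v/\mu_X}$. The NBU inequality $\overline F(v+t)\le\overline F(v)\overline F(t)$ then gives $\E[(X-v)^+]\le\mu_X\overline F(v)\le e\,\mu_X e^{-v/\mu_X}$.

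\textbf{Your route.} You bound the excess mean first, by integrating $\tfrac{d}{dx}\log\int_x^\infty\overline F=-1/m(x)\le -1/\mu_X$. You then recover the survival bound from the excess bound via $\mu_X\overline F(v+\mu_X)\le\int_v^{v+\mu_X}\overline F$, which uses only monotonicity of $\overline F$.

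\textbf{What each buys.} Your route gives the sharp constant $1$ in the first inequality, with equality for the exponential. It also uses only $m(x)\le\mu_X$ for all $x\ge 0$, i.e.\ the NBUE property, so it proves the lemma under a hypothesis strictly weaker than MHR. The paper's route genuinely uses log-concavity (for the chord step), but each of its two steps is a one-line consequence of concavity of $\log\overline F$.

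\textbf{A simplification.} You do not need monotonicity of the mean residual life or the log-concavity of $\int_x^\infty\overline F$ (no Pr\'ekopa). The bound $m(x)\le\mu_X$ follows in one line from NBU: $\int_x^\infty\overline F=\int_0^\infty\overline F(x+s)\,ds\le\overline F(x)\,\mu_X$. This is exactly the inequality the paper uses in its last step, so the two proofs share this ingredient and differ only in how they combine it with the rest.

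\textbf{Cleanup.}
\begin{itemize}
\item Delete item (i) from your plan, since it is false. Your own Jensen computation shows that MHR gives $\Pr[X>\mu_X]\ge e^{-1}$, not $\le$. For example, the uniform distribution on $[0,1]$ has $\Pr[X>1/2]=1/2$. Your final argument never uses (i).
\item Cut the abandoned detours (Markov at $e\mu_X$, and the averaging over $[v/2,v]$).
\item Note that $m(0)=\mu_X$ needs $\Pr[X>0]=1$. This holds here because the marginals have densities.
\end{itemize}
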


\begin{proof}
Let $\bar F(t):=\Pr[X>t]$. By the MHR characterization in \citep[Section~3]{BMP1963}, $\log\bar F$ is concave on the support. The case $v=0$ is immediate. Fix $v>0$ and set $p:=\bar F(v)$. The cases $p\in\{0,1\}$ are immediate. For $p\in(0,1)$, set $y:=-\log p$. By concavity, for $0\le t\le v$,
\[
\bar F(t)\ge p^{t/v}=e^{-yt/v}.
\]
Thus
\[
\mu_X=\int_0^\infty \bar F(t)\,dt
\ge
\int_0^v e^{-yt/v}\,dt
=
v\frac{1-e^{-y}}{y}.
\]
Hence $v/\mu_X\le y/(1-e^{-y})\le y+1$, and therefore
\[
\Pr[X>v]=\bar F(v)\le e\exp(-v/\mu_X).
\]
Moreover, concavity of $\log\bar F$ and $\bar F(0)=1$ imply
\[
\bar F(v+t)\le \bar F(v)\bar F(t),
\qquad t\ge0.
\]
Therefore
\[
\E[(X-v)^+]
=
\int_0^\infty \bar F(v+t)\,dt
\le
\bar F(v)\mu_X
\le
e\mu_X\exp(-v/\mu_X).
\]
The same survival bound gives
\[
v\Pr[X>v]\le e(v+\mu_X)\exp(-v/\mu_X).
\]
Applying these estimates to $B$ and using $\E[B]\le\mu(D)$ gives the displayed bound on $\Xi_v(D)$ after increasing the universal constant.
\end{proof}

Fix a sufficiently large universal constant $C_{\rm ps}$. Given a mean estimate $\widehat\mu$ computed from an independent sample, let $m_{\rm ps}$ be the number of posted-spread search samples and set
\[
\ell_{\rm ps}
:=
\log\!\left(
\frac{C_{\rm ps}m_{\rm ps}}{\gamma}
\right),
\qquad
H_0
:=
C_{\rm ps}\widehat\mu
\left(
\ell_{\rm ps}+\log\frac1\alpha
\right).
\]
Pair the buyer and seller search samples independently, and denote the resulting pairs by $(b_i,s_i)_{i=1}^{m_{\rm ps}}$. For $0\le x\le y\le H_0$, define
\[
r(x,y)
:=
(y-x)\Pr[S\le x,\ B\ge y],
\qquad
\widehat r(x,y)
:=
\frac{y-x}{m_{\rm ps}}
\sum_{i=1}^{m_{\rm ps}}
\one\{s_i\le x,\ b_i\ge y\}.
\]
Choose $(\widehat x,\widehat y)$ to maximize
\[
\left[
\widehat r(x,y)
-
C_{\rm ps}
\left(
\sqrt{
\frac{(y-x)\widehat r(x,y)\ell_{\rm ps}}
{m_{\rm ps}}
}
+
\frac{(y-x)\ell_{\rm ps}}
{m_{\rm ps}}
\right)
\right]_+
\]
over $0\le x\le y\le H_0$. Let $P$ be the posted-spread mechanism with prices $\widehat x,\widehat y$, and let $\widehat R$ be the attained maximum.
The prices may be chosen among the search sample values in $[0,H_0]$ together with $0$ and $H_0$. The posted-spread search therefore has at most $O(m_{\rm ps}^2)$ candidate price pairs and can be implemented by direct enumeration in polynomial time. Projecting the bounded-learner samples adds only the buyer endpoint at $\widehat w$ and a fixed zero seller-sentinel row; the latter does not enter the numerical seller envelope-payment grid. Hence the bounded-learning stage remains a polynomial-size finite LP, and Algorithm~\ref{alg:mhr-main} runs in polynomial time.

\begin{lemma}\label{lem:posted-spread-search}
Suppose that $\widehat\mu$ is computed from an independent sample and satisfies
\[
\frac12\mu(D)
\le
\widehat\mu
\le
2\mu(D).
\]
If
\[
m_{\rm ps}
\ge
C_{\rm ps}
\chi_\mu(D)
L_{\mu,\alpha}(D)^2
\ell_{\rm ps},
\]
then, with probability at least $1-\gamma/12$ over the posted-spread search samples,
\[
c_R
\frac{\SB(D)}
{L_{\mu,\alpha}(D)}
\le
\widehat R
\le
\Rev(P;D)
\le
\SB(D)
\]
for a universal constant $c_R>0$.
\end{lemma}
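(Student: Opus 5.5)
The plan has three parts: a uniform relative-deviation bound for the pair-rectangle frequencies, an upper-bound argument showing $\widehat R$ is a valid lower confidence bound, and a lower-bound argument showing $\widehat R$ is large.

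\textbf{Concentration.} The pairs $(b_i,s_i)$ are i.i.d. draws from $D$. The events $\{s\le x,\ b\ge y\}$ form a class of quadrant indicators with constant VC dimension. We apply the Panchenko-type relative deviation bound (Lemma~\ref{lem:panchenko-vc}, now for quadrants in $\mathbb{R}^2$) and solve the scalar inequality as in Lemma~\ref{lem:panchenko-halfline}. This gives an event of probability at least $1-\gamma/12$ on which, simultaneously for all $x\le y$ with $p=\Pr[S\le x,B\ge y]$,
\[
|\widehat p-p|\le C\left(\sqrt{\frac{(p+\widehat p)\ell_{\rm ps}}{m_{\rm ps}}}+\frac{\ell_{\rm ps}}{m_{\rm ps}}\right).
\]
Multiplying by $(y-x)\le H_0$ turns this into a relative bound for $r$ versus $\widehat r$ with variance proxy $(y-x)(r+\widehat r)$.

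\textbf{Upper bound on $\widehat R$.} With $C_{\rm ps}$ large, the penalized objective satisfies $\widehat r-\mathrm{pen}\le r$. The $r$ term inside the square root is absorbed by AM--GM, exactly as in the proof of Lemma~\ref{lem:rel-bound}. Hence $\widehat R\le r(\widehat x,\widehat y)$. Next, $r(x,y)=\Rev(P;D)$ for the posted-spread mechanism with prices $x,y$. That mechanism is BIC, IIR and WBB, so $\Rev(P;D)\le\GFT(P;D)\le\SB(D)$.

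\textbf{Lower bound on $\widehat R$.} We find a good pair in the window $[0,H_0]$. Lemma~\ref{lem:spread-main} is stated for bounded support, so we truncate at $v=H_0/2$ and work with $D_v$. By Lemma~\ref{lem:mhr-tail-main}, the tail quantity $\Xi_v(D)$ is at most $C\mu\,(\ell_{\rm ps}+\log(1/\alpha))e^{-C_{\rm ps}(\ell_{\rm ps}+\log(1/\alpha))/2}$, which is polynomially small. The sentinel/cap truncation changes $\FB$ by at most $\Xi_v(D)$; for the seller tail the sentinel only removes trades, and the lost mass is controlled by the same buyer-tail bound. We need this loss to be at most $\SB(D)/2$. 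This requires $\mu/\SB=\chi_\mu$ to be dominated, which holds because $\ell_{\rm ps}\ge\log m_{\rm ps}\gtrsim\log\chi_\mu$ from the assumed size of $m_{\rm ps}$. Then Lemma~\ref{lem:spread-main} on $D_v$ yields prices in $[0,H_0]$ with
\[
r^\ast\ge c\,\SB(D)/(1+\log(H_0/\SB(D))).
\]
We also have $\log(H_0/\SB)\lesssim\log\chi_\mu+\log\ell_{\rm ps}+\log\log(1/\alpha)$. Given the sample size, this is $O(L_{\mu,\alpha})$ up to absorbing $\log\ell_{\rm ps}$, which is the delicate point. On the concentration event, the penalized empirical value at that pair is at least $r^\ast-C\bigl(\sqrt{H_0 r^\ast\ell_{\rm ps}/m_{\rm ps}}+H_0\ell_{\rm ps}/m_{\rm ps}\bigr)$. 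Since $H_0\lesssim\mu L_{\mu,\alpha}$ and $m_{\rm ps}\ge C_{\rm ps}\chi_\mu L^2\ell_{\rm ps}$, both error terms are at most $r^\ast/2$. So $\widehat R\ge r^\ast/2$, and restricting prices to sample values loses nothing.

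\textbf{Main obstacle.} The hardest step is the logarithmic bookkeeping in the lower bound. Lemma~\ref{lem:spread-main} must be applied at window $H_0$, which depends on $\ell_{\rm ps}$ and hence on $m_{\rm ps}$, while the target only allows $L_{\mu,\alpha}$. I would handle this by using the sharper form $\FB/(1+\log(H_0/\FB))$ together with $\FB\ge\SB$. I would then show that $\log\ell_{\rm ps}\le O(L_{\mu,\alpha}+\log\log(1/\gamma))$ and, if necessary, absorb the $\gamma$-dependence into $c_R$ via the definition of $\ell_{\rm ps}$.
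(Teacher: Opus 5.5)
Your concentration step and the upper bound $\widehat R\le r(\widehat x,\widehat y)=\Rev(P;D)\le\GFT(P;D)\le\SB(D)$ match the paper. The gap is in the lower bound, and it goes beyond the $\log\ell_{\rm ps}$ bookkeeping you flag.

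You take the comparison pair from Lemma~\ref{lem:spread-main} at the window $v=H_0/2$. But $H_0=C_{\rm ps}\widehat\mu(\ell_{\rm ps}+\log(1/\alpha))$ is linear in $\ell_{\rm ps}=\log(C_{\rm ps}m_{\rm ps}/\gamma)$, and $\ell_{\rm ps}$ is not bounded by $L_{\mu,\alpha}(D)$: it contains $\log(1/\gamma)$, and the hypothesis only bounds $m_{\rm ps}$ from below. So your claim $H_0\lesssim\mu(D)L_{\mu,\alpha}(D)$ is false, and this breaks the argument in two places.

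\begin{itemize}
\item First, the denominator $1+\log(H_0/\SB(D))$ contains $\log\ell_{\rm ps}$, hence $\log\log(1/\gamma)$ and $\log\log m_{\rm ps}$. You cannot absorb this into $c_R$, because the statement requires $c_R$ to be universal.
\item Second, and more seriously, you bound the width $y^\ast-x^\ast$ only by $H_0$. Under the assumed sample size, the error term $H_0\ell_{\rm ps}/m_{\rm ps}$ is then of order $\SB(D)(\ell_{\rm ps}+\log(1/\alpha))/L_{\mu,\alpha}(D)^2$. This is linear in $\ell_{\rm ps}$. It is no longer a small fraction of your certified lower bound on $r^\ast$ once $\ell_{\rm ps}\gg L_{\mu,\alpha}(D)$, e.g.\ for small $\gamma$.
\end{itemize}

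The missing idea, which is how the paper proceeds, is to decouple the comparison window from the search window.

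\begin{enumerate}
\item Take $h^\star=K\mu(D)L_{\mu,\alpha}(D)$ for a universal $K$, independent of $m_{\rm ps}$ and $\gamma$. Lemma~\ref{lem:mhr-tail-main} gives $\E[(B-h^\star)^+]\le c\,\SB(D)/L_{\mu,\alpha}(D)$.
\item Cap both coordinates at $h^\star$. The capped instance then has $\FB\ge\SB(D)/2$ and $h^\star/\FB=O(\chi_\mu(D)L_{\mu,\alpha}(D))$. Plain capping also avoids adapting Lemma~\ref{lem:spread-main} to a sentinel instance.
\item Lemma~\ref{lem:spread-main} now yields prices $x^\star\le y^\star\le h^\star$ with $r(x^\star,y^\star)\ge c\,\SB(D)/L_{\mu,\alpha}(D)$. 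Capping does not change the revenue of such prices.
\item Use the lower bound $\ell_{\rm ps}\ge\log m_{\rm ps}\gtrsim\log\chi_\mu(D)$ only to show $H_0\ge h^\star$, so this pair lies in the search range.
\item The penalty scales with $y-x$, not with $H_0$. So use $(y^\star-x^\star)/r(x^\star,y^\star)=O(\chi_\mu(D)L_{\mu,\alpha}(D)^2)$ together with $m_{\rm ps}\ge C\chi_\mu(D)L_{\mu,\alpha}(D)^2\ell_{\rm ps}$ to make both error terms a small fraction of $r(x^\star,y^\star)$.
\end{enumerate}

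In this route $\ell_{\rm ps}$ never enters the logarithmic factor or the width.

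One bookkeeping point, which the paper also glosses: the penalty itself carries the factor $C_{\rm ps}$. So the constant in the $m_{\rm ps}$ threshold must be taken large relative to $C_{\rm ps}^2$, not equal to $C_{\rm ps}$.
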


\begin{proof}
For fixed $x,y$, the revenue sample is bounded by $y-x$ and has variance at most $(y-x)r(x,y)$. Since the class of events $\{s\le x,\ b\ge y\}$ has constant VC dimension, the relative VC inequality~\citep[Corollary~4]{Panchenko2003}, together with the standard empirical-form conversion, gives, by the choice of $C_{\rm ps}$, with probability at least $1-\gamma/12$, simultaneously for all $0\le x\le y\le H_0$,
\[
\left|\widehat r(x,y)-r(x,y)\right|
\le
C_{\rm ps}
\left(
\sqrt{
\frac{(y-x)r(x,y)\ell_{\rm ps}}
{m_{\rm ps}}
}
+
\frac{(y-x)\ell_{\rm ps}}
{m_{\rm ps}}
\right),
\]
and
\[
r(x,y)
\ge
\widehat r(x,y)
-
C_{\rm ps}
\left(
\sqrt{
\frac{(y-x)\widehat r(x,y)\ell_{\rm ps}}
{m_{\rm ps}}
}
+
\frac{(y-x)\ell_{\rm ps}}
{m_{\rm ps}}
\right).
\]
Applying this inequality at $(\widehat x,\widehat y)$ gives $\widehat R\le\Rev(P;D)$. The mechanism $P$ is BIC, IIR, and ex-post WBB because $\widehat x\le\widehat y$, and its realized GFT is at least its realized revenue. Hence
\[
\widehat R\le\Rev(P;D)\le\GFT(P;D)\le\SB(D).
\]
For a fixed empirical trade set, the lower-confidence score is linear in $y-x$, so a maximizer may be chosen at the search sample values or at $0,H_0$.

It remains to prove the lower bound on $\widehat R$. Let $h^\star$ be a sufficiently large universal multiple of
$\mu(D)L_{\mu,\alpha}(D)$, and let $\overline D_{h^\star}$ be the product distribution of $\min\{B,h^\star\}$ and $\min\{S,h^\star\}$. Lemma~\ref{lem:mhr-tail-main} gives
\[
\E[(B-h^\star)^+]\le c\frac{\SB(D)}{L_{\mu,\alpha}(D)}
\]
for a sufficiently small universal constant $c>0$. Since
\[
\FB(\overline D_{h^\star})\ge\FB(D)-\E[(B-h^\star)^+],
\]
we have $\FB(\overline D_{h^\star})\ge\SB(D)/2$ by the choice of $h^\star$.

Lemma~\ref{lem:spread-main} applied to $\overline D_{h^\star}$ gives prices $0\le x^\star\le y^\star\le h^\star$ such that
\[
r(x^\star,y^\star)\ge c\frac{\SB(D)}{L_{\mu,\alpha}(D)}.
\]
Indeed,
\[
\frac{h^\star}{\FB(\overline D_{h^\star})}=O\!\left(\chi_\mu(D)L_{\mu,\alpha}(D)\right),
\]
so the logarithmic denominator in Lemma~\ref{lem:spread-main} is $O(L_{\mu,\alpha}(D))$. Clipping at $h^\star$ does not change the revenue of these prices: if $x^\star<y^\star$, then $x^\star<h^\star$, while if $x^\star=y^\star$, the revenue is zero. Thus the same lower bound holds under $D$.

The sample condition gives $m_{\rm ps}\ge C\chi_\mu(D)$, and hence
\[
\ell_{\rm ps}+\log\frac1\alpha\ge c\left(1+\log\frac{\chi_\mu(D)}\alpha\right)=cL_{\mu,\alpha}(D).
\]
Together with $\widehat\mu\ge\mu(D)/2$, this implies $H_0\ge h^\star$ by the choice of $C_{\rm ps}$. Moreover,
\[
\frac{y^\star-x^\star}{r(x^\star,y^\star)}=O\!\left(\chi_\mu(D)L_{\mu,\alpha}(D)^2\right),
\]
so the stated sample bound gives
\[
\frac{(y^\star-x^\star)\ell_{\rm ps}}{m_{\rm ps}}\le c\,r(x^\star,y^\star)
\]
for a sufficiently small universal constant $c>0$. On the relative VC event, this implies that $\widehat r(x^\star,y^\star)$ is within a fixed fraction of $r(x^\star,y^\star)$ and that the lower-confidence subtraction at $(x^\star,y^\star)$ is at most another fixed fraction of $r(x^\star,y^\star)$. The maximizing choice of $(\widehat x,\widehat y)$ therefore gives
\[
\widehat R\ge c\,r(x^\star,y^\star)\ge c_R\frac{\SB(D)}{L_{\mu,\alpha}(D)}.
\]
\end{proof}

The cap-selection stage is used only through the following event.

\begin{lemma}\label{lem:mhr-pilot-main}
With
\[
\Oe\!\left(
\frac{
\chi_\mu(D)
L_{\mu,\alpha}(D)^3
}{\alpha^2}
\log\frac1\gamma
\right)
\]
cap-selection samples from each marginal, the cap-selection stage returns a mean estimate $\widehat\mu$, a posted-spread mechanism $P$, and a lower-confidence revenue $\widehat R$ such that, with probability at least $1-\gamma/3$, the bounds
\begin{equation}\label{eq:mhr-pilot-event}
\frac12\mu(D)
\le
\widehat\mu
\le
2\mu(D),
\qquad
c_R
\frac{\SB(D)}
{L_{\mu,\alpha}(D)}
\le
\widehat R
\le
\Rev(P;D)
\le
\SB(D)
\end{equation}
hold, where $c_R>0$ is the universal constant from Lemma~\ref{lem:posted-spread-search}.
\end{lemma}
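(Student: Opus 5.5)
The proof splits the cap-selection block into two independent subsamples, one for the mean estimate and one for the posted-spread search, with failure probability $\gamma/6$ and $\gamma/12$ respectively. A union bound then gives the event \eqref{eq:mhr-pilot-event} with probability at least $1-\gamma/3$. The posted-spread part is handled by Lemma~\ref{lem:posted-spread-search}. The work is to show that $\widehat\mu$ lands in $[\mu(D)/2,2\mu(D)]$ with the advertised sample size, and then to check that this size also covers the requirement $m_{\rm ps}\ge C_{\rm ps}\chi_\mu(D)L_{\mu,\alpha}(D)^2\ell_{\rm ps}$.

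\emph{Mean estimation.} Let $\widehat\mu$ be the sum of a buyer and a seller mean estimator. Under MHR, Lemma~\ref{lem:mhr-tail-main} gives $\Pr[X>v]\le e\exp(-v/\mu_X)$, so each marginal is sub-exponential with scale $O(\mu_X)$. Bernstein's inequality for sub-exponential variables, applied to the empirical mean, then gives relative error $1/4$ on each marginal with probability $1-\gamma/12$ once the sample size is $O(\log(1/\gamma))$. Summing the two marginals yields $\mu(D)/2\le\widehat\mu\le2\mu(D)$.

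If one prefers to avoid sub-exponential Bernstein, an alternative is a median-of-means estimator. Its variance bound $\Var(X)\le 2\mu_X^2$ also follows from the same tail bound. Either way, this step costs only $O(\log(1/\gamma))$ samples, which is negligible.

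\emph{Posted-spread search.} Conditioned on the mean event, the search samples are independent of $\widehat\mu$, so Lemma~\ref{lem:posted-spread-search} applies directly with failure probability $\gamma/12$. What remains is to check its sample condition. The search budget is $m_{\rm ps}=K\chi_\mu(D)L_{\mu,\alpha}(D)^3\alpha^{-2}\log(1/\gamma)\cdot\mathrm{polylog}$, and the condition must hold for this value. The subtlety is that $\ell_{\rm ps}=\log(C_{\rm ps}m_{\rm ps}/\gamma)$ itself grows logarithmically in $m_{\rm ps}$. This is the self-referential step I expect to need care with.

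Since $\ell_{\rm ps}=O(\log m_{\rm ps}+\log(1/\gamma))$, the inequality $m\ge A\log(Cm/\gamma)$ holds for all $m\ge 2A\log(2AC/\gamma)$. Taking $A=C_{\rm ps}\chi_\mu L_{\mu,\alpha}^2$, the resulting logarithmic factor is absorbed in the $\Oe$. The $L_{\mu,\alpha}^3/\alpha^2$ factor in the budget dominates $L_{\mu,\alpha}^2$, because $\alpha<1/2$ and $L_{\mu,\alpha}\ge1$. The stated budget therefore suffices, and the lemma delivers $c_R\SB(D)/L_{\mu,\alpha}(D)\le\widehat R\le\Rev(P;D)\le\SB(D)$.

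\emph{Conclusion.} Since $\chi_\mu(D)$ is unknown, the algorithm cannot set $m_{\rm ps}$ from it. The lemma should be read as: whenever the total cap-selection sample size is at least the displayed quantity, the condition holds. A split into a constant fraction for each subsample suffices. A union bound over the two subsample events closes the proof.
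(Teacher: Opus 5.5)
Your proposal is correct and follows the paper's proof. The paper also uses an independent split, gets a constant-factor estimate of $\mu(D)$ from $O(\log(1/\gamma))$ samples via Lemma~\ref{lem:mhr-tail-main}, then applies Lemma~\ref{lem:posted-spread-search} conditionally and takes a union bound; its estimator is the median of block averages with $\E[X^2]=O(\E[X]^2)$, i.e.\ your stated alternative, and your sub-exponential Bernstein bound on the plain empirical mean works equally well. Your explicit handling of the self-referential factor $\ell_{\rm ps}=\log(C_{\rm ps}m_{\rm ps}/\gamma)$ fills in a step the paper only asserts, namely that the stated budget implies $m_{\rm ps}\ge C_{\rm ps}\chi_\mu(D)L_{\mu,\alpha}(D)^2\ell_{\rm ps}$.
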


\begin{proof}
Use independent sample splits for mean estimation and posted-spread search. Split the mean-estimation samples into $O(\log(1/\gamma))$ blocks, each of sufficiently large constant size, and let $\widehat\mu$ be the sum of the medians of the buyer and seller block averages. By Lemma~\ref{lem:mhr-tail-main}, for $X\in\{B,S\}$,
\[
\E[X^2] = 2\int_0^\infty v\Pr[X>v]\,dv = O(\E[X]^2).
\]
Hence a sufficiently large constant block size gives a constant-factor estimate of the corresponding mean with constant probability. Taking the median over the independent blocks and applying a union bound to the two marginals gives
\[
\frac12\mu(D)
\le
\widehat\mu
\le
2\mu(D)
\]
with probability at least $1-\gamma/12$. This proves the first part of \eqref{eq:mhr-pilot-event}.

Condition on this event. Since the posted-spread search uses an independent sample split, Lemma~\ref{lem:posted-spread-search} applies to the fixed estimate $\widehat\mu$. Allocating a fixed fraction of the cap-selection samples to the posted-spread search, the stated sample bound implies
\[
m_{\rm ps}
\ge
C_{\rm ps}
\chi_\mu(D)
L_{\mu,\alpha}(D)^2
\ell_{\rm ps}.
\] 
Therefore, with conditional probability at least $1-\gamma/12$,
\[
c_R\frac{\SB(D)}{L_{\mu,\alpha}(D)}\le\widehat R\le\Rev(P;D)\le\SB(D).
\]
A union bound gives \eqref{eq:mhr-pilot-event} with probability at least $1-\gamma/6$, which implies the stated probability.
\end{proof}

\begin{lemma}\label{lem:mhr-cap-compare}
Let $M$ be a surplus-trimmed randomized canonical threshold mechanism on $D$. Let $M_v$ be the mechanism obtained by applying the same threshold rule after projecting to $D_v$, with the fixed zero outcome on the seller-sentinel row. Then
\[
\GFT(M_v;D_v)
\ge
\GFT(M;D)-2\Xi_v(D),
\]
and
\[
\Rev(M_v;D_v)
\ge
\Rev(M;D)-2\Xi_v(D).
\]
The same bounds hold when restricting the finite truncated instance $D_H$ further down to $D_v$, $v<H$.
\end{lemma}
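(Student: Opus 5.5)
We reduce to a single deterministic surplus-trimmed threshold component with threshold maps $\tau_B,\tau_S$; the randomized case then follows by linearity, since $\GFT$, $\Rev$ and $\Xi_v$ are affine in the mixture. The key point is that $M_v$ is realized on the original type space by running $M$ on the projected reports $(\min\{b,v\},S_v)$, with the sentinel row giving no trade. We therefore compare the realized GFT and revenue profile by profile and split the type space into three regions:
\begin{enumerate}[label=(\roman*)]
\item $b\le v,\ s\le v$, where nothing changes;
\item $s>v$, the sentinel row, where $M_v$ is zero;
\item $b>v\ge s$, the capped-buyer region.
\end{enumerate}

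\textbf{GFT.} On region (ii), surplus trimming gives $q(b,s)=1\Rightarrow b\ge s>v$. The lost GFT is therefore at most $\E[q(B-S)\one\{S>v\}]\le\E[(B-v)^+]$, because $b-s\le b-v$ there.

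On region (iii), the capped allocation $q(v,s)\le q(b,s)$ by monotonicity. The loss decomposes into two parts:
\begin{itemize}
\item profiles traded by $M$ but not by $M_v$, contributing at most $b-s\le (b-v)+v$ each;
\item profiles traded by both, where the surplus shrinks by exactly $b-v$.
\end{itemize}
Integrating gives a loss of at most $\E[(B-v)^+]+v\Pr[B>v]=\Xi_v(D)$ on region (iii). Summing the two regions gives the bound $2\Xi_v(D)$.

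\textbf{Revenue.} We use the threshold-payment form of revenue, in which revenue equals the trade indicator times $(\tau_B(s)-\tau_S(b))$. The capped mechanism has buyer threshold $\tau_B(s)$ whenever $\tau_B(s)\le v$, and no trade otherwise. Its seller threshold at the capped buyer is $\tau_S(v)\le\tau_S(b)$.

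On region (ii), $M$ collected revenue at most $\tau_B(s)-\tau_S(b)\le b-s$ per traded profile. Dropping these profiles loses at most $\E[(B-v)^+]$ again.

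On region (iii), revenue per traded profile is bounded by the spread, which is at most $b-s$. Dropped profiles therefore cost at most $v+(b-v)$ each. Surviving trades keep the same buyer payment $\tau_B(s)$, and the seller payment drops to $\tau_S(v)$, which can only increase revenue. This gives a loss of at most $\Xi_v(D)$ on region (iii).

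To handle $\tau_B(s)$ falling between $v$ and $b$, we note that such trades are precisely the dropped ones. These are already charged to the $v\Pr[B>v]+\E[(B-v)^+]$ budget.

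\textbf{Incentives and the finite instance.} BIC and IIR for $M_v$ follow because it is a threshold rule on $D_v$ with zero outcome on the sentinel. For the final claim (restricting $D_H$ to $D_v$), we run the identical argument with $D_H$ in place of $D$. Here $\Xi_v(D_H)\le\Xi_v(D)$, since capping at $H$ only shrinks the buyer's excess over $v$.

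\textbf{Main obstacle.} We expect the delicate step to be the revenue accounting on region (iii): we must make sure the change of the seller critical price from $\tau_S(b)$ to $\tau_S(v)$ under the boundary conventions never lowers revenue. We will verify this via the monotonicity of $\tau_S$ and the appendix's right-continuous seller-threshold convention.
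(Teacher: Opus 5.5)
Your proof is correct and follows essentially the same route as the paper. Both arguments reduce to a deterministic component and use the same three-region split: unchanged profiles; the sentinel row, charged $(B-v)^+$ via surplus trimming; and the capped-buyer region, charged at most $B=v+(B-v)$, where surviving trades lose no revenue because $\tau_S(v)\le\tau_S(b)$. Both sum to $v\Pr[B>v]+2\E[(B-v)^+]\le 2\Xi_v(D)$, and your handling of the $D_H\to D_v$ case via $\Xi_v(D_H)\le\Xi_v(D)$ is only a repackaging of the paper's coupling through the original types.
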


\begin{proof}
It suffices to prove the result for a deterministic component of $M$ and then average over the report-independent randomization. Let $\tau_B$ and $\tau_S$ be its buyer and seller threshold maps. Couple $D$ and $D_v$ by capping the buyer type at $v$ and sending every seller type $S>v$ to the seller sentinel.

We bound the GFT and revenue losses under the same profile partition. On every profile traded by the original mechanism, canonical threshold payments and individual rationality give
\[
\tau_B(S)\le B,
\qquad
\tau_S(B)\ge S,
\]
and therefore
\[
\bigl(\tau_B(S)-\tau_S(B)\bigr)^+
\le(B-S)^+.
\]

If $B\le v$ and $S\le v$, the allocation is unchanged. The GFT contribution is therefore unchanged, while restricting the type space cannot increase the seller critical payment and leaves the buyer critical payment of every retained trade unchanged. Hence revenue does not decrease.

Suppose that $S\le v<B$. If the restricted mechanism trades at $(v,S)$, monotonicity implies that the original mechanism trades at $(B,S)$. The GFT loss is $B-v\le B$, while revenue does not decrease because the buyer critical payment is unchanged and
\[
\tau_S(v)\le\tau_S(B).
\]
If the restricted mechanism does not trade, any deleted GFT contribution is at most $B-S\le B$, and any positive deleted revenue contribution is at most
\[
\bigl(\tau_B(S)-\tau_S(B)\bigr)^+ \le B-S \le B.
\]

Finally, if $S>v$, the restricted mechanism does not trade. Since $M$ is surplus-trimmed, an original trade requires $B\ge S>v$. Both its GFT contribution and its positive revenue contribution are then at most
\[
B-S\le B-v\le(B-v)^+.
\]

Thus, for each of the two objectives, the pointwise loss is bounded above by
\[
B\mathbf 1\{S\le v<B\} + (B-v)^+\mathbf 1\{S>v\} \le B\mathbf 1\{B>v\} + (B-v)^+.
\]
Taking expectations and using
\[
\E[B\mathbf 1\{B>v\}] = v\Pr[B>v]+\E[(B-v)^+],
\]
the expected loss in either objective is at most
\[
v\Pr[B>v]+2\E[(B-v)^+] \le 2\Xi_v(D).
\]
Taking expectations proves both stated bounds.

For a finite truncated instance $D_H$, couple $D_H$ and $D_v$ through the same original types $(B,S)$. Restricting first at $H$ and then at $v$ gives the same projected reports as restricting directly at $v$. Moreover, the capped buyer type is at most $B$, and every traded active seller type above $v$ still satisfies $B\ge S>v$ by surplus trimming. Hence the same pointwise bounds apply, and the additional GFT and revenue losses are bounded by $2\Xi_v(D)$. Averaging over the deterministic components completes the proof.
\end{proof}

\begin{lemma}\label{lem:conservative-main}
Let $N$ be a randomized canonical threshold mechanism on the truncated instance $D_w$, with the fixed zero outcome on the seller-sentinel row and canonical payments on the numerical report region. Then $\Ext_w(N)$ is BIC and IIR on $D$, and
\[
\Rev(\Ext_w(N);D)=\Rev(N;D_w),
\qquad
\GFT(\Ext_w(N);D)\ge\GFT(N;D_w).
\]
In particular, if $N$ is ex-ante WBB under $D_w$, then $\Ext_w(N)$ is ex-ante WBB under $D$.
\end{lemma}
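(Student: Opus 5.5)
The argument is a coupling. Realize the original profile $(B,S)\sim D$ and define the truncated profile by $B_w=\min\{B,w\}$ and $S_w=S$ if $S\le w$, $S_w=\top_S$ otherwise. Then $(B_w,S_w)\sim D_w$. The extension $\Ext_w(N)$ runs $N$ on the truncated reports: the outcome at a reported profile $(b,s)$ is $N$'s outcome at $(\min\{b,w\},\,s\text{ or }\top_S)$. Every step reduces to one deterministic threshold component of $N$ with maps $\tau_B,\tau_S$ on $[0,w]^2$; the randomized case follows because report-independent mixing is affine in GFT and revenue and preserves BIC and IIR.

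\textbf{Revenue.} Payments in $\Ext_w(N)$ are measurable functions of the truncated reports. Hence the realized transfer $x_B-x_S$ under $D$ at $(B,S)$ equals $N$'s transfer at $(B_w,S_w)$, and the pushforward of $D$ under the truncation map is exactly $D_w$. This gives the identity $\Rev(\Ext_w(N);D)=\Rev(N;D_w)$ directly. The WBB consequence is immediate from it.

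\textbf{GFT.} Trade occurs only when $S\le w$, and the trade event coincides with $N$ trading at $(B_w,S)$. On that event the realized surplus is $B-S\ge \min\{B,w\}-S=B_w-S$. Sentinel profiles contribute zero in both instances. Taking expectations gives $\GFT(\Ext_w(N);D)\ge\GFT(N;D_w)$. No sign condition on $B_w-S$ is needed, since the inequality $B\ge B_w$ holds pointwise.

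\textbf{Incentives and participation.} This is the main obstacle. I would verify DSIC and ex-post IR of each deterministic component directly, which implies BIC and IIR.

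For the buyer with true value $b$ facing a fixed numerical seller report $s\le w$, every report $b'$ induces $N$'s outcome at $\min\{b',w\}$. So the buyer effectively chooses a report in $[0,w]$ with true value $b$. The component trades iff $\min\{b',w\}\ge\tau_B(s)$ at price $\tau_B(s)$, which is independent of $b'$. The buyer's utility $(b-\tau_B(s))\one\{\text{trade}\}$ is maximized by trading iff $b\ge\tau_B(s)$. The truthful report achieves this: if $b>w$, it is treated as $w$, and $w\ge\tau_B(s)$ whenever $\tau_B(s)\le w$ is finite, and $w<b$ when $\tau_B(s)$ is a no-trade boundary. Ex-post IR holds because the price is at most $\min\{b,w\}\le b$. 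When the seller report is the sentinel, the buyer's outcome is zero regardless.

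For a seller with true cost $s\le w$, misreporting into the sentinel yields zero utility, which truthful threshold play already weakly beats. For a seller with $s>w$, truthful reporting gives the sentinel, i.e.\ zero utility. Any numerical report yields a trade only at price $\tau_S(\min\{b,w\})\le w<s$, which gives negative utility. This is where the sentinel (rather than capping costs down to $w$) is essential.

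Together these establish BIC and IIR on $D$ for each component, hence for $\Ext_w(N)$.
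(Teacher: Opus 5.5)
Your proposal is correct and follows essentially the same argument as the paper. Revenue equality comes from the pushforward of $D$ to $D_w$. The GFT inequality comes from $B\ge\min\{B,w\}$ on the trade event, with the sentinel killing profiles with $S>w$. Incentives are split the same way: capped buyers, numerical sellers (where the sentinel deviation gives zero), and sellers above $w$ (where the critical payment is at most $w<s$). The only difference is cosmetic: you check DSIC and ex-post IR for each deterministic component instead of saying BIC is inherited from $N$, which is slightly stronger and equally valid.
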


\begin{proof}
The extension runs $N$ after mapping $(B,S)$ to $(B_w,S_w)$. For buyer types below the cap, incentives are inherited from $N$. For buyer types above the cap, reporting the capped type is optimal because the allocation is monotone and the canonical buyer payment is the critical-value payment at the capped type.

For a seller type $s\le w$, incentives among the numerical reports are inherited from $N$. Reporting the sentinel gives zero utility, which is no larger than the truthful utility by IIR. For a seller type $s>w$, truthful reporting leads to the fixed zero sentinel outcome. Any deviation to a numerical seller report can result in trade only at a canonical seller critical payment of at most $w$. Hence the deviating utility is at most
\[
x_S-sq\le(w-s)q\le0,
\]
while truthful reporting gives zero utility. Thus no seller type has a profitable deviation, and IIR follows from the same critical-payment representation.

Payments depend only on the projected reports, and the sentinel outcome has zero transfers. Their distribution under $D$ is therefore the same as under $D_w$, which gives
\[
\Rev(\Ext_w(N);D)=\Rev(N;D_w).
\]
Profiles with $S>w$ have zero allocation. On profiles with $S\le w$, the projected surplus agrees with the true surplus when $B\le w$ and is weakly smaller than the true surplus when $B>w$. Therefore
\[
\GFT(\Ext_w(N);D)\ge\GFT(N;D_w).
\]
\end{proof}

\begin{lemma}\label{lem:mhr-cap-preservation}
There is a universal constant $C>0$ such that the following holds for every $\alpha\in(0,1)$. Let $P$ be a posted-spread mechanism satisfying
\[
\Rev(P;D)\ge R_{\rm ps},
\qquad
0<R_{\rm ps}\le\SB(D).
\]
If
\[
w
\ge
C\mu(D)
\left(
1+\log\frac{2\mu(D)}{\alpha R_{\rm ps}}
\right),
\]
then
\[
\SB(D_w)
\ge
\left(1-\frac{\alpha}{2}\right)\SB(D).
\]
\end{lemma}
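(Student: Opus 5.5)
The plan is to combine an optimal mechanism for $D$ with the posted-spread mechanism $P$ through Lemma~\ref{lem:mixing}. The mixture creates a revenue cushion, and Lemma~\ref{lem:mhr-cap-compare} says truncation at $w$ costs at most $2\Xi_w(D)$ in both GFT and revenue, so the cushion absorbs the truncation loss.

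The first step is to fix a near-optimal comparator on $D$. Take a feasible mechanism $M^\ast$ on $D$ with $\GFT(M^\ast;D)\ge(1-\alpha/8)\SB(D)$. I would choose it to be a randomized canonical threshold mechanism. For MHR marginals with seller regularity, both marginals are regular (MHR implies buyer regularity), so the threshold structure of Theorem~\ref{fact:structure-main} applies, suitably extended to unbounded supports, possibly by approximation.

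Next, trim it to $\{b\ge s\}$ exactly as in the proof of Theorem~\ref{thm:bounded-mult-upper-main}. On $b<s$, regularity gives $\phi_B(b)\le b<s\le\phi_S(s)$, so trimming weakly increases GFT. For revenue, the canonical revenue of a threshold mechanism equals its virtual surplus, so revenue also weakly increases. Hence $M^\ast$ is surplus-trimmed with $\Rev(M^\ast;D)\ge0$. I expect this to be the main obstacle, because it requires justifying the threshold form and the revenue--virtual-surplus identity without bounded support. The fallback is to first truncate $D$ at a huge level and invoke the bounded result.

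The second step builds the cushion. Set $\omega:=\alpha R_{\rm ps}/(8\SB(D))\cdot(\SB(D)/R_{\rm ps})=\alpha/8$. Then $W:=(1-\omega)M^\ast+\omega P$ is surplus-trimmed, since posted spreads trade only when $b\ge s$. It satisfies
\[
\Rev(W;D)\ge \omega R_{\rm ps}=\tfrac{\alpha}{8}R_{\rm ps},\qquad \GFT(W;D)\ge(1-\tfrac{\alpha}{8})^2\SB(D)\ge(1-\tfrac{\alpha}{4})\SB(D).
\]

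The third step truncates. Apply Lemma~\ref{lem:mhr-cap-compare} to $W$ at level $v=w$. This gives
\[
\Rev(W_w;D_w)\ge \tfrac{\alpha}{8}R_{\rm ps}-2\Xi_w(D),\qquad \GFT(W_w;D_w)\ge(1-\tfrac{\alpha}{4})\SB(D)-2\Xi_w(D).
\]
By Lemma~\ref{lem:mhr-tail-main}, $\Xi_w(D)\le C'(w+\mu)e^{-w/\mu}$ with $\mu=\mu(D)$. Set $w=C\mu(1+\log(2\mu/(\alpha R_{\rm ps})))$ and write $t:=\log(2\mu/(\alpha R_{\rm ps}))\ge0$. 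Then $(w+\mu)e^{-w/\mu}\le 2C\mu(1+t)e^{-C(1+t)}$. Since $(1+t)e^{-(C-1)(1+t)}$ is tiny for large $C$, this is at most $\mu e^{-t}\cdot\alpha/100=\alpha R_{\rm ps}/200$ when $C$ is a large universal constant. For larger $w$, the bound only improves, because $(w+\mu)e^{-w/\mu}$ is decreasing for $w\ge0$.

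Therefore $2\Xi_w(D)\le\alpha R_{\rm ps}/100$. This keeps $\Rev(W_w;D_w)\ge0$, and since $R_{\rm ps}\le\SB(D)$, it gives $\GFT(W_w;D_w)\ge(1-\alpha/2)\SB(D)$.

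Finally, $W_w$ is BIC and IIR on $D_w$. Its threshold components are restricted to the numerical region with canonical payments, and the sentinel row gives zero utility, which deters seller deviations by IIR. It is ex-ante WBB by the revenue bound above. Hence $\SB(D_w)\ge\GFT(W_w;D_w)\ge(1-\alpha/2)\SB(D)$.
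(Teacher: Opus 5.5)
Your outer architecture is sound, but there is a genuine gap in the first step, and that step carries most of the weight. Lemma~\ref{lem:mhr-cap-compare} applies only to a surplus-trimmed randomized canonical threshold mechanism on $D$. So you need such an $M^\ast$ that is ex-ante WBB on $D$ and satisfies $\GFT(M^\ast;D)\ge(1-\alpha/8)\SB(D)$.

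Theorem~\ref{fact:structure-main} gives this only for regular product distributions on $[0,h]^2$ with positive densities, and extending it to unbounded $D$ is not a formality:
\begin{itemize}
\item A near-optimal BIC mechanism is only interim-monotone, so you cannot simply re-implement its allocation with threshold payments.
\item Instead you must pass to the allocation relaxation, where revenue is replaced by virtual surplus via the envelope inequality, and show by Lagrangian duality that its value is attained by threshold allocations.
\item On an unbounded seller support, $\E[F_S(S)/f_S(S)]=\int_0^\infty F_S(s)\,ds=\infty$, so that duality argument cannot be run verbatim on $D$.
\end{itemize}
Your fallback, ``truncate at a huge level and invoke the bounded result,'' does not close the gap as stated. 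The capped/sentinel instance $D_H$ has a buyer atom at $H$ and a seller sentinel, so Theorem~\ref{fact:structure-main} does not apply to it. Even granting a threshold optimizer on $D_H$, you would still need $\SB(D_H)\ge\SB(D)-o(1)$. That does not follow by restricting a near-optimal mechanism of $D$, because its incentive and budget constraints are relative to $D$.

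Supplying this is exactly what most of the paper's proof does:
\begin{itemize}
\item It works with the relaxation $V_H$ on $D_H$, where the envelope coefficient at the buyer atom is $H-\phi_S(s)$.
\item It uses $P_H$ as a Slater point for strong duality.
\item It checks that the score $\ell_\lambda$ stays monotone up to the atom. Hence $a^\pm_{\lambda^\star}$ are threshold allocations, and a suitable mixture of them has zero envelope revenue.
\item It proves $\SB(D_H)\ge\SB(D)-\zeta-o_H(1)$ by truncating the allocation of a near-optimal $M$ to $q_H$, applying dominated convergence with $(\phi_B-\phi_S)^+\le B$, and mixing with $a_{P_H}$.
\end{itemize}
Lifting the resulting $M_H$ through Lemma~\ref{lem:conservative-main} would then give you a valid $M^\ast$.

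Granting that, the rest of your proof is correct and only reorders the paper's final step. You mix with $P$ on $D$ at the fixed weight $\alpha/8$ and then truncate once. The paper instead truncates $M_H$ from $H$ to $w$ and then mixes with $P_w$ at the adaptive weight $4\Xi_w(D)/R_{\rm ps}$. Both orders close the argument.

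There is one slip in your tail estimate: $\mu e^{-t}\cdot\alpha/100$ equals $\alpha^2R_{\rm ps}/200$, not $\alpha R_{\rm ps}/200$. Drop the factor $\alpha$. The bound you actually need, $\Xi_w(D)\le\alpha R_{\rm ps}/200$, still holds for a large universal $C$, absorbing the constant of Lemma~\ref{lem:mhr-tail-main}.
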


\begin{proof}
By Lemma~\ref{lem:mhr-tail-main}, the stated lower bound on $w$, with the universal constant $C$ sufficiently large, gives
\[
\Xi_w(D)
\le
\frac{\alpha R_{\rm ps}}{64}.
\]

Fix a sufficiently large $H>w$ such that both prices of $P$ lie below $H$, and fix an accuracy parameter $\zeta>0$. Let $D_H$ be the truncated instance at level $H$.

Choose a BIC, IIR, and ex-ante-WBB mechanism $M$ on $D$, with allocation rule $q$, such that
\[
\GFT(M;D)\ge\SB(D)-\zeta.
\]
By the envelope inequality,
\[
\E_D\!\left[q(B,S)\bigl(\phi_B(B)-\phi_S(S)\bigr)\right]\ge0.
\]
Under the definition of $D_H$ above, let $\mathcal A_H$ be the set of measurable allocations $a\in[0,1]$ with $a(b,\top_S)=0$. For $a\in\mathcal A_H$, let $\mathcal G_H(a)$ denote its GFT and let $\mathcal R_H(a)$ denote its envelope revenue, whose coefficient on the numerical seller region is
\[
r_H(b,s):=
\begin{cases}
\phi_B(b)-\phi_S(s),&b<H,\\
H-\phi_S(s),&b=H.
\end{cases}
\]
For the mixed continuous--atomic buyer marginal of $D_H$, the envelope formula gives
\[
\Rev(M';D_H)\le \mathcal R_H(a)
\]
for every BIC and IIR mechanism $M'$ with allocation $a$, with equality under the canonical normalized payments used below. At the buyer atom $H$, the buyer-payment coefficient is $H$, which gives the displayed coefficient $H-\phi_S(s)$.

Consider the single-constraint relaxation
\[
V_H:=\sup\bigl\{\mathcal G_H(a):a\in\mathcal A_H,\ \mathcal R_H(a)\ge0\bigr\}.
\]
Every BIC, IIR, and ex-ante-WBB mechanism counted by $\SB(D_H)$ induces a feasible allocation, so
\[
V_H\ge\SB(D_H).
\]

Let $P_H$ be the restriction of $P$ to $D_H$, and let $a_{P_H}$ be its allocation. By the choice of $H$,
\[
\mathcal R_H(a_{P_H})
=
\Rev(P_H;D_H)
=
\Rev(P;D)
\ge R_{\rm ps}>0.
\]
Thus the relaxation is strictly feasible. The attainable pairs
\[
\bigl\{(\mathcal R_H(a),\mathcal G_H(a)):a\in\mathcal A_H\bigr\}
\]
form a compact convex subset of $\mathbb R^2$, so convex strong duality gives
\[
V_H
=
\min_{\lambda\ge0}
\sup_{a\in\mathcal A_H}
\bigl\{\mathcal G_H(a)+\lambda\mathcal R_H(a)\bigr\}.
\]
The dual objective is at least
\[
\mathcal G_H(a_{P_H})+\lambda R_{\rm ps},
\]
and therefore diverges as $\lambda\to\infty$. Hence the minimum is attained by a finite multiplier $\lambda^\star\ge0$, and
\[
V_H
=
\sup_{a\in\mathcal A_H}
\bigl\{\mathcal G_H(a)+\lambda^\star\mathcal R_H(a)\bigr\}.
\]
For a numerical profile, the corresponding pointwise score is
\[
\ell_\lambda(b,s):=
\begin{cases}
\bigl(b+\lambda\phi_B(b)\bigr)-\bigl(s+\lambda\phi_S(s)\bigr),&b<H,\\
(1+\lambda)H-\bigl(s+\lambda\phi_S(s)\bigr),&b=H.
\end{cases}
\]
Regularity makes $\ell_\lambda$ nondecreasing in $b$ and nonincreasing in $s$. Moreover,
\[
b+\lambda\phi_B(b)\le(1+\lambda)b\le(1+\lambda)H,
\qquad b<H,
\]
so this monotonicity extends to the buyer atom. Hence
\[
a^-_\lambda:=\one\{\ell_\lambda>0\},
\qquad
a^+_\lambda:=\one\{\ell_\lambda\ge0\},
\]
with zero allocation on the seller-sentinel row, are deterministic canonical threshold allocations maximizing the Lagrangian at $\lambda$.

A zero-score profile cannot have $b<s$, since regularity would give $r_H(b,s)\le b-s<0$ and hence $\ell_{\lambda^\star}(b,s)<0$. Thus, when $\lambda^\star>0$, one has $r_H\le0$ on the zero-score set, and the subgradient optimality condition gives
\[
\mathcal R_H(a^+_{\lambda^\star})
\le0
\le
\mathcal R_H(a^-_{\lambda^\star}).
\]
A report-independent mixture of these two threshold allocations can therefore be chosen with envelope revenue zero. Since both allocations maximize the same Lagrangian, this mixture has GFT $V_H$.

If $\lambda^\star=0$, boundary optimality gives
\[
\mathcal R_H(a^-_0)\ge0,
\]
and $a^-_0$ has GFT $V_H$. Thus, in either case, $V_H$ is attained by a randomized canonical threshold allocation with nonnegative envelope revenue. Its canonical payments give a BIC, IIR, and ex-ante-WBB mechanism on $D_H$, and therefore
\[
V_H\le\SB(D_H).
\]
Together with the reverse inequality,
\[
V_H=\SB(D_H).
\]
Applying Lemma~\ref{lem:conservative-main} to the randomized canonical threshold mechanism attaining $V_H$ gives a BIC, IIR, and ex-ante-WBB mechanism on $D$ with GFT at least $V_H=\SB(D_H)$. Therefore
\[
\SB(D_H)\le\SB(D).
\]

On $D_H$, define
\[
q_H(b,s):=q(b,s)\one\{b<H,\ s\le H\},
\]
with zero allocation at the buyer cap atom and on the seller-sentinel row. Since $\E[B+S]<\infty$ and
\[
\bigl(\phi_B(B)-\phi_S(S)\bigr)^+
\le
(B-S)^+
\le B,
\]
the envelope inequality for $M$ and dominated convergence give a deterministic sequence $\varepsilon_H\to0$ such that
\[
\mathcal G_H(q_H)
\ge
\SB(D)-\zeta-\varepsilon_H,
\qquad
\mathcal R_H(q_H)\ge-\varepsilon_H.
\]
Mixing $q_H$ with $a_{P_H}$ with weight
\[
\frac{\varepsilon_H}{R_{\rm ps}+\varepsilon_H}
\]
on $a_{P_H}$ gives a feasible allocation for the relaxation, because
\[
\left(
1-\frac{\varepsilon_H}{R_{\rm ps}+\varepsilon_H}
\right)(-\varepsilon_H)
+
\frac{\varepsilon_H}{R_{\rm ps}+\varepsilon_H}R_{\rm ps}
=0.
\]
Its GFT is at least $\SB(D)-\zeta-o_H(1)$, and therefore
\[
\SB(D_H)\ge\SB(D)-\zeta-o_H(1).
\]

Let $M_H$ be obtained by applying the componentwise surplus-trimming operation from the proof of Theorem~\ref{thm:bounded-mult-upper-main} to the randomized canonical threshold mechanism attaining
\[
V_H=\SB(D_H).
\]
This preserves its canonical threshold form and does not decrease GFT or revenue, so
\[
\Rev(M_H;D_H)\ge0,
\qquad
\GFT(M_H;D_H)\ge\SB(D_H).
\]

Restrict $M_H$ further from $H$ to $w$, and denote the resulting mechanism on $D_w$ by $M_{H,w}$. Lemma~\ref{lem:mhr-cap-compare} gives
\[
\GFT(M_{H,w};D_w)
\ge
\GFT(M_H;D_H)-2\Xi_w(D),
\]
and
\[
\Rev(M_{H,w};D_w)
\ge
-2\Xi_w(D).
\]

Let $P_w$ be the restriction of $P$ to $D_w$. The same lemma gives
\[
\Rev(P_w;D_w)
\ge
R_{\rm ps}-2\Xi_w(D)
\ge
\frac{R_{\rm ps}}2.
\]
Set
\[
\lambda
:=
\frac{4\Xi_w(D)}{R_{\rm ps}}.
\]
Since
\[
\Xi_w(D) \le \frac{\alpha R_{\rm ps}}{64},
\]
we have
\[
0 \le \lambda \le \frac{\alpha}{16} < 1.
\]
The revenue of the mixture
\[
(1-\lambda)M_{H,w}+\lambda P_w
\]
is at least
\[
-2(1-\lambda)\Xi_w(D)
+
\lambda\frac{R_{\rm ps}}2
\ge0.
\]
Thus the mixture is BIC, IIR, and ex-ante WBB under $D_w$.

The GFT of the mixture is therefore at least
\[
\begin{aligned}
(1-\lambda)
\bigl(
\SB(D_H)-\zeta-2\Xi_w(D)
\bigr)
&\ge
\SB(D_H)-\zeta-2\Xi_w(D)-\lambda\SB(D)\\
&\ge
\SB(D)-2\zeta-o_H(1)
-2\Xi_w(D)-\lambda\SB(D).
\end{aligned}
\]
Using $R_{\rm ps}\le\SB(D)$ and
\[
\Xi_w(D) \le \frac{\alpha R_{\rm ps}}{64},
\]
we have
\[
2\Xi_w(D) \le \frac{\alpha}{32}\SB(D), \qquad \lambda\SB(D) \le \frac{\alpha}{16}\SB(D).
\]
Consequently,
\[
\begin{aligned}
\SB(D_w)  &\ge \left( 1-\frac{\alpha}{2} \right)\SB(D) -2\zeta-o_H(1).
\end{aligned}
\]
Letting $H\to\infty$ and then $\zeta\to0^+$ proves the lemma.
\end{proof}

\begin{lemma}\label{lem:mhr-cap-compatible}
Fix $\alpha'\in(0,1/2)$, $\gamma'\in(0,1)$, and $w>0$. Let $D$ satisfy Assumption~\ref{ass:mhr-main}, and let $D_w$ be its truncated instance with $\SB(D_w)>0$. Run Algorithm~\ref{alg:mult-upper-main} on independent projected pilot and training samples from $D_w$, using width $w$, accuracy $\alpha'$, and confidence $\gamma'$, adding the buyer atom at $w$ as a grid point and adjoining the fixed zero seller-sentinel row. Then, with probability at least $1-\gamma'$, the output $N_w$ is BIC, IIR, and ex-ante WBB under $D_w$, and
\[
\GFT(N_w;D_w)\ge(1-\alpha')\SB(D_w),
\]
using
\[
\Oe\!\left(
\frac{w}{\SB(D_w)}
\frac{\bigl(1+\log(w/\SB(D_w))\bigr)^2}{(\alpha')^2}
\log\frac1{\gamma'}
\right)
\]
samples from each marginal, split into the pilot and training parts required by Algorithm~\ref{alg:mult-upper-main}.
\end{lemma}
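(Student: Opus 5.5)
The plan is to re-run the proof of Theorem~\ref{thm:bounded-mult-upper-main} essentially verbatim on $D_w$ with $h=w$. Then I will check each ingredient that used regularity, positive densities, or the absence of atoms, and replace it with the truncated-instance analogue. Those ingredients are Lemma~\ref{lem:bounded-mult-scale}, Lemma~\ref{lem:scale-main}, Lemma~\ref{lem:spread-main}, Lemma~\ref{lem:projection}, Lemma~\ref{lem:localized-transfer}, Lemma~\ref{lem:lp}, and the existence of a surplus-trimmed near-optimal canonical mechanism.

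The sentinel row is handled uniformly throughout. It carries probability mass but contributes zero to every kernel $g_M,u^B_M,u^S_M$ and to the pilot statistic. Therefore all expectations under $D_w$ or $\widehat D_w$ equal expectations of $[0,w]$-valued kernels, each monotone on the numerical region and zero on the sentinel. Placing the sentinel as a seller type ``above'' $w$ keeps every kernel nonincreasing in $s$. The half-line classes of Lemma~\ref{lem:panchenko-halfline} still cover the relevant level sets, so Lemma~\ref{lem:rel-bound} and hence Lemma~\ref{lem:localized-transfer} transfer unchanged. The buyer atom at $w$ is added as a grid point, so the stepwise extension and Lemma~\ref{lem:grid-step-extension} are unaffected.

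\textbf{Pilot and scale.} Lemma~\ref{lem:bounded-mult-scale} uses only boundedness of $(b-s)^+$ by $w$ and Bernstein's inequality. It therefore gives $\widehat V\in[\tfrac12,\tfrac32]\FB(D_w)$ with $O(\frac{w}{\FB(D_w)}\log\frac1{\gamma'})$ samples. For the comparison $\FB(D_w)=\Theta(\SB(D_w))$, I will not invoke Fei's theorem directly on the atomic and sentinel instance. Instead I argue as follows. The lower bound $\SB(D_w)\le\FB(D_w)$ is trivial. For the upper bound, the delegated-pricing quantities $A_S,A_B$ of $D_w$ are dominated by posted-spread revenues, and Fei's bound applies to arbitrary independent distributions, the sentinel being treated as a cost larger than $w$ that never trades. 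This also gives Lemma~\ref{lem:spread-main} for $D_w$, namely a posted-spread mechanism with revenue $\Omega(\SB(D_w)/(1+\log(w/\SB(D_w))))$.

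\textbf{Comparator.} The main obstacle is the comparator. I need a surplus-trimmed randomized canonical threshold mechanism on $D_w$ that is feasible and has GFT close to $\SB(D_w)$. Theorem~\ref{fact:structure-main} assumed continuous regular marginals on $[0,h]^2$, whereas $D_w$ has a buyer atom and a sentinel. I will reuse the duality argument already carried out in the proof of Lemma~\ref{lem:mhr-cap-preservation} with $H$ replaced by $w$, using the score $\ell_\lambda$ with the atom coefficient $(1+\lambda)w$. Buyer regularity (from MHR) and seller regularity (Assumption~\ref{ass:mhr-main}) make $\ell_\lambda$ monotone. The posted-spread mechanism supplies the strict feasibility needed for a finite multiplier. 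This yields a randomized mixture of two threshold allocations $a^\pm_{\lambda^\star}$ attaining $\SB(D_w)$ with nonnegative revenue. Surplus trimming as in the proof of Theorem~\ref{thm:bounded-mult-upper-main} then preserves GFT and revenue, because $r_w\le b-s<0$ on $\{b<s\}$, including at the atom. If $\SB(D_w)$ is only approached rather than attained, a $\zeta$-approximate version suffices.

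\textbf{Conclusion.} I then mix the comparator with the posted-spread mechanism to reach margin $8\rho$, apply the localized transfer with $A=C_0\widehat V$ and the LP face argument (at most three components), and conclude as before. This gives feasibility under $D_w$ and GFT at least $(1-\alpha')\SB(D_w)$. The sample count is the stated one with $h=w$.
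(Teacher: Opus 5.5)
Your plan is essentially the paper's proof. You rerun Theorem~\ref{thm:bounded-mult-upper-main} with $h=w$. You obtain the comparator by repeating the allocation-relaxation and duality argument of Lemma~\ref{lem:mhr-cap-preservation} at $H=w$, with a posted-spread mechanism supplying strict feasibility. You extend the localized-transfer kernels by zero on the sentinel so they stay monotone in $s$.

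The one place you diverge is the scale comparison $\FB(D_w)=\Theta(\SB(D_w))$ and the posted-spread bound on $D_w$. The paper conditions on $\{S\le w\}$. It notes that $\FB$, $\SB$, and posted-spread revenue of $D_w$ are exactly $a_w=\Pr[S\le w]$ times those of the genuine product instance $D_w^{\rm act}=\mathcal L(B_w)\times\mathcal L(S\mid S\le w)$ on $[0,w]^2$. Lemmas~\ref{lem:scale-main} and~\ref{lem:spread-main} then apply directly to $D_w^{\rm act}$.

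Your alternative, treating the sentinel as a cost above $w$, also works. As phrased, however, it conflates two estimates. Bounding $A_S,A_B$ by posted-spread revenues gives only the log-lossy bound of Lemma~\ref{lem:spread-main}. What you actually need is $\FB(D_w)=O(\SB(D_w))$ without a logarithm. Otherwise the mixing weight $O(\rho\Lambda_h(D_w)/\SB(D_w))$ in the comparator step is no longer $O(c_\rho\alpha')$, and $\widehat\Lambda$ is no longer comparable to $\Lambda_h$.

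The missing sentence is this: the seller-offering and buyer-offering mechanisms behind Fei's bound never trade with the high-cost type. They are BIC, IIR, and budget balanced in the sentinel class. Hence their GFT, which dominates $A_S$ and $A_B$, is at most $\SB(D_w)$, and Fei's ratio then gives $\FB(D_w)\le 3.15\,\SB(D_w)$.
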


\begin{proof}
Let
\[
a_w:=\Pr[S\le w],
\qquad
D_w^{\rm act}
:=
\mathcal L(B_w)\times\mathcal L(S\mid S\le w).
\]
Since $\SB(D_w)>0$, one has $a_w>0$. Restricting a mechanism on $D_w$ to the numerical seller reports gives a mechanism on $D_w^{\rm act}$, and conversely any mechanism on $D_w^{\rm act}$ extends to $D_w$ by assigning the fixed zero outcome to the sentinel. The additional sentinel report does not affect the numerical seller incentive constraints because it gives zero utility, while truthful utility is nonnegative by IIR. Buyer interim utilities, GFT, and revenue are multiplied by $a_w$. Hence
\[
\FB(D_w)
=
a_w\FB(D_w^{\rm act}),
\qquad
\SB(D_w)
=
a_w\SB(D_w^{\rm act}).
\]
Lemma~\ref{lem:scale-main} therefore gives
\[
\FB(D_w)\le 3.15\,\SB(D_w).
\]
Likewise, extending a posted-spread mechanism from $D_w^{\rm act}$ by the zero sentinel outcome multiplies its revenue by $a_w$. Lemma~\ref{lem:spread-main} consequently yields a posted-spread mechanism $P_w$ satisfying
\[
\Rev(P_w;D_w)
\ge
c_{\rm spr}
\frac{\FB(D_w)}
{1+\log\!\bigl(w/\FB(D_w)\bigr)}.
\]
Since $\SB(D_w)>0$, one has $\FB(D_w)>0$, so the displayed lower bound is strictly positive. Thus $P_w$ supplies a strictly feasible point for the allocation relaxation at $H=w$. Repeating the preceding allocation-relaxation argument with $H=w$ and $P_w$ in place of $P_H$ provides a surplus-trimmed randomized canonical threshold benchmark for $\SB(D_w)$ with the fixed zero sentinel outcome.
For the pilot stage, define
\[
Z(b,s):=
\begin{cases}
(b-s)^+,&s\in[0,w],\\
0,&s=\top_S.
\end{cases}
\]
Then
\[
0\le Z\le w,
\qquad
\E[Z]=\FB(D_w),
\qquad
\E[Z^2]\le w\FB(D_w).
\]
The Bernstein calculation in Lemma~\ref{lem:bounded-mult-scale} therefore gives a constant-factor estimate of $\FB(D_w)$, and hence of $\SB(D_w)$.

For the training stage, use the usual numerical buyer and seller grids in $[0,w]$, including the buyer atom at $w$. Sentinel observations retain their empirical mass but form a fixed zero row. They introduce no allocation or payment variables and contribute zero to the empirical GFT, utility, and revenue expressions. Thus the allocation variables, monotonicity constraints, canonical payments, and order-polytope implementation on the numerical grid are exactly those used by Algorithm~\ref{alg:mult-upper-main}.

It remains to check the localized transfer step. Extend each seller-coordinate kernel used in Lemma~\ref{lem:localized-transfer} to the terminal sentinel state by assigning value zero. The kernels are nonnegative and nonincreasing on $[0,w]$, so this extension remains nonincreasing. The corresponding upper- and lower-half-line classes on $[0,w]\cup\{\top_S\}$ still have constant VC dimension. Hence the one-dimensional relative comparison, and therefore the localized GFT, utility, and revenue bounds, hold with $h=w$.

The projected benchmark and the posted-spread mechanism above provide the same comparison and revenue-margin mechanisms used in the proof of Theorem~\ref{thm:bounded-mult-upper-main}. Applying its LP feasibility, optimality, and transfer-back argument with $D_w$, $w$, $\alpha'$, and $\gamma'$ gives
\[
\GFT(N_w;D_w)
\ge
(1-\alpha')\SB(D_w),
\qquad
\Rev(N_w;D_w)\ge0.
\]
The pilot and training sample bounds become
\[
\Oe\!\left(
\frac{w}{\SB(D_w)}
\frac{\bigl(1+\log(w/\SB(D_w))\bigr)^2}{(\alpha')^2}
\log\frac1{\gamma'}
\right).
\]
\end{proof}

\subsection{Proof of Theorem~\ref{thm:mhr-upper-main}}

\begin{proof}
Split the confidence parameter between the cap-selection stage and the bounded learner. On the event \eqref{eq:mhr-pilot-event}, $\widehat R>0$, so Algorithm~\ref{alg:mhr-main} does not return no trade at the cap-selection stage. Moreover, the constant-factor estimate of $\mu(D)$ implies that, for $C_{\rm cap}$ sufficiently large, the selected cap $\widehat w$ satisfies the hypothesis of Lemma~\ref{lem:mhr-cap-preservation} with $R_{\rm ps}=\widehat R$. Hence
\[
\SB(D_{\widehat w})
\ge
\left(1-\frac{\alpha}{2}\right)\SB(D)
\ge
\frac12\SB(D)>0.
\]

By \eqref{eq:mhr-pilot-event} and the definition of $\chi_\mu(D)$,
\[
1+\log\frac{2\widehat\mu}{\alpha\widehat R}
\le
1+\log\left(
\frac{4\chi_\mu(D)L_{\mu,\alpha}(D)}
{c_R\alpha}
\right)
=
O\!\bigl(L_{\mu,\alpha}(D)\bigr).
\]
Consequently,
\[
\begin{aligned}
\widehat w
&=
O\!\bigl(\mu(D)L_{\mu,\alpha}(D)\bigr),\\
\frac{\widehat w}{\SB(D_{\widehat w})}
&=
O\!\bigl(\chi_\mu(D)L_{\mu,\alpha}(D)\bigr),\\
1+\log\frac{\widehat w}{\SB(D_{\widehat w})}
&=
O\!\bigl(L_{\mu,\alpha}(D)\bigr).
\end{aligned}
\]

Apply Lemma~\ref{lem:mhr-cap-compatible} to $D_{\widehat w}$ with accuracy $\alpha/2$ and confidence $\gamma/3$. With probability at least $1-\gamma/3$, the bounded learner returns $N_{\widehat w}$ satisfying
\[
\GFT(N_{\widehat w};D_{\widehat w})
\ge
\left(1-\frac{\alpha}{2}\right)
\SB(D_{\widehat w}).
\]
Let $\widehat M:=\Ext_{\widehat w}(N_{\widehat w})$.  By Lemma~\ref{lem:conservative-main},
$\widehat M$ is BIC, IIR, and ex-ante WBB on $D$, and
\[
\GFT(\widehat M;D)
\ge
\GFT(N_{\widehat w};D_{\widehat w}).
\]
Combining the two bounds gives
\[
\GFT(\widehat M;D)
\ge
\left(1-\frac{\alpha}{2}\right)^2
\SB(D)
\ge
(1-\alpha)\SB(D).
\]
Substituting the preceding bounds into Lemma~\ref{lem:mhr-cap-compatible} gives the stated bounded-learner sample complexity. The cap-selection bound from Lemma~\ref{lem:mhr-pilot-main} is of the same order, and a union bound gives success probability at least $1-\gamma$.
\end{proof}

\subsection{Proof of Theorem~\ref{thm:mhr-lower-main}}

\begin{proof}
For a perturbation parameter $\eta>0$ to be chosen below, let $X^\theta_R\times Y_R$, $\theta\in\{+,-\}$, be the scale-$R$ additive hard pair from Section~\ref{ssec:add-lower-main}, where $R$ is its value scale. Let
\[
x_\theta(u):=F^{-1}_{X_R^\theta}(1-u),\qquad u\in(0,1),
\]
be the buyer upper-tail quantile in the additive pair.  The perturbation functions are compactly supported away from the endpoints, so the buyer marginal coincides with the uniform baseline near $u=1$; in particular,
\[
x_\theta(1)=0,\qquad x_\theta'(1)=-R.
\]

Fix $p\in(0,1/4)$ sufficiently small. We reuse the rare-block embedding from the proof of Theorem~\ref{thm:bounded-mult-lower-main}, replacing the inactive branch $pR/q$ by an exponential-tail branch and shifting the active block by $R\log(1/p)$:

\[
b_\theta(q):=
\begin{cases}
R\log(1/p)+x_\theta(q/p),&0<q\le p,\\[1mm]
R\log(1/q),&p<q\le 1.
\end{cases}
\]
Let the seller be
\[
S=R\log(1/p)+Y_R,
\]
and denote the resulting product distribution by $D^\theta$.
The two branches join at $q=p$ because $x_\theta(1)=0$. Their derivatives also agree:
\[
\frac{d}{dq}\bigl(R\log(1/p)+x_\theta(q/p)\bigr)\Big|_{q=p}
=
\frac{x_\theta'(1)}p
=
-\frac Rp
=
\frac{d}{dq}\bigl(R\log(1/q)\bigr)\Big|_{q=p}.
\]
The buyer marginal is MHR for sufficiently small $\eta$. On the inactive branch $q>p$, $b_\theta(q)=R\log(1/q)$, so the hazard rate is $1/R$.  On the active branch $q\le p$, the distribution is the shifted additive buyer marginal $X_R^\theta$, which is MHR by the verification in Appendix~\ref{sec:bsu}.  The two branches join without a downward hazard jump because $x_\theta(1)=0$ and $x_\theta'(1)=-R$ give the common hazard value $1/R$ at $q=p$.  The seller marginal is a translate of $Y_R$, hence is MHR with nondecreasing virtual cost.  Thus $D^+$ and $D^-$ satisfy Assumption~\ref{ass:mhr-main}.

The event $q\le p$ is the active block. Conditional on this block, the instance is exactly the additive hard pair shifted by $R\log(1/p)$. Outside the active block, buyer values are at most $R\log(1/p)$, while seller costs are at least $R\log(1/p)$, so realized surplus is nonpositive. The inactive buyer virtual value is
\[
b_\theta(q)+q b_\theta'(q)
=
R\log(1/q)-R
\le
R\log(1/p)-R,
\qquad q>p,
\]
whereas the seller virtual cost is at least $R\log(1/p)$. Hence the inactive region also has nonpositive virtual surplus and cannot contribute to the benchmark or offset the cross-world WBB separation.

The common shift by $R\log(1/p)$ preserves both the real-surplus and virtual-surplus functionals in the active block. Writing $q=pu$ on this block, one has
\[
b_\theta(pu)=R\log(1/p)+x_\theta(u),
\]
and
\[
b_\theta(pu)+pu\,b_\theta'(pu) = R\log(1/p)+x_\theta(u)+u x_\theta'(u).
\]
Since $Y_R$ is uniform on $[0,R]$, for $y\in[0,R]$ the shifted seller type is $R\log(1/p)+y$, and
\[
b_\theta(pu)-\bigl(R\log(1/p)+y\bigr) = x_\theta(u)-y.
\]
Moreover,
\[
\phi_S\!\left(R\log(1/p)+y\right) = R\log(1/p)+2y,
\]
so
\[
\bigl(b_\theta(pu)+pu\,b_\theta'(pu)\bigr) - \phi_S\!\left(R\log(1/p)+y\right)
= x_\theta(u)+u x_\theta'(u)-2y.
\]
Hence, for every allocation rule, the active-block GFT and virtual-surplus integral are exactly $p$ times the corresponding integrals in the additive hard pair. Therefore the benchmark-scaling and cross-world-separation argument from the proof of Theorem~\ref{thm:bounded-mult-lower-main} applies verbatim after replacing the shift $R$ by $R\log(1/p)$, and gives
\[
\SB(D^\theta)=\Theta(pR),
\qquad
\text{cross-world WBB separation }=\Omega(p\eta R),
\]
and
\[
\KL(D^+\|D^-)=O(p\eta^2),
\qquad
\KL(D^-\|D^+)=O(p\eta^2).
\]

It remains to express $p$ through $\chi_\mu$. The seller mean is $\Theta(R\log(1/p))$. The inactive buyer tail has mean
\[
\int_p^1 R\log(1/q)\,dq=O(R),
\]
and the active buyer block contributes
\[
p\bigl(R\log(1/p)+O(R)\bigr)
=
o\bigl(R\log(1/p)\bigr).
\]
Hence
\[
\mu(D^\theta)
=
\Theta\!\left(R\log\frac1p\right),
\qquad
\chi_\mu(D^\theta)
=
\frac{\mu(D^\theta)}{\SB(D^\theta)}
=
\Theta\!\left(\frac{\log(1/p)}p\right).
\]
It follows that
\[
\log\frac1p
=
\Theta\!\left(L_\chi(D^\theta)\right),
\qquad
\frac1p
=
\Theta\!\left(
\frac{\chi_\mu(D^\theta)}{L_\chi(D^\theta)}
\right).
\]

Choose $\eta=\Theta(\alpha)$, with the implicit constant sufficiently large for the cross-world WBB separation to exceed the $\alpha\SB(D^\theta)$ tolerance. Since the theorem concerns sufficiently small $\alpha$, this choice remains within the perturbative regime.  The feasible $(1-\alpha)$-optimal classes for $D^+$ and $D^-$ are then disjoint.  Lemma~\ref{lem:two-point-disjoint} and the one-sample KL bound $O(p\alpha^2)$ give
\[
m\,p\alpha^2=\Omega(1).
\]
Thus
\[ m = \Omega\!\left(\frac1{p\alpha^2}\right) = \Omega\!\left( \frac{\chi_\mu(D^\theta)}{L_\chi(D^\theta)\alpha^2} \right). \]
Choosing $p$ sufficiently small so that $\chi_\mu(D^\theta)=\Theta(\chi)$ gives the theorem.
\end{proof}


\end{document}